\documentclass{statsoc}

\usepackage[a4paper]{geometry}

\usepackage[utf8]{inputenc}
\usepackage{lipsum}
\usepackage{subfiles}
\usepackage{booktabs}
\usepackage{chngcntr}
\usepackage{hhline}
\usepackage[colorlinks=true,linkcolor=blue,citecolor=blue]{hyperref}

\usepackage [english]{babel}
\usepackage [autostyle, english = american]{csquotes}
\MakeOuterQuote{"}

\usepackage{amsmath,amsthm,amssymb}
\usepackage{amsfonts} 

\usepackage{natbib}
\usepackage{tikz} 
\usepackage{caption,subcaption}
\usepackage{graphicx}
\usepackage{amsmath, amsthm, amsfonts, microtype,booktabs}
\usepackage{wrapfig}
\usepackage{lscape}
\usepackage{rotating}

\usepackage{xcolor}
\usepackage{algorithm}
\usepackage{algpseudocode}

\DeclareMathOperator*{\argmin}{arg\,min}

\newcommand{\ba}{\mathbf{a}}

\newcommand{\bbX}{\mathbb{X}}
\newcommand{\bbS}{\mathbb{S}}
\newcommand{\bbU}{\mathbb{U}}
\newcommand{\bbQ}{\mathbb{Q}}

\newcommand{\bbH}{\mathbb{H}}
\newcommand{\dagset}{\mathfrak{G}}

\newcommand{\indep}{\perp \!\!\! \perp}
\newcommand{\R}{\mathbb{R}}
\newcommand{\E}{\mathbb{E}}

\newcommand{\pr}{\mathbb{Q}}
\newcommand{\pa}{\textnormal{Pa}}
\newcommand{\Parents}[2]{\textnormal{Pa}_{#1}^{#2}}
\newcommand{\Ancestors}[2]{\textnormal{An}_{#1}^{#2}}

\newcommand{\cG}{\mathcal{G}}

\newcommand{\Gtrue}{\mathcal{G}^*}
\newcommand{\Gnat}{\bar{\mathcal{G}}}
\newcommand{\Pobs}{\mathbb{P}}

\newcommand{\noise}{\epsilon}
\newcommand{\cov}{\text{Cov}}
\newcommand{\Var}{\text{Var}}
\newcommand{\hilb}{\mathcal{H}}

\newcommand{\raj}[1]{\textcolor{black}{#1}}

\newcommand{\barH}{\bar{H}}
\newcommand{\barK}{\bar{K}}
\newcommand{\fspace}{\mathcal{F}}
\newcommand{\fkspace}{\tilde{\mathcal{F}}}

\newcommand{\rkspace}{\tilde{\mathcal{R}}}

\newcommand{\overbar}[1]{\mkern 1.5mu\overline{\mkern-1.5mu#1\mkern-1.5mu}\mkern 1.5mu}

\newcommand{\camobs}{\texttt{CAM-OBS}}
\newcommand{\pcssbic}{\texttt{PCSS-BIC}}
\newcommand{\lrpsbic}{\texttt{LRPC+BIC}}
\newcommand{\decamfound}{\texttt{DeCAMFound}}
\newcommand{\cam}{\texttt{CAM}}
\newcommand{\vanillabic}{\texttt{Vanilla-BIC}}

\usepackage{harpoon}

\newcommand{\edit}[1]{\textcolor{black}{#1}}

\newcommand{\raedit}[1]{\textcolor{black}{#1}}

\newcommand{\rvline}{\hspace*{-\arraycolsep}\vline\hspace*{-\arraycolsep}}

\setlength\marginparwidth{80pt}

\usepackage{mathtools}
\usepackage{lipsum,graphicx,multicol}

\usepackage{mathtools}
\usepackage[capitalize, sort]{cleveref}

\usepackage{thmtools}
\theoremstyle{plain}
\newtheorem{nthm}{Theorem}

\newtheorem{nprop}{Proposition}
\newtheorem{nlem}{Lemma}
\newtheorem{ncor}{Corollary}

\theoremstyle{definition}
\newtheorem{ndefn}{Definition}
\newtheorem{nexa}{Example}
\newtheorem{nassum}{Assumption}




\crefformat{footnote}{#1\footnotemark[#2]#3}

\newif\ifdraft 
\draftfalse 

\newcommand{\myfigure}[2]{%
  \ifdraft
    \includegraphics[draft,#1]{#2} 
  \else
    \includegraphics[#1]{#2} 
  \fi
}

\title[The DeCAMFounder Score]{The DeCAMFounder: Non-Linear Causal Discovery in the Presence of Hidden Variables}
\author[]{Raj Agrawal} \coaddress{Raj Agrawal, Massachusetts Institute of Technology, Cambridge, MA, United States of America. Email: r.agrawal@csail.mit.edu.} 
\address{Laboratory for Information \& Decision Systems, Massachusetts Institute of Technology, Cambridge, USA}
\author[]{Chandler Squires}
\address{Broad Institute of MIT and Harvard, and Laboratory for Information \& Decision Systems, Massachusetts Institute of Technology, Cambridge, USA.}
\author[]{Neha Prasad}
\address{Laboratory for Information \& Decision Systems, Massachusetts Institute of Technology, Cambridge, USA.}
\author[]{Caroline Uhler}
\address{Broad Institute of MIT and Harvard, and Laboratory for Information \& Decision Systems, Massachusetts Institute of Technology, Cambridge, USA.}

\begin{document}

\begin{abstract}
Many real-world decision-making tasks require learning causal relationships between a set of variables. 
Traditional causal discovery methods, however, require that all variables are observed, which is often not feasible in practical scenarios.
Without additional assumptions about the unobserved variables, it is not possible to recover any causal relationships from observational data. Fortunately, in many applied settings, additional structure among the confounders can be expected.
In particular, \emph{pervasive confounding} is commonly encountered and has been utilized for consistent causal estimation in linear causal models. %
In this paper, we present a provably consistent method to estimate causal relationships in the non-linear, pervasive confounding setting.
The core of our procedure relies on the ability to estimate the confounding variation through a simple spectral decomposition of the observed data matrix.
We derive a DAG score function based on this insight, prove its consistency in recovering a correct ordering of the DAG, and empirically compare it to previous approaches.
We demonstrate improved performance on both simulated and real datasets by explicitly accounting for both confounders and non-linear effects.
\end{abstract}

\keywords{Causal additive models; \and Graphical models; \and Non-linear causal discovery; \and Pervasive confounding; \and Spectral deconfounding.}

\section{Introduction} 

Many decision-making and scientific tasks require learning causal relationships between observed variables. 
For example, biologists seek to identify causal pathways in gene-regulatory networks,  epidemiologists search for the causes of diseases in complex social networks, and data scientists assess the biases of their machine-learning models by investigating their causal structure \citep{causality_book, gene_expr_anal, pearl_causality, epidemiology_models, counter_fair}. 
However, estimating causal effects can be challenging when the underlying models lack sufficient flexibility. For example, an individual's health may generally improve with increasing exercise, but excessive exercise might lead to a decline. Additionally, unobserved confounders can pose difficulties in recovering causal relationships as they induce spurious correlations between observed variables. To accurately estimate causal relationships, it is essential to develop methods that account for both non-linear effects and unobserved confounders. Unfortunately, existing methods often fall short in at least one of these two aspects.

Some existing causal discovery methods assume linear effects and no unobserved confounders \citep{chick2002, greedy_sp, high_dim_PC}.
More recent works focus on recovering causal relationships when each variable is a non-linear function of its direct causes plus additive noise. However, these methods still assume the absence of unobserved confounders \citep{hoyer_ident, mooij_ident, additive_noise, cam}. 
In seminal work, \citet{richardson2002} introduced ancestral graphical models to account for unobserved confounding.
Several methods have been proposed to estimate these models \citep{fast_fci, gspo}. However, when confounders have an effect on many observed variables (referred to as "pervasive" confounding), ancestral graphs exhibit poor identifiability compared to methods that explicitly model pervasive confounders \citep{causal_lrps}. Since pervasive confounding arises in numerous economic and biological applications, it is crucial to develop methods that exploit this additional structural information. Unfortunately, existing methods that provably recover the true causal structure under pervasive confounding assume linear causal effects \citep{causal_lrps, causal_pca}.

In this paper, we consider causal discovery in the non-linear additive noise and pervasive confounding setting. We show that the true graph is still recoverable (\cref{thm:superdag}) 
and we provide a method which asymptotically recovers the correct partial ordering of the graph in \cref{sec:decam}. 
Our approach relies on estimating the confounding variation through a simple spectral decomposition of the observed data matrix.

The remainder of the paper is structured as follows: In \cref{sec:problem_setup}, we formalize the target of inference and the assumptions about the data generating process. We propose our method in \cref{sec:method} and present related work in \cref{sec:existing_work}. Finally, we compare our method against existing approaches on real and synthetic datasets in \cref{sec:experiments}. In \cref{A:notation}, we summarize the main notations used. \label{sec:intro}

\section{Problem Statement: Causal Discovery in the Presence of Confounding} 

\textbf{Preliminaries.} We would like to recover the causal relationships between the components of a $p$-dimensional random vector $X \sim \Pobs(X)$.
We are given $N$ datapoints $\{ x^{(n)} \}_{n=1}^N$, sampled i.i.d. from $\Pobs(X)$.
The aim is to understand causal dependencies, such as whether one variable (e.g., medical treatment) is a cause of another variable (e.g., health outcome), and how this effect is mediated by other variables (e.g., gene-expression levels).
We use a \emph{causal directed acyclic graph} (DAG) to model causal relationships, where directed edges capture cause-and-effect relationships \citep{causality_book}.
In this context, we consider a DAG $\cG$ with nodes $X_1, X_2, \ldots, X_p$, where each node $X_i$ is associated to the random variable $X_i$ for $i = 1, 2, \ldots, p$, and the distinction between random variables and nodes is made obvious from context \citep{koller2009probabilistic}.
We say $\Pobs(X)$ \textit{factorizes} according to a DAG $\cG$ if we have $\Pobs(X) = \prod_{i=1}^p \Pobs(X_i \mid X_{\Parents{\cG}{i}}),$ where $X_{\Parents{\cG}{i}}$ denotes the variables which are \textit{parents} of the variables $X_i$ in $\cG$.

Unfortunately, in the presence of unobserved confounders, there generally does not exist a DAG over observed variables that  preserves, in its factorization, all the conditional and marginal independences that we expect to hold in the margin of the full DAG over the observed and unobserved variables (i.e., DAGs are not closed under marginalization) \citep{richardson2002}. 
This factorization or \emph{Markov} property is crucial for many causal learning methods to accurately recover causal relationships \citep{chick2002, high_dim_PC, greedy_sp, additive_noise}.

To address this limitation, instead of assuming $\Pobs(X)$ factorizes according to a DAG, we assume the joint distribution over the complete set of variables $(X, \bar{H})$ is Markov with respect to a DAG $\Gnat$, where $\bar{H}$ represents a vector of $\barK \in \mathbb{N}$ unobserved variables. This assumption implies that, in the structural causal model (SCM) associated with $\Gnat$, each variable can be expressed as a function of its parents in $\Gnat$ and independent noise.\\

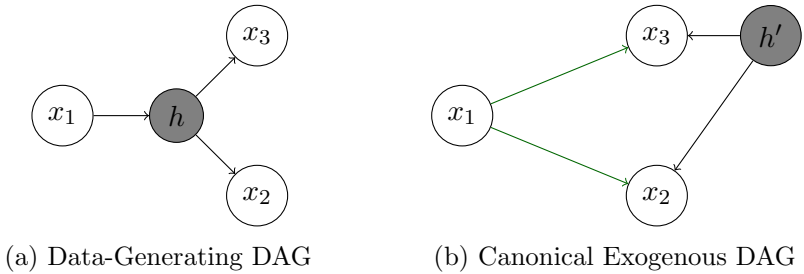
\begin{figure}
\centering
\begin{subfigure}[b]{0.4\textwidth}
\centering 
\begin{tikzpicture}[node distance={15mm}, main/.style = {draw, circle}] 
\node[main] (1) {$X_1$};
\node[main] (2) [fill=gray, right of=1] {$\bar{H}$};
\node[main] (3) [below right of=2] {$X_2$};
\node[main] (4) [above right of=2] {$X_3$}; 
\draw[->] (1) to (2); 
\draw[->] (2) to (3); 
\draw[->] (2) to (4); 
\end{tikzpicture} 
\caption{Data-Generating DAG}
\label{fig:simple_dag}
\end{subfigure}
\begin{subfigure}[b]{0.4\textwidth}
\centering 
\begin{tikzpicture}[node distance={15mm}, main/.style = {draw, circle}] 
\node[main] (1) {$X_1$};
\node[main] (3) [below right of=2] {$X_2$};
\node[main] (4) [above right of=2] {$X_3$}; 
\node[main] (2) [fill=gray, right of=4] {$H$}; 

\draw[->, line width=.75mm] (1) to (3); 
\draw[->, line width=.75mm] (1) to (4); 
\draw[->] (2) to (3); 
\draw[->] (2) to (4); 

\end{tikzpicture} 
\caption{Canonical Exogenous DAG}
\label{fig:canon_dag}
\end{subfigure}
\caption{
The right-hand figure reparameterizes the model on the left such that the unobserved variable $H$ is a source in the graph.
The bold arrows represent the DAG $\Gtrue$ corresponding to the conditional distribution $\pr(X \mid H)$.
} \label{fig:exog_repar}
\end{figure}

    \noindent 
    \textbf{Reduction to Exogenous Confounders.} Since $\bar{H}$ is unobserved, recovering $\Gnat$ from observational data alone is not possible without further assumptions about $\bar{H}$ \citep{richardson2002}.
    This work instead focuses on learning the "canonical exogenous DAG" induced by $\Gnat$ using the concepts of \textit{latent projection} and the \textit{canonical DAG} introduced in \citet{exog_dag}. 
    The canonical exogenous DAG transforms $\Gnat$ into another DAG $\cG^{\prime}$ over the same set of observed variables $X$ and a new set of unobserved variables $H$, which satisfies three properties: (1) all unobserved variables are sources (i.e., they have no parents) in $\cG^{\prime}$, (2) the partial ordering of observed variables in $\cG^{\prime}$ matches the partial ordering in $\Gnat$, and (3) there exists a joint distribution $\mathbb{Q}(X, H)$, Markov with respect to $\cG^{\prime}$, such that the marginal distribution $\mathbb{Q}(X)$ is equal to $\mathbb{P}(X)$ almost everywhere.
    In the canonical exogenous DAG, an edge between two observed nodes corresponds to the existence of a \textit{completely hidden path} in $\Gnat$, while edges from a hidden node to a set of observed nodes correspond to the existence of a \textit{hidden common cause} for those nodes in $\Gnat$.
    These structures are defined as follows:

\edit{
\begin{ndefn}
$X_i$ has a \textit{completely hidden path} to $X_j$ in $\Gnat$ if there exists a path $X_i \to \barH_{k_1} \to \ldots \to \barH_{k_m} \to X_j$ in $\Gnat$, where $1 \leq i,j \leq p$, and $1 \leq k_1, \ldots, k_m \leq \barK$.
\end{ndefn}
\begin{ndefn}
Let $X_C = \{ X_c : c \in C \}$, where $C \subseteq \{ 1, \ldots, p \}$.
$X_C$ shares a \emph{hidden common cause} in $\Gnat$ if there exists an unobserved node $\barH_j, 1 \leq j \leq \barK$, such that for each $c \in C$ there is a directed path from $\barH_j$ to $X_c$ with all vertices in $\{ \barH_k \}_{k=1}^{\barK}$.
We denote the collection of all maximal sets $C$ such that $X_C$ share a hidden common cause in $\Gnat$ by $\{ C_1, \ldots, C_K \}$ for $K \in \mathbb{N}$.
\end{ndefn}
We define the canonical exogenous DAG such that, in the terminology of \citet{exog_dag}, it is the canonical DAG associated with the latent projection of $\Gnat$ onto the observed nodes. 
We prove this equivalence in \cref{A:canon_exog_dag}.
\begin{ndefn} (canonical exogenous DAG) \label{def:can_exog_dag}
The \textit{canonical exogenous DAG} of $\Gnat$ is a DAG $\cG^{\prime}$ with: (1) an edge $X_i \to X_j$ if $X_i$ has a completely hidden path to $X_j$ in $\Gnat$, $1 \leq i,j \leq p$, and (2) a node $H_k$ and edges $H_k \to X_c$ for all $X_c \in X_{C_k}$, $1 \leq k \leq K$.
\end{ndefn}
}

By \cref{def:can_exog_dag}, the presence of an edge between two observed nodes  $X_i \to X_j$,  $1 \leq i,j \leq p$, in the canonical exogenous DAG implies that $X_i$ is a cause and not an effect of $X_j$. 
Hence, the canonical exogenous DAG preserves the partial ordering of the observed nodes in $\Gnat$. 
When $\barK=1$, the canonical exogenous DAG equals the DAG formed by first adding edges from each of the parents of $\barH_1$ to each of the children of $\barH_1$ in $\Gnat$, second removing all directed edges pointing into $\barH_1$, and third removing $\barH_1$ if it has only one child.
See \cref{fig:exog_repar} for an illustration. 
If the unobserved variables are all sources in $\Gnat$ to begin with (as assumed in \cite{causal_lrps, mult_causes}), then the canonical exogenous DAG equals $\Gnat$, with the only exception being the removal of any unobserved variables with one or zero children. \\

\noindent \textbf{Target of Inference.} \edit{\citet{exog_dag} proves that there exists a distribution $\mathbb{Q}(X, H)$ such that $\mathbb{Q}(X, H)$ is Markov with respect to the canonical exogenous DAG, and the marginal distribution $\mathbb{Q}(X)$ is equal to $\mathbb{P}(X)$ almost everywhere.
Since we cannot distinguish between the distributions $\mathbb{P}(X, \barH)$ and $\mathbb{Q}(X, H)$ from observational data, we treat $\mathbb{Q}(X, H)$ as the canonical representation of the joint distribution over observed variables and confounders. 
Hence, we make technical assumptions about $\mathbb{Q}(X, H)$ instead of $\mathbb{P}(X, \barH)$ to prove when causal recovery is possible.}

Let $\Gtrue$ denote the DAG formed by removing the nodes $\{H_k\}_{k=1}^K$ and all arrows pointing out of $\{H_k\}_{k=1}^K$ from the canonical exogenous DAG (see, for example, \cref{fig:canon_dag}). 
Since $H$ is exogenous in the canonical exogenous DAG, $\Gtrue$ is the DAG associated with the conditional distribution $\pr(X \mid H)$. 
Hence, $\Gtrue$ captures the causal relationships among the observed variables. 
Our goal is  to somehow learn $\Gtrue$ from $N$ observational datapoints $\{ x^{(n)} \}_{n=1}^N$ without observing the hidden confounders $\bar{H}$. \\

\noindent \textbf{Model.}  The SCM associated with $\Gtrue$ equals 
\begin{equation*} \label{eq:general_sem}
	\begin{split}
	X_j &= f_j(X_{\Parents{\cG}{j}}, H, \epsilon_j), \quad 1 \leq j \leq p,
	\end{split}
\end{equation*}
where \edit{$\epsilon_j$ is the noise term}, $H \sim \prod_{k=1}^K \pr(H_k)$, and $H \indep \epsilon$. 
Unfortunately, due to the curse of dimensionality, the number of datapoints required to estimate $f_j$ to some level of precision depends exponentially on the number of parents \citep{non_para_bounds}. 
Given this statistical hardness result, we assume that each $f_j$ has low-dimensional structure. 

Causal additive models (CAMs) are widely used to introduce low-dimensional structure while balancing flexibility and statistical efficiency \citep{cam}.
In a CAM, it is assumed that each node can be expressed as an additive function of its parents plus independent noise. In this paper, we adopt this assumption and consider the following model: for all $1 \leq j \leq p$,
\begin{equation} \label{eq:cam_gen_model}
	\begin{split}
	X_j 
    &= 
    \sum_{i \in \Parents{\Gtrue}{j}} f_{ij}(X_i) + \sum_{k=1}^K g_{kj}(H_k) +  \noise_j 
    \\
	&=  
    \sum_{i \in \Parents{\Gtrue}{j}} f_{ij}(X_i) + D_j + \noise_j, \quad D_j =  \sum_{k=1}^K g_{kj}(H_k),
	\end{split}
\end{equation}
where $f_{ij}$ and $g_{kj}$ are unknown functions, $f_{ij}$ does not equal zero almost everywhere, and $\epsilon_j \sim \mathcal{N}(0, \sigma^2_j)$ \raedit{for some unknown noise variances $0 < \sigma_j^2 < \infty$, $1 \leq j \leq p$}. 
Without further constraints, \cref{eq:cam_gen_model} is not identifiable. 
To ensure the uniqueness of the expansion in \cref{eq:cam_gen_model} as described in \citet{cam}, we assume $\E[f_{ij}(X_i)] = 0$ and $\E[g_{kj}(H_k)] = 0$.

Directed Gaussian graphical models are a specific instance of the SCM described in \cref{eq:cam_gen_model}, where all the unknown functions are assumed to be linear. In this particular case, \cref{eq:cam_gen_model} can be simplified to a more commonly recognized form:
\begin{equation*}
	X = B X + \Theta H + \epsilon \quad \text{s.t.} \quad  \epsilon \sim N(0, \Lambda), \  H \sim N(0, I_K),
\end{equation*}
where $B \in \R^{p\times p}$ and $\Theta \in \R^{p \times K}$ consist of \raedit{unknown} edge weights, $B$ is upper triangular, $\Lambda$ is a diagonal matrix such that $\Lambda_{jj} = \sigma^2_j$, $1 \leq j \leq p$, and $I_K$ is the $K \times K$ identity matrix. 
Solving for $X$ in the equation above,  
\begin{equation} \label{eq:linear_sem}
	X = (I_p - B)^{-1} \epsilon + (I_p - B)^{-1} \Theta H.
\end{equation}
 \label{sec:problem_setup}

\section{Our Method} \label{sec:method} 
Our objective is to learn $\Gtrue$, the DAG induced by the conditional distribution $\pr(X \mid H)$ under the model in \cref{eq:cam_gen_model}, where $\pr(X, H)$ is faithful to the canonical exogenous DAG. 
To achieve this, we propose a three-part solution. 
First, we show in \cref{sec:suff_stats} that there exists a $p$-dimensional vector of sufficient statistics $S$ of $H$ such that $\pr(X \mid S)$ factorizes according to $\Gtrue$. 
Next, in Section \ref{sec:poet_reduct}, we provide a method to accurately estimate $S$ when $H$ has an impact on many observed variables. This estimation is achieved through principal components analysis applied to the observed data.
Finally, we introduce a novel DAG score function that utilizes these estimated values of $S$. We establish in \cref{sec:decam} that this score function consistently recovers the partial order of $\Gtrue$.

Throughout the paper, we adopt the assumption, without loss of generality, that $\pi^* \coloneqq (1 \cdots p)$ represents a consistent topological ordering of $\Gtrue$ to simplify notation. This ordering ensures that variable $X_i$ can only serve as a cause, not an effect, of variable $X_j$ when $i < j$.

\subsection{Sufficient Statistics for Recovering the Exogenized DAG} \label{sec:suff_stats}

Our main theorem below, which we prove in \cref{A:proofs}, says that $\pr(X \mid S)$ is Markov relative to $\Gtrue$ when $S$ equals the conditional expectation $\E[X \mid H]$.
\begin{nthm} \label{thm:superdag}
There exist functions $\{r_j\}_{j=1}^p$ such that the SCM in \cref{eq:cam_gen_model} can be re-written as:
\begin{equation*}
\begin{split}
	X_j 
	&=  
	\sum_{i \in \Parents{\Gtrue}{j}} f_{ij}(X_i) + [S_j - r_j(S_1, \ldots, S_{j-1})] + \epsilon_j, \quad 1 \leq j \leq p,
\end{split}
\end{equation*}
where $S_j  = \E[X_j \mid H].$ 
\end{nthm}
Since $\epsilon_j \indep \{(X_i, S_i)\}_{i=1}^{j-1} \cup \{ S_j \}$, the SCM in \cref{thm:superdag} is an additive noise model. In the context of additive noise models, \citet{additive_noise} discuss settings when $\Gtrue$ is identifiable, i.e., settings for which all edges in the graph can be learned and oriented correctly with an infinite amount of observational data.
They demonstrate cases where $\Gtrue$ is fully identifiable,  and situations where $\Gtrue$ is identifiable only up to its Markov equivalence class. An example of such limited identifiability occurs when the $f_{ij}$ functions are linear and the errors follow a Gaussian distribution.  
For the sake of clarity, we highlight a specific instance in which $\Gtrue$ is fully identified, which can be directly derived from Corollary 31 of \citet{additive_noise}. However, we refer readers to \citet{additive_noise} for a comprehensive exploration of identifiability results in additive noise models.
\begin{ncor} \label{cor:ident} $\Gtrue$ is identifiable from the conditional distribution $\pr(X \mid S)$ when all functions in $\{ f_{ij} \}_{i \in \Parents{\Gtrue}{j}, j \in [p]}$ are non-linear and three-times differentiable, and when the errors $\{\epsilon_j\}_{j=1}^p$ are Gaussian.
\end{ncor}

Based on the above discussion, identifying $\Gtrue$ (or recovering its Markov equivalence class) requires computing or estimating $S$. 
Since we do not know $H$, however, we cannot estimate $S$ by regressing $X$ on $H$.
In the next section, we show how to estimate $S$ from just the observed data. 
We conclude this subsection by describing the special linear case to build intuition for \cref{thm:superdag}.

\begin{nexa} \label{ex:linear_s_form}
In the linear setting of \cref{eq:linear_sem}, 
	\begin{equation*}
		\begin{split}
			S &= \E[X \mid H] \\
				&= \E[(I - B)^{-1} \epsilon + (I - B)^{-1} \Theta H \mid H] \\
				&= (I - B)^{-1} \Theta H, \quad \text{since $\epsilon \indep H$ and $\E[\epsilon] = 0$}.
		\end{split}
	\end{equation*}
\cref{lem:linear_s} below, which we prove in \cref{A:proofs}, allows us \edit{to write $r_j$ in \cref{thm:superdag} analytically.}
\begin{nlem} \label{lem:linear_s} In the linear setting,
\begin{equation} \label{eq:linear_preproc}
	(X_j - S_j) 
	= 
	\sum_{i \in \Parents{\Gtrue}{j}} B_{ij} (X_i - S_i) + \epsilon_j, \quad 1 \leq j \leq p,
\end{equation}
where $B_{ij}$ denotes the $(i,j)$-th element of the matrix $B$.
\end{nlem}
Suppose $\{ (x^{(n)}, h^{(n)}) \}_{n=1}^N$ are sampled i.i.d. from $\pr(X, H)$, but a data analyst observes only $\{ x^{(n)} \}_{n=1}^N$.
Let $s^{(n)} = \E[X \mid H = h^{(n)}]$ for $1 \leq n \leq N$.
If we had $\{ s^{(n)} \}_{n=1}^N$, then \cref{eq:linear_preproc} suggests a natural data pre-processing algorithm: give the analyst a processed dataset $\{(x^{(n)} - s^{(n)})\}_{n=1}^N$. 
The analyst can then feed the processed dataset into a linear causal learning algorithm and \raj{estimate} $\Gtrue$ up to its Markov equivalence class.
As we will see in \cref{sec:decam}, even in the nonlinear setting, the values $\{ s^{(n)} \}_{n=1}^N$ play a fundamental role in computing the score function that we use to learn $\Gtrue$.
\end{nexa}

Motivated by these observations, in the next section, we provide a setting under which $\{s^{(n)}\}_{n=1}^N$ can be accurately estimated.

\subsection{Asymptotically Exact Estimates of the Sufficient Statistics} \label{sec:poet_reduct}
To estimate $\{ s^{(n)} \}_{n=1}^N$, we reduce our problem into the linear latent factor model proposed in \citet{poet}. \\ 

\noindent \textbf{Reduction to \citet{poet}.} Let $\mu^*$ denote the joint distribution of $(\epsilon, H)$. 
Assume that each random variable $X_j \in \hilb^* \subset \L^2(\mu^*)$, where $\hilb^*$ is a complete Hilbert space and  $ \L^2(\mu^*)$ consists of all square-integrable functions of $(\epsilon, H)$ with respect to the measure $\mu^*$. 
Then, there exists a set of $M$ basis functions $\{\phi_m\}_{m=1}^M$, $1 \leq M \leq \infty$, ~such~that $S_j = \E[X_j \mid H] \in \hilb_M  \coloneqq \text{span}\{\phi_1, \ldots, \phi_M \}$ since $\hilb^*$ is separable \citep{rudin1974functional}. 
Hence, there exists a vector of basis coefficients $\psi_j \in \R^M$ such that $S_j = \psi_j^T \Phi(H)$ for $1 \leq j \leq p$, where $\Phi(h) \coloneqq [\phi_1(h), \ldots, \phi_M(h)]^T \in \R^M$ and $h \in \R^K$.
Through this basis expansion, we can re-express the observed random vector $X$ as,
\begin{equation} \label{eq:poet_reduction}
	\begin{split}
		X &= \E[X \mid H] + [X - \E[S \mid H]] \\
		&= S + [X - \E[X \mid H]] \\
		&= \Psi \Phi(H) + U,
	\end{split}
\end{equation}
where $\Psi \in \R^{p \times M}$, the $j$th row of $\Psi$ equals $\psi_j^T$, $1 \leq j \leq p$, and $U = X -  \Psi \Phi(H)$. 
By properties of conditional expectation, $U$ belongs to the orthogonal complement of $\hilb_M$. 
As a result, the covariance between each component of $\Psi \Phi(H)$ and $U$ is zero so \cref{eq:poet_reduction} is a special instance of the linear latent factor model considered in \citet{poet}. 
Hence, we use the general purpose procedure developed in \citet{poet} to estimate $S = \Psi \Phi(H)$ below. \\

\noindent \textbf{Derivation of the Estimator.} Let $\bbX \in \R^{N \times p}$ denote the data matrix, where the $n$th row of $\bbX$ equals $[x^{(n)}]^T$.
By \cref{eq:poet_reduction}, we can decompose $\bbX$ as,
\begin{equation*} \label{eq:mat_decomp}
	\begin{split}
	& \bbX  = \bbS + \bbU \quad \text{s.t.} \quad \bbX = [x^{(1)} \cdots x^{(N)}]^T, \quad \bbS = [s^{(1)} \cdots s^{(N)}]^T, \\
	& \qquad \qquad \qquad \qquad \ \bbU = [u^{(1)} \cdots u^{(N)}]^T,
	\end{split}
\end{equation*}
where each $(x^{(n)}, h^{(n)})$ is sampled i.i.d. from $\pr(X, H)$, $s^{(n)} =  \Psi \Phi(h^{(n)})$, and $u^{(n)} = x^{(n)} - \Psi \Phi(h^{(n)})$  for $1 \leq n \leq  N$. 
To estimate $\bbS$ from only the observed data matrix $\bbX$, we use principal components analysis (PCA) as in \citet{poet}.
In particular, let $V \in \R^{p \times p}$ and $\Lambda \in \R^{p \times p}$ consist of eigenvectors and eigenvalues (in decreasing order) from the eigendecomposition of the sample covariance matrix $\hat{\Sigma}$, respectively. 
For $J \in \mathbb{N}$, let $V_J \in \R^{p\times J}$ denote the matrix consisting of the $J$ eigenvectors in $V$ with largest eigenvalues.
Then we estimate $\bbS$ by
\begin{equation}  \label{eq:pca_estimator}
	 \hat{\bbS} = (V_J V_J^T \bbX^T)^T \quad \text{s.t.} \quad \hat{\Sigma} := \frac{1}{N} \bbX^T \bbX = V \Lambda V^T,
\end{equation}
Below we prove when \cref{eq:pca_estimator} yields an accurate estimate of $\bbS$, and describe how to pick $J$. \\

\noindent \textbf{Theoretical Guarantees.} In order for \cref{eq:pca_estimator} to estimate $\bbS$ well, we need the singular values of $\bbS$ to be much larger than that of $\bbU$. 
If the singular values of $\bbS$ dominate, then most of the low-dimensional structure in the data matrix $\bbX$ is captured by $\bbS$. 
Hence, we might expect that the PCA estimator in \cref{eq:pca_estimator} will output an estimate close to $\bbS$. 
Below we make this basic intuition precise.

\begin{nassum} \label{assum:finite_dime} \edit{There exists a constant $M < \infty$ such that} for all $1 \leq j \leq p$, $S_j  \in \hilb_M$.
\end{nassum}

\begin{nassum} \label{assum:div_evals} 
Let $\lambda_{\min}(A)$ and $\lambda_{\max}(A)$ denote the smallest and largest eigenvalues of a matrix $A$, respectively.
We assume that there exists constants $0 < K_1, K_2 < \infty$ that do not depend on $p$ such that $\lambda_{\min} \left( \frac{1}{p}  \Psi^T \Psi \right) > K_1$ and $\lambda_{\max} \left(  \frac{1}{p}  \Psi^T \Psi \right) < K_2$.
Furthermore, $\cov(\Phi(H), \Phi(H))$ is non-singular.
\end{nassum}
\begin{nassum} \label{asum:3.2ii} (Assumption 2(b) from \citet{poet})
There exists constants $c_1$ and $c_2$ such that $\lambda_{\min}(\Sigma_u) > c_1$, $\|\Sigma_u\|_1 < c_2$, and $\min_{1 \leq i, j \leq p} \cov(U_i, U_j) > c_1$, where $\Sigma_u = \E[U U^T]$.
\end{nassum}

\noindent If $p$ is greater than $M$, the number of basis in coefficients in \cref{eq:poet_reduction}, then the rank of $\Sigma_s = \E[S S^T]$ is less than or equal to $M$. 
Hence, \cref{assum:finite_dime} ensures that $\bbS$ is low-rank as $p \rightarrow \infty$. 
Since $\lambda_{\min}(\Psi^T \Psi) \geq K_1 p$ and $\cov(\Phi(H), \Phi(H))$ is non-singular by \cref{assum:div_evals}, together this implies that the top $M$ singular values of $\bbS$ grow at rate $O(p)$, and the remaining $p-M$ singular values of $\bbS$ equal 0.
Hence, the spectrum of $\bbS$ is "spiked" in the sense that there are only a small number of very large singular values.
Graphically, this property of $\bbS$ implies that each component $H_k$, $1 \leq k \leq K$, must have an effect on a non-vanishing number of observed variables as $p \to \infty$, which we call pervasive confounding.
In particular, since all diagonal entries of a positive-definite matrix must be positive, \cref{assum:div_evals} implies that,
 \begin{equation*}
   \lim_{p \rightarrow \infty} \sum_{j=1}^p \psi_{jm}^2 \rightarrow \infty,
   \quad\quad
   \textrm{for all}~1 \leq m \leq M
 \end{equation*}
with probability 1. 
Hence, each $\phi_m(H)$, $1 \leq m \leq M$, must have an effect on a non-vanishing number of observed variables.
Thus each $H_k$, $1 \leq k \leq K$, is a pervasive confounder, since there exists at least one $1 \leq m' \leq M$ such that $\phi_{m'}$ is a non-constant function of $H_k$ by construction of the canonical exogenous DAG.

The final assumption (\cref{asum:3.2ii}) ensures that $\Sigma_s$ dominates the spectrum of $\E[X X^T]$ as $p \rightarrow \infty$. Specifically, by properties of matrix norms, $ \lambda_{\max}(\Sigma_u) \leq \sqrt{p} \|\Sigma_u\|_1 <  c_2 \sqrt{p}$. 
Hence, all the eigenvalues of $\Sigma_u$ grow at a slower rate than that of $\Sigma_s$.
The remaining assumptions needed to prove the convergence of $\hat{S}$ are technical regularity assumptions from \citet{poet} which we defer to \cref{A:proof_spect_deconfound}.
\begin{nthm} \label{thm:pcss_convg} \raj{Under \cref{assum:finite_dime}, \cref{assum:div_evals}, \raedit{\cref{asum:3.2ii}}, and the technical regularity assumptions provided in \cref{A:proof_spect_deconfound}} \edit{(\cref{asum:3.2iii} \raedit{and} \cref{assum:3.4})},
\begin{equation*}
	\max_{1 \leq j \leq p, 1 \leq n \leq N} 
	\left\| \hat{s}_j^{(n)} - s_j^{(n)} \right\|_2 
	= 
	O_p \left(\log(N)^{1/c} \sqrt{\frac{\log p}{N}} + \frac{N^{\frac{1}{4}}}{\sqrt{p}} \right),
\end{equation*}
for some constant $c > 0$, where for $1 \leq n \leq N$, $\hat{s}_j^{(n)}$ and $s_j^{(n)}$ denote the $n$th rows of matrices $\hat{\bbS}$ and $\bbS$, respectively, and $J=M$ in \cref{eq:pca_estimator}.
\end{nthm}
We prove \cref{thm:pcss_convg} in \cref{A:proof_spect_deconfound}, which follows almost directly from Corollary 1 of \citet{poet}.
By \cref{thm:pcss_convg}, we can estimate $s_j^{(n)}$ at the rate $\tilde{O}\left(\frac{1}{\sqrt{N}} +\frac{N^\frac{1}{4}}{\sqrt{p}} \right)$ up to log factors for root mean-squared error loss. Therefore, we need both $p, N \rightarrow \infty$ and $N = o(p^2)$ for consistent estimation of $s_j^{(n)}$. 

\cref{algo:pcss_est} summarizes our approach for estimating the matrix of sufficient statistics $\bbS$. 
In practice, we might not know what value of $J$ to use in \cref{algo:pcss_est}. 
To this end, we can use the estimator provided in Section 2.4 of \citet{poet} to estimate $J$. 
As we show in \cref{sec:experiments}, it is often easier to instead visually inspect the spectrum of the sample covariance matrix (referred to as the "scree plot") to pick $J$. 

In the next section, we propose a new score function and prove that it can recover a correct ordering of $\Gtrue$ in \cref{thm:decam_consis}. 
We discuss the extension of our consistency result to the $M=\infty$ setting in \cref{A:m_inf_behav}.

\subsection{The DeCAMFounder Score Function} \label{sec:decam}

We conclude by combining our results so far to derive an estimator for $\Gtrue$. We prove that our approach can recover a correct partial ordering of $\Gtrue$ in the high-dimensional setting when $p, N \rightarrow \infty$. To prove this result, we extend the main theorem in \citet{cam} to the setting when there are pervasive confounders. \\

\noindent \textbf{Derivation of the DeCAMFounder Score Function.} We propose to estimate $\Gtrue$ via a \emph{score-based} approach.
In general, score-based approaches consist of two parts: (1) a score-function to measure the quality of a DAG based on the observed data $\bbX$, and (2) a combinatorial optimization procedure to optimize this score-function over the space of DAGs. 
Since there are many existing general-purpose procedures for (2), such as greedy equivalence search, we focus our attention on (1).

A natural starting point (based on its popularity in the linear setting) might be to use the \emph{Bayesian information criterion} (BIC) as the score function, which penalizes DAGs based on the number of parameters \citep{chick2002}. 
For linear models, computing this penalty is straightforward because the number of parameters simply equals the number of edges in the graph (corresponding to the number of edge weights to be estimated) and $p$ (the number of node noise variances to be estimated). 
For \edit{semi-parametric} models, however, the number of parameters can be infinite.  
Hence, BIC cannot directly be applied. 
Instead, we use a Bayesian score, namely the marginal log-likelihood of the DAG (which the BIC approximates with a second-order Taylor series) conditional on the matrix of confounder sufficient statistics $\bbS$.
Specifically, let $\pr(\Omega_\cG \mid \cG)$ be a prior over the unknown set of functions $\Omega_\cG \coloneqq \{ \{f_{ij}^\cG \}_{i \in \Parents{\cG}{j}} , r_j^\cG \}_{j=1}^p$ that define the model from \cref{thm:superdag}.
Assume that $\pr(\Omega_\cG \mid \cG) = \prod_{j=1}^p \bbQ(\{ f_{ij}^\cG \}_{i \in \Parents{\cG}{j}}, r_j^\cG \mid \cG)$, also known as \emph{parameter modularity} \citep{gp_nets}.
Given this prior, we score a DAG $\cG$ by computing
\begin{equation} \label{eq:gp_marg_like}
	\begin{split}
	\max_{\sigma^2 \in \mathbb{R}^p_+} \log \pr(\bbX \mid \cG, \bbS, \sigma^2) 
	&= 
	\max_{\sigma^2 \in \mathbb{R}^p_+} \log \int_{\Omega_\cG} \pr(\bbX \mid \cG, \bbS, \Omega_\cG, \sigma^2) d\pr(\Omega_\cG \mid \cG) 
	\\
	&= 
	\sum_{j=1}^p  \max_{\sigma^2_j \in \mathbb{R}_+} \log \int_{\Omega_\cG} \pr(\bbX_{:, j} \mid \cG, \bbS, \Omega_\cG, \sigma_j^2) d\pr(\Omega_\cG \mid \cG)
	\end{split}
\end{equation}
where $\bbX_{:, j} \in \mathbb{R}^N$ denotes the $j$th column of matrix $\bbX$ and $\sigma^2$ denotes a $p$-dimensional vector of unknown noise variances. 
In the following, we consider the special setting where the priors over the unknown functions in $\Omega_\cG$ are given by  Gaussian processes. 
This choice on prior allows us to model non-linear relationships, penalize for model complexity, and compute \cref{eq:gp_marg_like} analytically. 

\begin{nprop} \label{eq:gp_gen_formula}
Let $\cG$ denote a DAG, and $\pi$ a topological ordering of $\cG$. 
Let $GP(0, k)$ denote a zero-mean Gaussian process with positive definite kernel function $k(\cdot, \cdot)$.
Assume the following priors on the unknown functions in $\Omega_\cG$:
\begin{equation*}
\begin{split}
	\ f_{ij}^\cG(x_i)
	&\sim 
	GP(0, k_{ij}) 
	\quad \forall~i \in \Parents{\cG}{j}, 1 \leq j \leq p,
	\\
	r_j^\cG(s_{\pi_1}, \cdots, s_{\pi_{j-1}}) &\sim GP(0, k_{j}) \quad \forall~1 \leq j \leq p,
\end{split}
\end{equation*}
where $k_{ij}(\cdot, \cdot)$ and $k_{j}(\cdot, \cdot)$ are positive definite kernel functions from $\mathbb{R} \times \mathbb{R} \mapsto \mathbb{R}$ and $\mathbb{R}^{j-1} \times \mathbb{R}^{j-1} \mapsto \mathbb{R}$, respectively. 
Then, $\log \int_{\Omega_\cG} \pr(\bbX_{:, j} \mid \cG, \bbS, \Omega_\cG, \sigma_j^2) d\pr(\Omega_\cG \mid \cG)$ equals
\begin{equation} \label{eq:marg_like}
 -\frac{1}{2} 
 \left( \tilde{\bbX}_{:,j}^T L_{j}^{-1} \tilde{\bbX}_{:,j} - \log\det L_{j} - N\log(2\pi)
 \right), 
 \quad 
 \tilde{\bbX}_{:,j}  \coloneqq \bbX_{:,j} - \bbS_{:,j},
\end{equation}
where $L_j =  K_{j} + \sigma_j^2 I_{N}$, and the $(n, m)$th entry of the $N \times N$ kernel matrix $K_j$  equals
\begin{equation*}
    \sum_{i \in \Parents{\cG}{j}} k_{ij}\left(\tilde{x}_i^{(n)}, \tilde{x}_i^{(m)}\right) + k_{j}\left(\left(s_{\pi_1}^{(n)}, \cdots, s_{\pi_{j-1}}^{(n)}\right), \left(s_{\pi_1}^{(m)}, \cdots, s_{\pi_{j-1}}^{(m)}\right)\right), \ 1 \leq n, m \leq N.
\end{equation*}
\end{nprop}
We prove \cref{eq:gp_gen_formula} in \cref{A:proofs}. 
Below we bound the computational complexity of computing \cref{eq:gp_marg_like}.
\begin{ncor} \label{eq:decam_runtime}
Under the assumptions in \cref{eq:gp_gen_formula}, $\pr(\bbX \mid \cG, \bbS, \sigma^2)$ takes $O(p^2 N^2 + pN^3)$ time to compute.
\end{ncor}
\begin{proof}
It suffices to show that $\int_{\Omega_\cG} \pr(\bbX_{:, j} \mid \cG, \bbS, \Omega_\cG, \sigma_j^2) d\pr(\Omega_\cG \mid \cG)$ takes $O(pN^2 + N^3)$ time to compute by \cref{eq:gp_gen_formula}. 
$k_{ij}$ and $k_{j}$ take $O(1)$ and $O(p)$ time to evaluate on a pair of datapoints, respectively. 
Hence, the $N \times N$ kernel matrix $K_j$ takes at most $O(p N^2)$ time to compute. 
Computing the determinant and inverse of $L_j$ takes $O(N^3)$ time. 
Finally, the matrix vector multiplication $\tilde{\bbX}_{:,j}^T L_{j}^{-1} \tilde{\bbX}_{:,j}$ takes $O(N^2)$ time.  
\end{proof}

\begin{algorithm}[]
    \caption{Principal Confounding Sufficient Statistics}
    \label{algo:pcss_est}
    \begin{algorithmic}[1] 
        \Procedure{pcss}{$\bbX, J$}
        	\State $\hat{\Sigma} = \frac{1}{N} \bbX^T \bbX$
        	\State $V, \Lambda, V^T  = \text{eigen}(\hat{\Sigma} )$  \Comment{Eigendecomposition of $\hat{\Sigma}$}
        	\State $\bbS = (V_J V_J^T \bbX^T)^T \in \R^{N \times p}$ \Comment{Sufficient statistics derived in \cref{eq:pca_estimator}}        	
            \State \textbf{return} $\bbS$
        \EndProcedure
    \end{algorithmic}
\end{algorithm}
\begin{algorithm}[]
    \caption{The DeCAMFounder Score}
    \label{algo:decam_dag_score}
    \begin{algorithmic}[1] 
        \Procedure{decam}{$\bbX, \cG, J$} \Comment{$\cG$ a DAG with vertex set $[p]$}
        	\State $\bbS = \text{PCSS}(\bbX, J)$ \Comment{Computed from \cref{algo:pcss_est} } 
        	\State $\pi$ is a topological order of $\cG$
        	\State $mll = 0$
        	\For{$1 \leq j \leq p$}
			\State $mll = mll + \max_{\sigma_j^2}$[ \cref{eq:marg_like} ] \Comment{Maximize via gradient ascent}
		\EndFor      	

            \State \textbf{return} $mll$\Comment{The log marginal likelihood of $G$.}
        \EndProcedure
    \end{algorithmic}
\end{algorithm}
We summarize our final derived score function in \cref{algo:decam_dag_score}. In our implementation of the DeCAMFounder score, we use \emph{GPyTorch} \citep{black_box} to fit the noise variances and kernels via gradient ascent. 
In \cref{A:kernel_details}, we detail the particular kernel functions used in our experiments. 
We now turn our attention to proving when maximizing the DeCAMFounder score recovers a consistent ordering of $\Gtrue$. \\

\noindent \textbf{Theoretical Guarantees.} Assume that $\Gtrue \in \dagset$ for some set of DAGs $\dagset$ with vertices $\{X_j\}_{j=1}^p$.
Let $\Pi_* \subset \dagset$ consist of all DAGs with partial orderings consistent with that of $\Gtrue$, namely $\Pi^* = \{\cG \in \dagset: i \in \Ancestors{\Gtrue}{j} \implies i \in \Ancestors{\cG}{j}   \}$, where $\Ancestors{\cG}{j}$ denotes the set of ancestors of $X_j$ in $\cG$. 
To prove that an estimator recovers a true ordering of $\Gtrue$, it suffices to show that it selects a DAG in  $\Pi_*$ with probability tending towards 1 as $N \rightarrow \infty$. 
In \cite{cam}, the authors proved consistency of the MLE for causal additive models. 
\cref{thm:decam_consis} extends the main result in \citet{cam} to the more general setting when there are pervasive confounders. To prove this result we need to make a number of different assumptions, which we now discuss further.
\begin{nassum} \label{assum:decompose}
For all $1 \leq j \leq p$, there exists some function $q_j$ such that $\raedit{r_j(S_1, \ldots, S_{j-1})} = q_j(S_{\Parents{\Gtrue}{j}})$ in \cref{thm:superdag}, \raedit{where $S_{\Parents{\Gtrue}{j}}$ restricts the components of $S$ to the set of indices that belong to the parent set of $X_j$ in $\Gtrue$.}
\end{nassum}
In general, $r_{j}(S_1, \ldots, S_{j-1})$ depends on $j-1$ variables, which makes estimation challenging when $j$ is large.
\cref{assum:decompose} ensures that each $r_{j}$ depends on a sparse number of components in $S$ whenever the true DAG $\Gtrue$ is sparse, making estimation significantly easier.
When $f_{ij}$ is a linear function, then \cref{assum:decompose} always holds, see \cref{eq:linear_confound_parents}.
With slight abuse of notation, in the remainder of this section we use $r_j(S_{\Parents{\cG}{j}})$ to denote $q_j(S_{\Parents{\cG}{j}})$.

The next assumption concerns sufficient "model separation", measured in terms of the negative log-likelihood scores of different DAGs. 
By \cref{thm:superdag}, Gaussianity of the errors, and \cref{assum:decompose}, the expected negative log-likelihood score of a DAG $\cG$ equals $\sum_{j=1}^p \log(\sigma_j^{\cG}) + p\sqrt{2\pi} + .5p$ \citep{cam}, where 
\begin{equation} \label{eq:proj_func}
    \begin{split}
    & \left( \sigma_j^{\cG} \right)^2 \coloneqq \E_{\pr} \left[ \left(X_j  - \sum_{i \in \Parents{\cG}{j}} f_{ij}^{\cG}(X_i) - S_j + r_j^{\cG}(S_{\Parents{\cG}{j}}) \right)^2 \right] 
    \\
    \left\{ f_{ij}^\cG \right\}_{i \in \Parents{\cG}{j}}, r_j^{\cG} 
    = 
    &\argmin_{g_{ij} \in \fkspace_{ij}, \ w_j \in \rkspace_j^\cG}  \E_{\pr} \left[ \left( X_j - \sum_{i \in \Parents{\cG}{j}} g_{ij}(X_i) - S_j - w_j( S_{\Parents{\cG}{j}}) \right)^2 \right]
    \end{split}
\end{equation}
and $\fkspace_{ij}$ and $\rkspace_j^\cG$ denote the reproducing kernel Hilbert spaces induced by the kernels $k_{ij}$ and $k_{j}$ in \cref{eq:gp_gen_formula} for $1 \leq j \leq p$, respectively. 
We define the \emph{model-separation} parameter $\xi_p$ as the difference in the normalized negative log-likelihood score of the best incorrect DAG and correct DAG:
\begin{equation*}
    \xi_p = p^{-1} \min_{\cG \notin \Pi_*} \sum_{j=1}^p 
	\left( \log \left( \sigma_j^{\cG} \right) - \log \left( \sigma_j^{\Gtrue} \right) \right).
\end{equation*}
If $\cG \not \in \Pi_*$, $\fkspace_{ij}$ and $\rkspace_j^{\Gtrue}$ contain the true unknown functions generating the SCM in \cref{thm:superdag} for \raedit{$1 \leq j \leq p$}, and the requirements in \cref{cor:ident} are satisfied, then $\xi_p > 0$.
This non-zero gap occurs because the function class is restricted to additive functions of the observed variables; regression on the wrong order induces a different density than $\pr(X \mid S)$.
Hence, the KL divergence is negative which implies $\xi_p > 0$. 
Consequently, the partial ordering is identifiable.

\edit{To prove \cref{thm:decam_consis}, we require that $\xi_p$ is large enough to offset the estimation error of having a finite number of datapoints, and noisy estimates of $\{s^{(n)}\}_{n=1}^N$.
\begin{nassum} \label{assum:gap}
(model-separation gap) $\xi_p > 0$, and for the constant $c$ in \cref{thm:pcss_convg},
\begin{equation*}
     \left(\log(N)^{1/c} \sqrt{\frac{\log p}{N}} + \frac{N^{\frac{1}{4}}}{\sqrt{p}} \right) = o(\xi_p)
\end{equation*}
\end{nassum}
In \citet{cam}, the authors require that $\xi_p$ grows faster than $\sqrt{\frac{\log p}{N}}$ as $p, N \rightarrow \infty$. 
We require a stronger gap due to the additional  error from estimating $\{s^{(n)}\}_{n=1}^N$. 
When $p = \Omega(N^{3/2})$, the model-separation gap requirement is $O\left(\log(N)^{1/c} \sqrt{\frac{\log p}{N}}\right)$, and only differs from the requirement in \citet{cam} by a log factor}.

The remaining assumptions used to prove \cref{thm:decam_consis} closely resemble those in \citet{cam}, where we make minor modifications to handle the $r_j$ component in \cref{thm:superdag}. 
As in \citet{cam}, we assume that $\Gtrue$ is sparse, and that the function classes $\fkspace_{ij}$ and $\rkspace_j^\cG$ are finite-dimensional with bounded log-bracketing numbers; see \cref{assum:sparsity} through \cref{assum:var_pos} in \cref{A:proof_consis}. 
Our proof of \cref{thm:decam_consis}, which shows that the DeCAMFounder asymptotically recovers the true partial ordering of $\Gtrue$, is provided in \cref{A:proof_consis}.
\edit{\begin{nthm} \label{thm:decam_consis}
Let $\widehat{\cG}$ be the DAG that maximizes the score in \cref{algo:decam_dag_score} over $\dagset$. 
Suppose that the assumptions in \cref{thm:pcss_convg} hold, and additionally that \cref{assum:decompose}, and \cref{assum:sparsity} through  \cref{assum:var_pos} in \cref{A:proof_consis} hold. If $N = o(p^2)$ and $p = o(\exp(N))$, then  
\begin{equation*}
    \mathbb{P}(\widehat{\cG} \in \Pi_*) \rightarrow 1 \quad \text{as} \quad p,N \rightarrow \infty.
\end{equation*}
\end{nthm}
The rate constraint in \cref{thm:decam_consis} is satisfied, for example, when $p = \Theta(N)$. In \cref{A:m_inf_behav}, we extend \cref{thm:decam_consis} to handle the case when  \cref{assum:finite_dime} does not hold by using a similar idea in \citet{cam} to handle model misspecification.}

\section{\raj{Related Work}} 

Two main approaches have been proposed for learning causal structures in the presence of confounders. The first approach focuses on modeling the conditional independence structure of $\pr(X)$ using an ancestral graph and subsequently recovering the edges in this graph. Examples of algorithms for ancestral graph estimation include FCI \citep{causality_book}, RFCI \citep{fast_fci}, and GSPo \citep{gspo}. However, these methods recover fewer causal relationships since they make minimal assumptions about the confounders, as illustrated in Figure 1 of \citet{causal_lrps}.

The second approach makes more assumptions about the confounders, and tries to exploit these structural assumptions for better estimation. 
One commonly used assumption is that the confounders are sources in the graph ("exogenous") and have an effect on many observed variables ("pervasive"). 
For linear Gaussian graphical models with exogenous pervasive confounders, several methods have been proposed to accurately estimate the covariance matrix of the conditional distribution $\pr(X \mid H)$ using only data points drawn from $\pr(X)$ for the purposes of conditional independence testing \citep{lrps_graph, poet, causal_lrps, causal_pca}. 
The methods for estimating the covariance matrix associated with $\pr(X \mid H)$, which in the linear Gaussian case is the same for all $H$, fall into two classes: the \emph{optimization approach} and the \emph{spectral approach}. \\

\noindent \textit{Optimization Approach: Deconfounding the Precision Matrix.} \raj{\citet{causal_lrps} decompose} the precision matrix of $(X, H)$ into the blocks
\begin{equation*}
[\cov((X, H))]^{-1} = \begin{pmatrix}
  \begin{matrix}
  J_O
  \end{matrix}
  & \rvline & J_{OH} \\
\hline
  J_{HO} & \rvline &
  \begin{matrix}
  J_H
  \end{matrix}
\end{pmatrix}. 
\end{equation*}
By standard theory for multivariate Gaussian distributions, $J_O \in \R^{p\times p}$ is the precision matrix associated with the conditional distribution $\pr(X \mid H)$. 
The key idea for estimating $J_O$ is to relate the observed precision matrix $[\cov(X)]^{-1} $ and the target quantity $J_O$ via Schur complements:
\begin{equation*}
[\cov(X)]^{-1} = J_O - J_{OH} J_{HH}^{-1} J_{HO}.
\end{equation*}
For sparse DAGs,  $J_O$ is a sparse matrix, and when $K \ll p$, $J_{OH} J_{HH}^{-1} J_{HO}$ is low-rank. 
Hence, when the confounders are pervasive, it is possible to estimate each component through a low-rank plus sparse matrix decomposition of $[\cov(X)]^{-1}$ \citep{causal_lrps, lrps, lrps_graph}. \\

\noindent \textit{Spectral Approach: Deconfounding the Covariance Matrix.} In contrast to the optimization approach, \raj{\citet{causal_pca}} directly estimates the covariance matrix. Under the SCM in \cref{eq:linear_sem}, 
\begin{equation} \label{eq:poet}
	\begin{split}
	\cov(X) &= (I - B)^{-1} \Lambda (I - B)^{-T} + (I - B)^{-1}\Theta \Theta^T (I - B)^{-T} \\
		&= (I - B)^{-1} \Lambda (I - B)^{-T} + \tilde{\Theta} \tilde{\Theta}^T \quad \text{s.t.} \quad \tilde{\Theta} =  (I - B)^{-1}\Theta \\
		&= \underbrace{ J_O^{-1}}_{\text{sparse inverse}} + \underbrace{\tilde{\Theta} \tilde{\Theta}^T}_{\text{low-rank}}.
	\end{split}
\end{equation}
If the confounders are pervasive, then the eigenvalues of $\tilde{\Theta} \tilde{\Theta}^T$ grow linearly with the dimension $p$, while the eigenvalues of $J_O^{-1}$ are assumed to grow much slower.
As a result, the spectrum of $\cov(X)$ is dominated by the spectrum of $\tilde{\Theta} \tilde{\Theta}^T$.
Hence, the top principal components of $\cov(X)$ approximate $\tilde{\Theta} \tilde{\Theta}^T$ well, and consequently provide a way to estimate $J_O^{-1}$ \citep{poet, spoet}. 
In settings when the gap between the eigenvalues of $J_{O}^{-1}$ and $\tilde{\Theta} \tilde{\Theta}^T$ in \cref{eq:poet} is small, \citet{causal_pca} propose a modified covariance estimator.
Outside of the graphical models literature, such spectral approaches are similar to the approach proposed by \citet{mult_causes} to estimate average treatment effects in the potential outcomes framework.
 \label{sec:existing_work}

\section{Experiments} \label{sec:experiments} 

We begin our empirical evaluation by examining the performance of the \emph{Principal Confounding Sufficient Statistics} (PCSS) procedure, outlined in \cref{algo:pcss_est}, in estimating the confounder sufficient statistics $S = \E[X \mid H]$. This analysis is presented in \cref{sec:exp_pcss_est}. We investigate the impact of estimation errors in PCSS on the performance of our DeCAMFounder score function, which is discussed in \cref{sec:decam_perform} and \cref{sec:real_data_decam}. To provide a comprehensive comparison, we benchmark our DeCAMFounder score function against other commonly used score functions in the fully-observed or pervasive confounding settings. Detailed descriptions of these benchmark methods are provided in \cref{sec:decam_perform}.

To establish a known ground truth DAG, we conduct our evaluation using simulated data. This allows us to assess the accuracy of our method. In addition to the simulated data, we evaluate our approach on a real ovarian cancer dataset in \cref{sec:real_data}. The use of this dataset enables partial validation based on prior biological knowledge of the underlying system. We express our gratitude to the authors of \citet{causal_lrps} for providing a pre-processed and easily reproducible version of this dataset. All the results presented in this section can be reproduced using the provided data and code, which are accessible at \texttt{https://github.com/uhlerlab/decamfound}.

\subsection{Simulated Data}
In this set of simulations, we examine the robustness of our method to different data characteristics. To achieve this, we vary the following parameters: the strength of confounding, linear or non-linear SCM, and dimensionality. \\

\noindent{\textbf{Controlling the Strength of Confounding.}}
Generating non-linear (and even  linear) data can be challenging. 
For example, without proper normalization, the variance of downstream nodes can explode (e.g., consider a line graph $X_1 \rightarrow X_2 \cdots \rightarrow X_p$ with edge weights larger than 1). 
Additionally, if the noise variance is the same across nodes, downstream nodes typically exhibit a higher signal-to-noise ratio than upstream nodes. To mitigate any biases introduced by such artifacts in the data simulation process, we apply the following normalization (note that specific edge cases arise for source nodes or those independent of $H$; refer to \cref{A:sim_data_generate} for further details):
\begin{itemize}
	\item \textbf{Unit variance nodes}: $\Var(X_j) = 1$ for all $1 \leq j \leq p$.
	\item \textbf{Zero mean nodes}: $\E[X_j] = 0$ for all $1 \leq j \leq p$.
	\item \textbf{Fixed signal, confounding, and noise variances}: for all $1 \leq j \leq p$:
	\begin{itemize}
	    \item $\cov(X_j, \epsilon_j) = \sigma^2_{\text{noise}}$,
	    \item $\cov(\sum_{k=1}^K g_{kj}(H_k)) = \sigma_h^2$, and
	    \item $\cov(\sum_{i \in \Parents{\Gtrue}{j}} f_{ij}(X_i) ) = \sigma^2_{\text{signal}}$.
	\end{itemize}
\end{itemize}
Since all nodes have unit variance, $\sigma_h^2$ equals the fraction of the variation in $X_j$ explained by $H$, and $\sigma_{\text{signal}}^{2}$ equals the fraction of the variation in $X_j$ explained by its parents.
In our experiments, we fix $\sigma^2_{\text{noise}} = 0.2$. 
In other words, if we could actually observe both $X$ and $H$, then the noise in the problem is small. 
To assess how the strength of confounding affects DAG recovery, we vary  $\sigma_h^2$. 
Note that $\sigma_{\text{signal}}^{2}$ is a deterministic function of $\sigma_h^2$: $\sigma_{\text{signal}}^{2} = \Var(X_{j}) - \sigma_{\text{noise}}^{2} - \sigma_{h}^{2} = 0.8 -  \sigma_{h}^{2}$. Therefore, varying $\sigma_h^2$ implicitly varies the observed signal variance $\sigma^2_{\text{signal}}$. \\

\noindent{\textbf{Linear / Non-Linear SCM.}} We randomly draw each $f_{ij}$ and $g_{kj}$ in \cref{eq:cam_gen_model} from the set of linear trends $\{ \theta x: \theta \in \R \}$, seasonal trends $\{ \theta \sin(\pi x): \theta \in \R \}$, or quadratic trends $\{ \theta x^2: \theta \in \R \}$ when generating non-linear data. 
For data generated from a linear SCM, we assume that each $f_{ij}$ and $g_{kj}$ belongs strictly to the set of linear trends. \\

\noindent{\textbf{Data Generation.}} We randomly draw $\Gtrue$ from an Erd{\"o}s-R{\'e}nyi random graph model with expected neighborhood size of $5$ and consider graphs with number of observed nodes $p \in \{ 250, 500, 1000\}$. 
As in \citet{causal_lrps}, we assume that each confounder is a direct cause of an observed variables with a $70\%$ chance. 
Given the graph, we randomly select a trend type for each edge (i.e., the $f_{ij}$ and $g_{kj}$ in \cref{eq:cam_gen_model}) with equal probability. 
We then randomly draw a weight $\theta \sim \text{Uniform}([-1, -.25] \cup [.25, 1])$ and appropriately scale weights of all parents simultaneously to satisfy the normalization constraints above. %
We finally add $N(0, \sigma_{\text{noise}}^2)$ noise to each node; see  \cref{A:sim_data_generate} for more details and references to our \texttt{python} code. 
Unless otherwise stated, we draw 25 random datasets for each specific configuration of confounding strength and dimensionality.  
We consider $K=1$ pervasive confounders, which allows us to plot how each node varies as a function of this confounder. See \cref{A:sensitivity} for our empirical results for other choices of $K$.

\subsubsection{Confounder Sufficient Statistics Estimation Performance} \label{sec:exp_pcss_est}

Based on \cref{thm:pcss_convg}, our estimate of the confounder sufficient statistics should decrease as $N$ and $p$ increase. To test this empirically, we keep the ratio of $\frac{p}{N}$ fixed at 2, and  vary the confounding strength and dimensionality. \cref{fig:pcss_mse} plots the maximum mean-squared error, which equals $\max_{1 \leq j \leq p} \frac{1}{N}\sum_{n=1}^N \left(s_j^{(n)} - \hat{s}_j^{(n)} \right)^2$.
In the linear case, we can compute $S$ analytically as a function of $H$ (see \cref{ex:linear_s_form}). 
However, in the non-linear case, there is no analytical expression for $S$ in term of $H$. 
Nonetheless, if $X_j$ only has confounders as parents, then $S_j =  \sum_{k=1}^K g_{kj}(H_k)$. 
%
%
Thus, in \cref{fig:pcss_mse}, the max MSEs for the non-linear SCM case are only with respect to the $S_j$ associated with nodes that only have confounders as parents. 
From \cref{fig:pcss_mse}, we see that the MSE decreases as $p$ increases or when the strength of the confounders increases.
\begin{figure}[t]
\centering
\myfigure{width=.8\linewidth}{./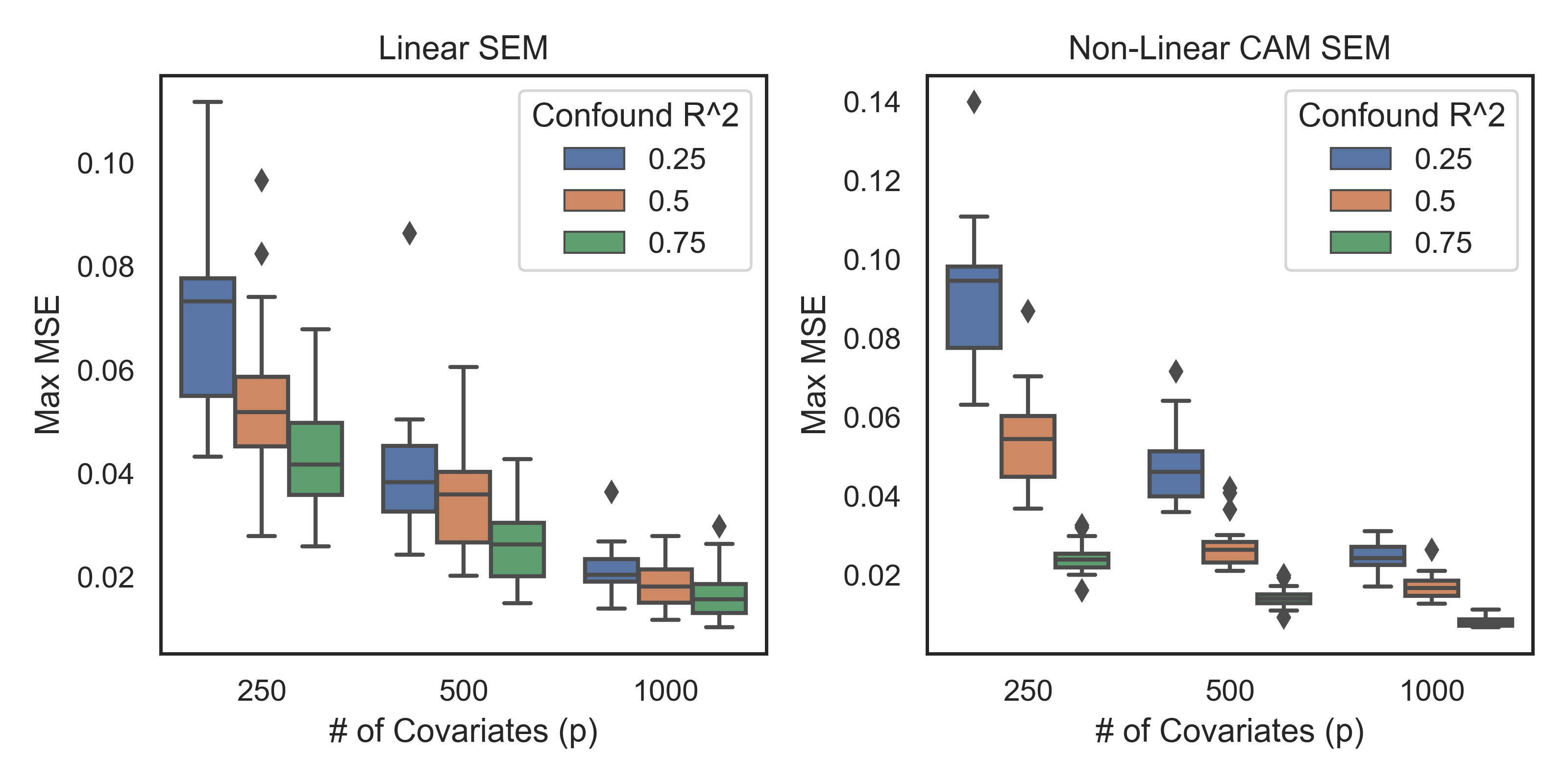}
\caption{Maximum Mean-Squared Error (MSE) across all dimensions for estimating $\{s^{(n)}\}_{n=1}^N$ via PCSS. Twenty-five total simulations were performed for each dataset configuration.} \label{fig:pcss_mse} 
\end{figure}

To evaluate the quality of PCSS for nodes that also have observed nodes as parents, we can make qualitative assessments. 
In particular, since there are only $K=1$ confounders, we can plot $x_j^{(n)}$ against $h_1^{(n)}$.
The visual trend we see from the resulting scatterplot corresponds to the desired conditional expectation $s_j^{(n)} = \E[X_j \mid H_1=h_1^{(n)}]$. 
By plotting the confounder sufficient statistics estimated from PCSS on the same plot for each dimension, we can qualitatively check for a matching trend. 
\cref{fig:pcss_syn_visual} provides such a plot for a particular random data simulation and select nodes that have at least one parent. We see that PCSS indeed matches the data trend well.
\begin{figure}[b]
\myfigure{width=\linewidth}{./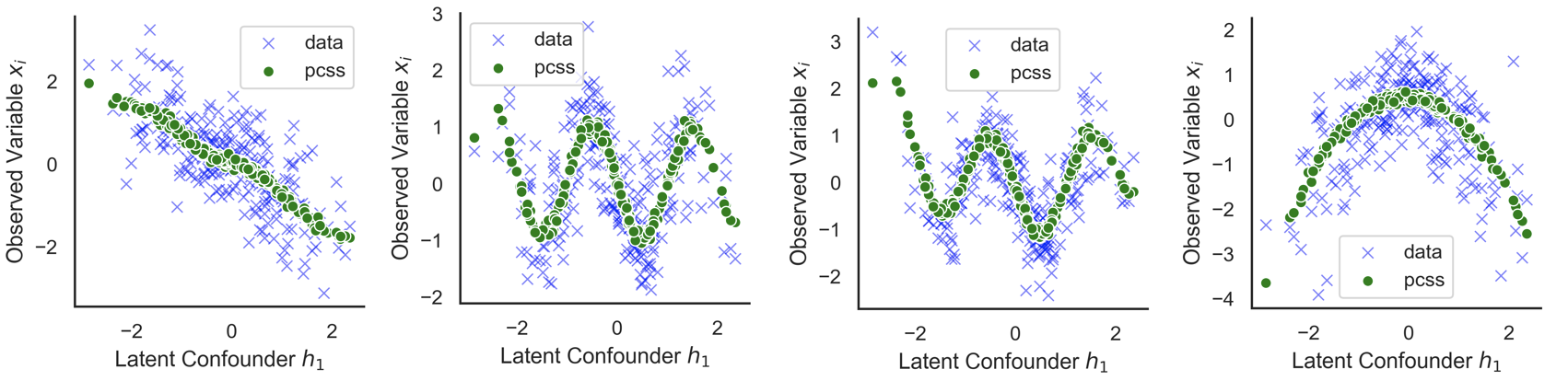}
\centering
\caption{Non-linear estimation of $E[X_i \mid H]$ via PCSS for nodes with at least one parent.} \label{fig:pcss_syn_visual}  
\end{figure}

\subsubsection{Evaluation of the DeCAMFounder Score} \label{sec:decam_perform}
We now evaluate how well the DeCAMFounder scores DAGs using noisy estimates of the confounder sufficient statistics.
For this evaluation, we separate the merits of the score function from the selected optimization procedure. 
In particular, since many optimization procedures for DAG estimation greedily optimize over parent sets, we focus on the task of effectively scoring parent sets. \\

\noindent \textbf{Description of the Parent Set Evaluation Tasks.}
In greedy procedures, changes to a parent set typically involve a single node addition (adding a node to the current parent set) or deletion (removing a node from the current parent set). 
In the presence of latent confounding, methods that do not account for confounders are susceptible to adding spurious edges. 
For example, suppose nodes $X_i$ and $X_j$ have a common hidden cause but no shared parents in $\Gtrue$ for some $1 \leq i,j \leq p, i \neq j$.
Then, a method that assumes no latent confounding might add an edge from $X_i$ to $X_j$ or vice versa with high probability, even when all the true parents of the node are included in the current parent set.

This observation leads to a natural evaluation procedure: consider the set of all nodes $C \coloneqq \{X_j: D_j \neq 0, 1 \leq j \leq p\}$ that have direct confounders (\raedit{see \cref{eq:cam_gen_model} for the definition of $D_j$}). 
Randomly select a node $X_{j'}$ from $C$ and randomly sample $T$ nodes $X_{r_1}, \ldots, X_{r_T}$ from $C$ that are not parents of $X_{j'}$. Then, we evaluate the methods in terms of scoring the following $T+1$ parent sets:
\begin{equation*}
	\begin{split}
	P_{\text{correct}} &= \Parents{\Gtrue}{j'} \\
	P_{t}  &=  \Parents{\Gtrue}{j'} \cup \{ X_{r_t} \} \quad \text{for} \quad \raedit{1 \leq t \leq T}. \\
	\end{split}
\end{equation*}
Again, we might expect that methods that do not correct for confounders score some $P_{t}$ higher than $P_{\text{correct}}$ due to spurious correlations created by the confounders. 
In our evaluation, we pick $T=100$. 
We call this evaluation procedure the \emph{Wrong Parent Addition Task}. 
We note that this task is similar to the "CauseEffectPairs" causality challenge introduced by \citet{cause_pairs}.

The second parent set evaluation task concerns the node deletion phase.
Here, we expect that linear methods may suffer by potentially removing true parent nodes. 
For example, if a parent node has low correlation with its child (e.g., as is the case with our sine and quadratic trends), then that parent might be removed when using a sparsity-inducing score function such as BIC or penalized log-likelihood with an $L_0$ or $L_1$ penalty. 

Similar to the Wrong Parent Addition task, we first randomly select a node $X_{j'}$ with at least one observed parent node, \raedit{where $1 \leq j' \leq p$}. 
For each node in the parent set of $X_{j'}$, let
\begin{equation*}
	\begin{split}
	P_{\text{correct}} &= \Parents{\Gtrue}{j'} 
    \\
	P_{t} &= \Parents{\Gtrue}{j'} \setminus \{ X_{i} \} \quad \text{for} \quad X_i \in \Parents{\Gtrue}{j'}. \\
	\end{split}
\end{equation*}
We would again like to understand if certain score functions favor incorrect $P_t$ parent sets when the parents have non-linear effects on the target. 
We call this evaluation procedure the \emph{Correct Parent Deletion} task.
These two tasks are illustrated in \cref{fig:parent_eval_vis}.\\
\begin{figure}
        \centering
        \begin{subfigure}[b]{0.25\textwidth}
            \centering
            \myfigure{width=\textwidth}{./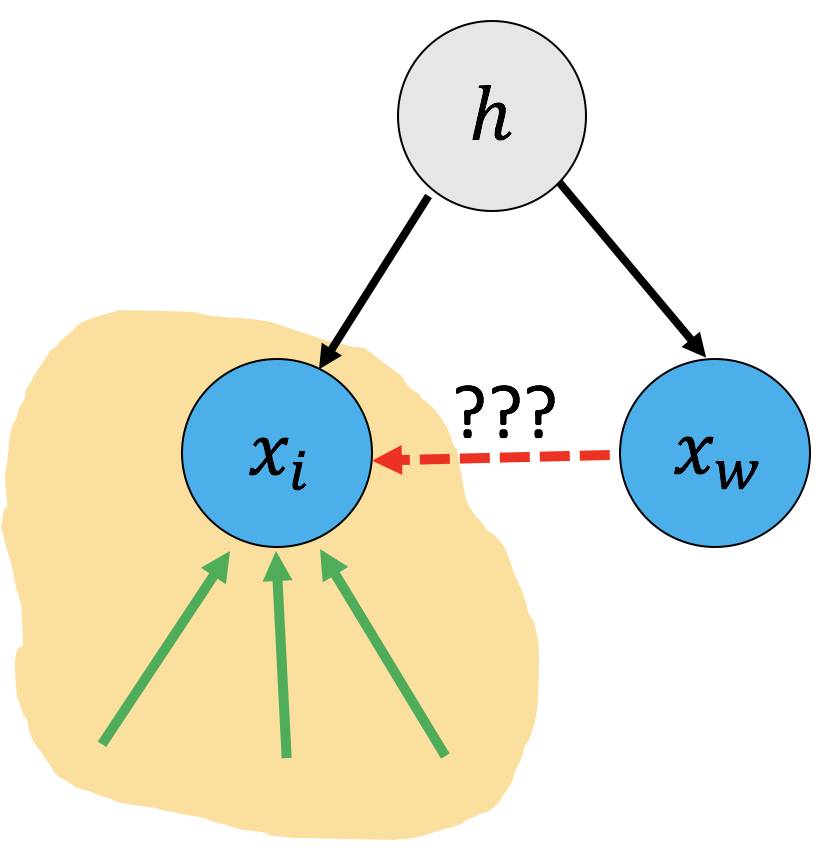}
            \caption[Network2]%
            {{\small Wrong Parent Addition}}    
            \label{fig:tf_corrs}
        \end{subfigure}
        \hspace{0.15\textwidth}
        \begin{subfigure}[b]{0.25\textwidth}  
            \centering 
            \myfigure{width=\textwidth}{./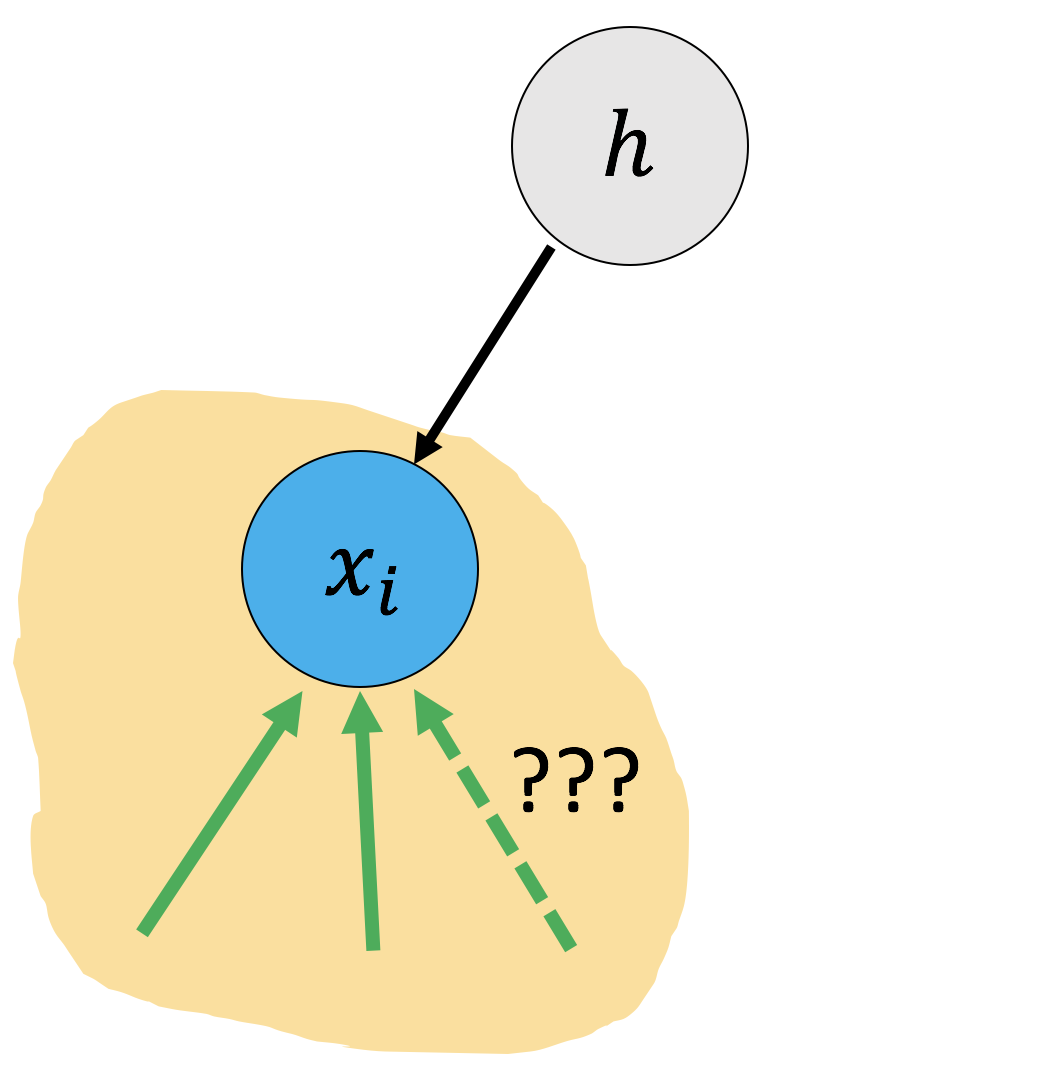}
            \caption[]%
            {{\small Correct Parent Deletion}}    
            \label{fig:ovarian_scree}
        \end{subfigure}
        \caption{Our parent set evaluation tasks. Solid arrows represent the set of true edges, and dotted arrows indicate a potential incorrect modifcation to the true parent set of a node.} \label{fig:parent_eval_vis}
\end{figure}

\noindent \textbf{Benchmark Methods.}
We now describe our benchmark scoring methods. 
The first three methods are linear, and use the BIC score with different estimates of the covariance matrix. 
In particular, the BIC score (under Gaussian errors) is a function of a target node $X_j$, parent set \raedit{$P_j \subset \{1, \cdots, p\} \setminus \{j\}$}, and estimated covariance matrix \raedit{$\hat{\Sigma} \in \R^{p \times p}$, $1 \leq j \leq p$}: 
\begin{equation} \label{eq:BIC}
	\text{BIC}(X_j, P_j, \hat{\Sigma}) 
    = 
    -\left(\frac{N}{2} \log \left( 2\pi \hat{\sigma}_{j \mid P_j}^2 \right) + \frac{N}{2} \right)  -.5\log(N)(|P_j| + 2),
\end{equation}
where the conditional noise variance (based on Schur complements) equals
\begin{equation*}
	\hat{\sigma}_{j \mid P_j}^2 = \hat{\Sigma}_{jj} -  \hat{\Sigma}_{j P_j} \hat{\Sigma}_{P_j P_j}^{-1} \hat{\Sigma}_{P_j j}.
\end{equation*}

\noindent The benchmarks are as follows:
\begin{enumerate}
	\item \textbf{\vanillabic:} scores each parent set by inputting the sample covariance matrix $\frac{1}{N} \bbX^T \bbX$ into \cref{eq:BIC}.
	
	\item \textbf{\lrpsbic:} scores each parent using the covariance matrix estimated from a low-rank plus sparse (LRPS) decomposition of the sample precision matrix; see \citet{causal_lrps}. We use the author's code in \texttt{R} to fit this decomposition, and pick the hyperparameters using cross-validation.
	\item \textbf{\pcssbic:} scores each parent using the covariance matrix $\frac{1}{N} (\bbX - \bbS)^T (\bbX - \bbS)$, where $\bbS$ is the output from \cref{algo:pcss_est}. 
    We pick one principal component and 3 principal components for the linear and non-linear case, respectively.
    This choice was based on visual inspection of the spectrum of the covariance matrix. See the Appendix. 
	
	\item \textbf{\cam:} scores each parent via \cref{eq:marg_like} but sets $\bbS_{:,j}$ equal to the zero matrix. 
    This method corresponds to fitting a standard causal additive model without correcting for the confounders.
	
	\item \textbf{\camobs:}  scores each parent by computing \cref{eq:gp_marg_like} but replacing $\bbS$ with the actual confounders $\bbH$. 
    This method corresponds to fitting a standard causal additive model when both $X$ and $H$ are observed, and is therefore not realizable in practice.
\end{enumerate}

For both parent set tasks, we fix $N=250$ and $p=500$ but vary the strength of confounding. 
For the Correct Parent Deletion task, we exclude the linear trend from the set of trends when generating non-linear data to focus on the particular problem of modeling non-linearities. 
Apart from that, the data in both settings are generated as specified at the beginning of this section. 
\cref{fig:non_linear_par_tasks} shows the non-linear SEM results. 
The remaining results are shown in \cref{fig:parent_set_addition} and \cref{fig:parent_set_removal}.  In these figures, the different methods are compared using the following metric: \\

\noindent \textbf{Prop. Times MLL Wrong $>$ MLL True}: out of the $T$ incorrect parent sets, what proportion  have scores larger than the true parent set. Lower is better here. \\

\cref{fig:non_linear_par_tasks} and \cref{fig:parent_set_addition} show that methods that account for confounders (i.e., \lrpsbic, \pcssbic, \camobs, \decamfound), place higher probability on the correct parent set for the Wrong Parent Addition task across different settings relative to those that do not model the confounders. 
In addition,  \cref{fig:non_linear_par_tasks} and \cref{fig:parent_set_removal} show that for the Correct Parent Deletion task, the linear methods suffer in the non-linear setting. 
Note that unlike \pcssbic, \lrpsbic~suffers even in the linear setting because it induces a very sparse graph, causing it to delete true parents.

Including \camobs~in our evaluation allows us to understand how parent set recovery is affected by estimation error from only having a finite amount of observational data. 
Interestingly, our method, which does not require knowing $H$, sometimes outperforms \camobs. 
This might be because our method leverages all $p$ observed nodes to estimate the confounding variation via PCA. 
\camobs, on the other hand, estimates the confounding variation one node at a time by regressing each observed node on $H$. 
Finally, all methods suffer when the confounding strength increases since the observed signal necessarily becomes weaker. 
We provide additional empirical results including a DAG evaluation scoring task in \cref{sec:add_syn}. 

\begin{figure}
        \centering
        \begin{subfigure}[b]{0.49\textwidth}
            \centering
            \myfigure{width=\textwidth}{./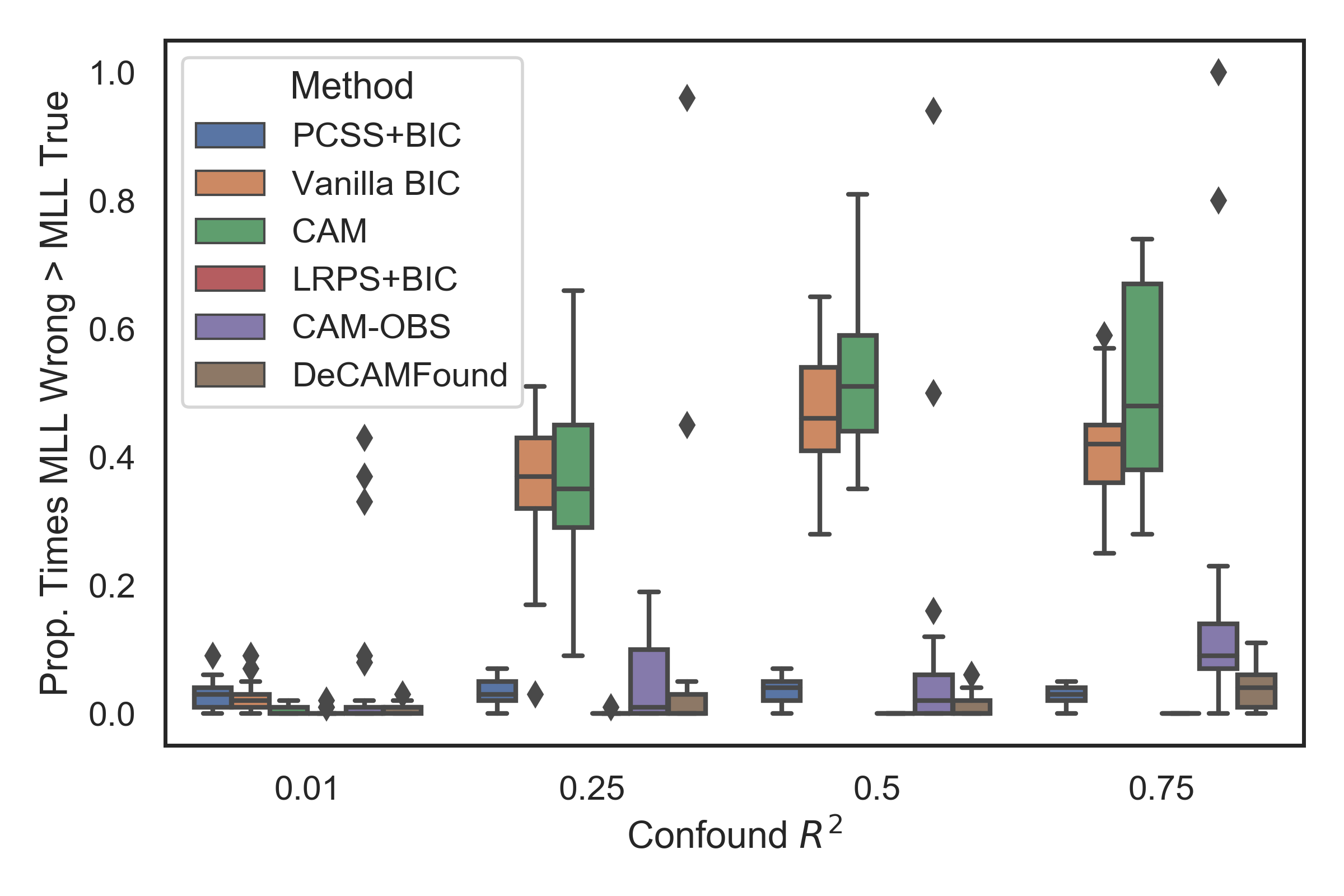}
            \caption[Network2]%
            {{\small Wrong Parent Addition Task}}    
        \end{subfigure}
        \begin{subfigure}[b]{0.49\textwidth}  
            \centering 
            \myfigure{width=\textwidth}{./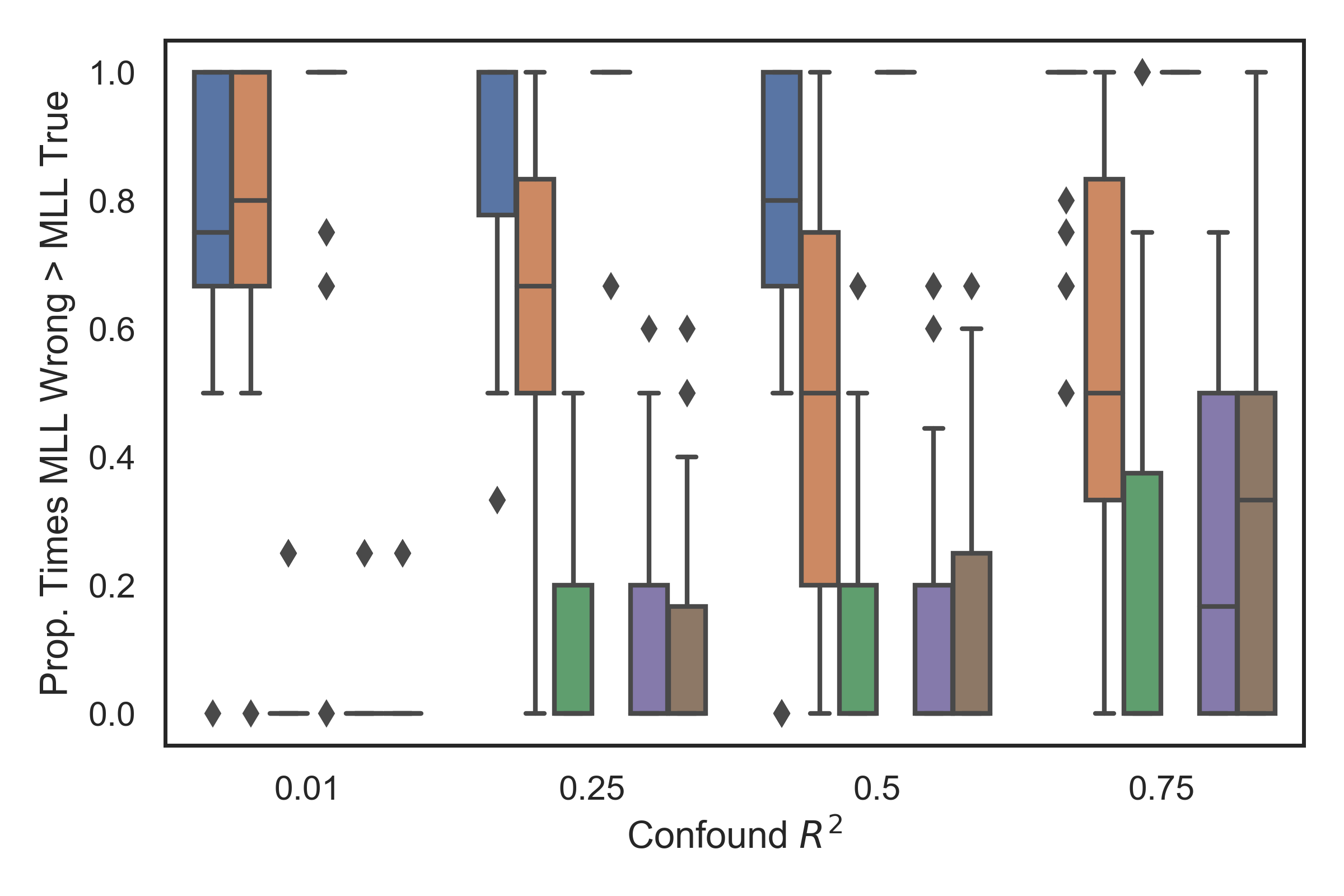}
            \caption[]%
            {{\small Correct Parent Deletion Task}}    
        \end{subfigure}
        \caption{Results for the Wrong Parent Addition and Correct Parent Deletion tasks (lower values on the y-axis are better for both tasks).  The data are generated according to a non-linear SEM. 25 total simulations per dataset configuration were performed.} \label{fig:non_linear_par_tasks}
\end{figure}

\subsection{Real Data: Ovarian Cancer Dataset} \label{sec:real_data}
We use the ovarian cancer RNA-seq dataset from \citet{causal_lrps} for our real-data analysis, which consists of $N=247$ human samples and $p=501$ genes. 
There are 15 total transcription factors (TFs), and the remaining 486 genes interact with at least one of these TFs. 
This dataset has a partial ground truth; the 15 TFs can only be causes and not effects of the 486 genes. 
The reference network for the 501 variables is obtained using Netbox, a software tool for performing network analysis using biological knowledge and community network-based approaches \citep{netbox}. 
A caveat, however, is that edges in this network might not have a precise causal interpretation. 

One metric used by \citet{causal_lrps} was a count of the number of directed edges from TFs to genes that agree with the edges in NetBox. 
The authors stated that  confounding can be expected in this dataset due to unobserved transcription factors or batch effects. 
We here take a different approach for the evaluation and instead \emph{explicitly} create confounders by design. 
In particular, we remove all 15 TFs from the dataset and assume that we only observed the remaining 486 genes. 

\begin{figure}
        \centering
        \begin{subfigure}[b]{0.48\textwidth}
            \centering
            \myfigure{width=\textwidth}{./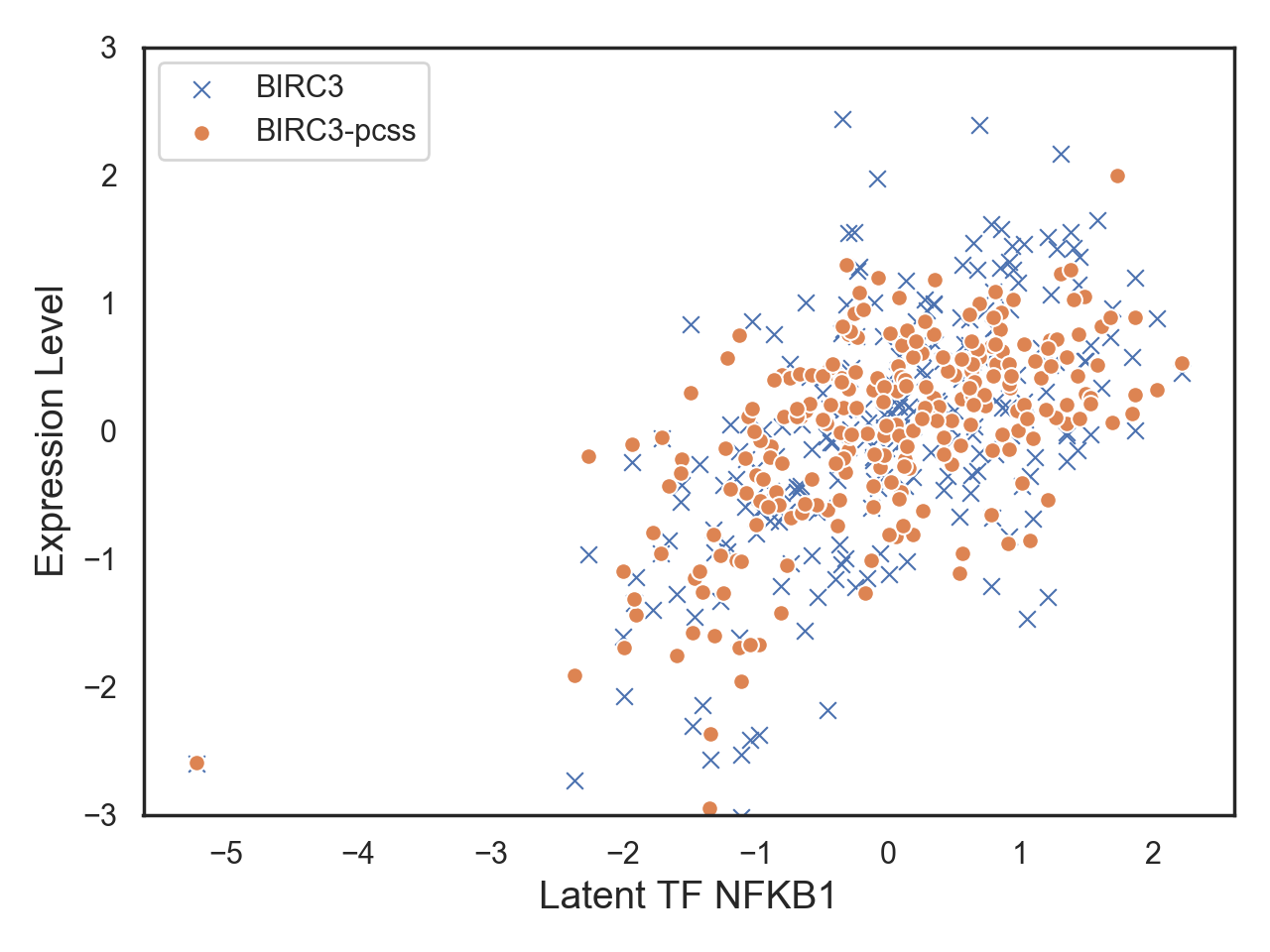}
        \end{subfigure}
        \hfill
        \begin{subfigure}[b]{0.48\textwidth}  
            \centering 
            \myfigure{width=\textwidth}{./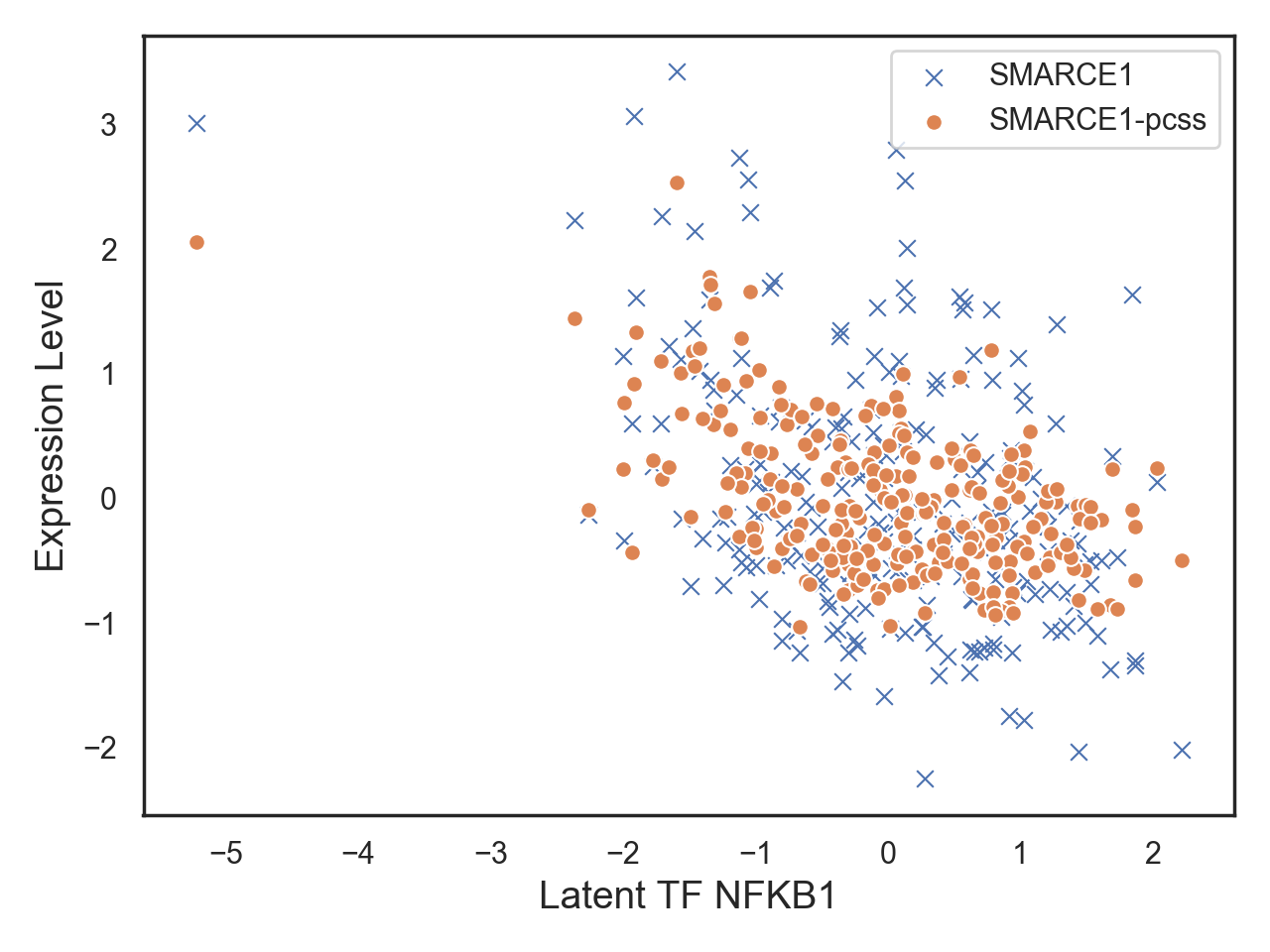}
        \end{subfigure}
        	\caption{The two genes most positively and negatively correlated with the transcription factor NFKB1, respectively. GENE-pcss refers to the total latent confounding variation estimated for that gene via PCSS. GENE refers to the observed values of the gene.} \label{fig:nfkb1_vs_pcss}
\end{figure}

\begin{figure}
        \centering
        \begin{subfigure}[b]{0.48\textwidth}
            \centering
            \myfigure{width=\textwidth}{./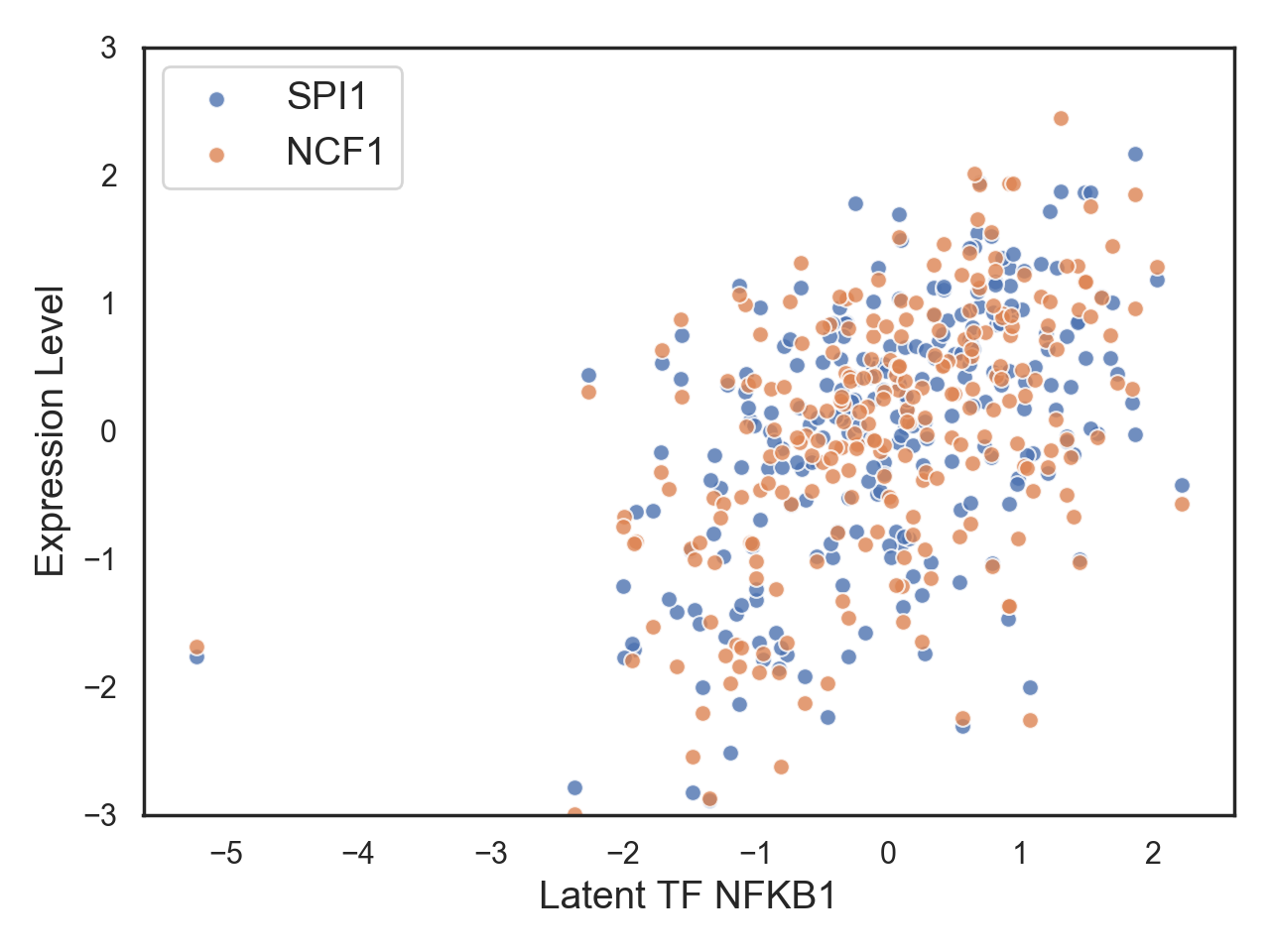}
        \end{subfigure}
        \hfill
        \begin{subfigure}[b]{0.48\textwidth}  
            \centering 
            \myfigure{width=\textwidth}{./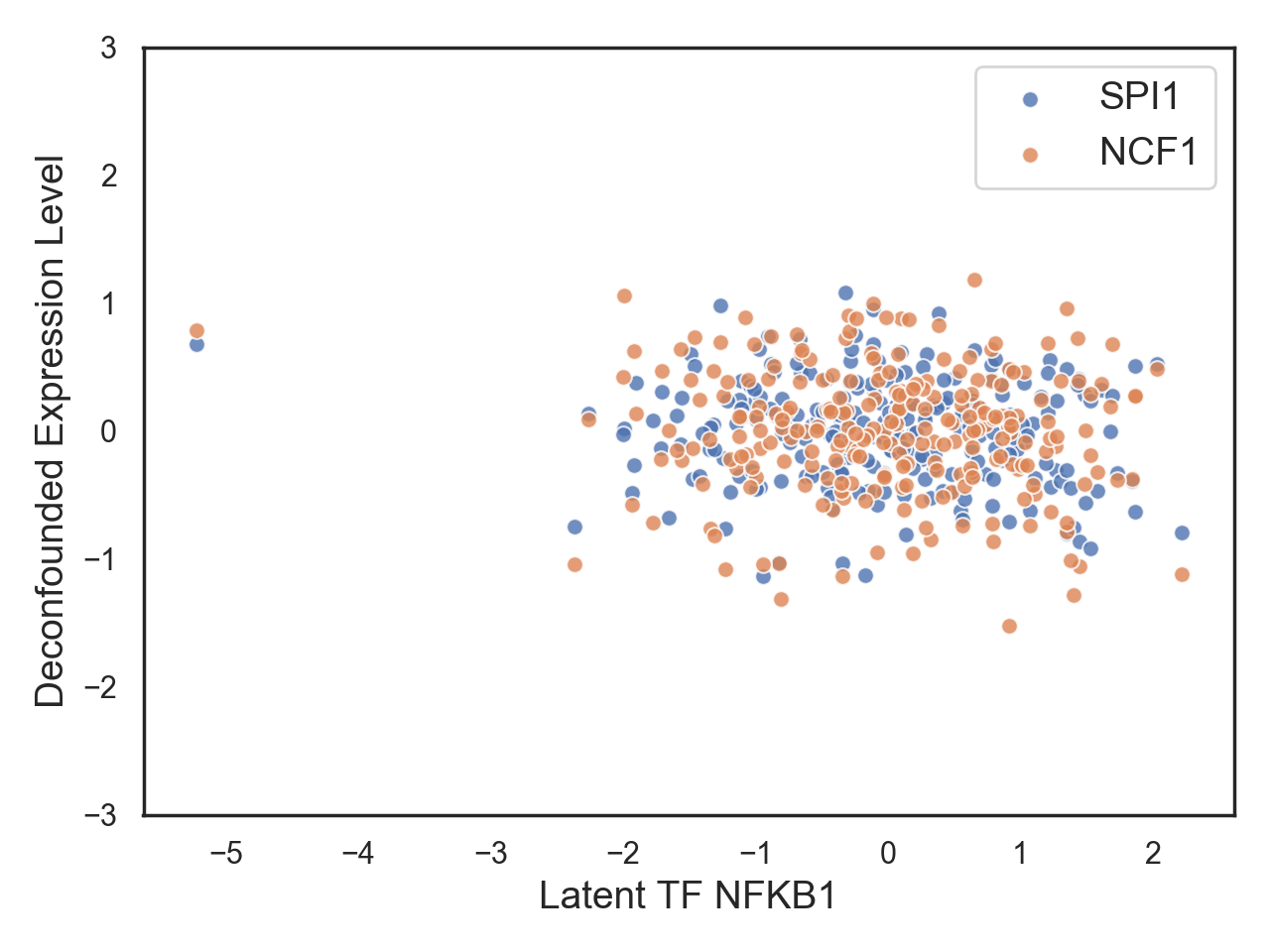}
        \end{subfigure}
        	\caption{Both genes have a correlation greater than 0.4 with NFKB1 (left hand plot). After subtracting out  the confounding variation estimated using PCSS for each gene (denoted as "deconfounded" expression level), the genes are no longer correlated with the unobserved transcription factor NFKB1.} \label{fig:remove_out_tf}
\end{figure}

\subsubsection{Estimating the Confounding Variation} \label{sec:real_data_pcss}

By removing the TFs, we can evaluate the methods in a similar fashion as the simulated experiments since we know the true values of the TFs. 
In the following, we first assess our ability to estimate the confounder sufficient statistics. 
Suppose that an observed gene $X_j$ is strongly correlated with one of the 15 latent TFs $T_k$, for some $1 \leq j \leq 486$ and $1 \leq k \leq 15$.
Then, $\E[X_j \mid T_k ] \approx \E[X_j \mid H] = S_j$. 
Since we know $T_k$, we can produce a similar plot as in \cref{fig:pcss_syn_visual}. 
To this end, we look at NFKB1, which is a transcription factor known to be associated with ovarian cancer \citep{nfkb1}. 
In \cref{fig:nfkb1_vs_pcss}, we look at the highest positive and negatively correlated genes with NFKB1, which are BIRC3 and SMARCE1 respectively. 
We see that the estimated confounding variation for each gene estimated from PCSS correlates well with the unobserved transcription factor NFKB1. 
This strong correlation suggests that PCSS estimates the variation explained by the unobserved confounders well in this dataset.

\begin{figure}[t]
\centering
\myfigure{width=.55\linewidth}{./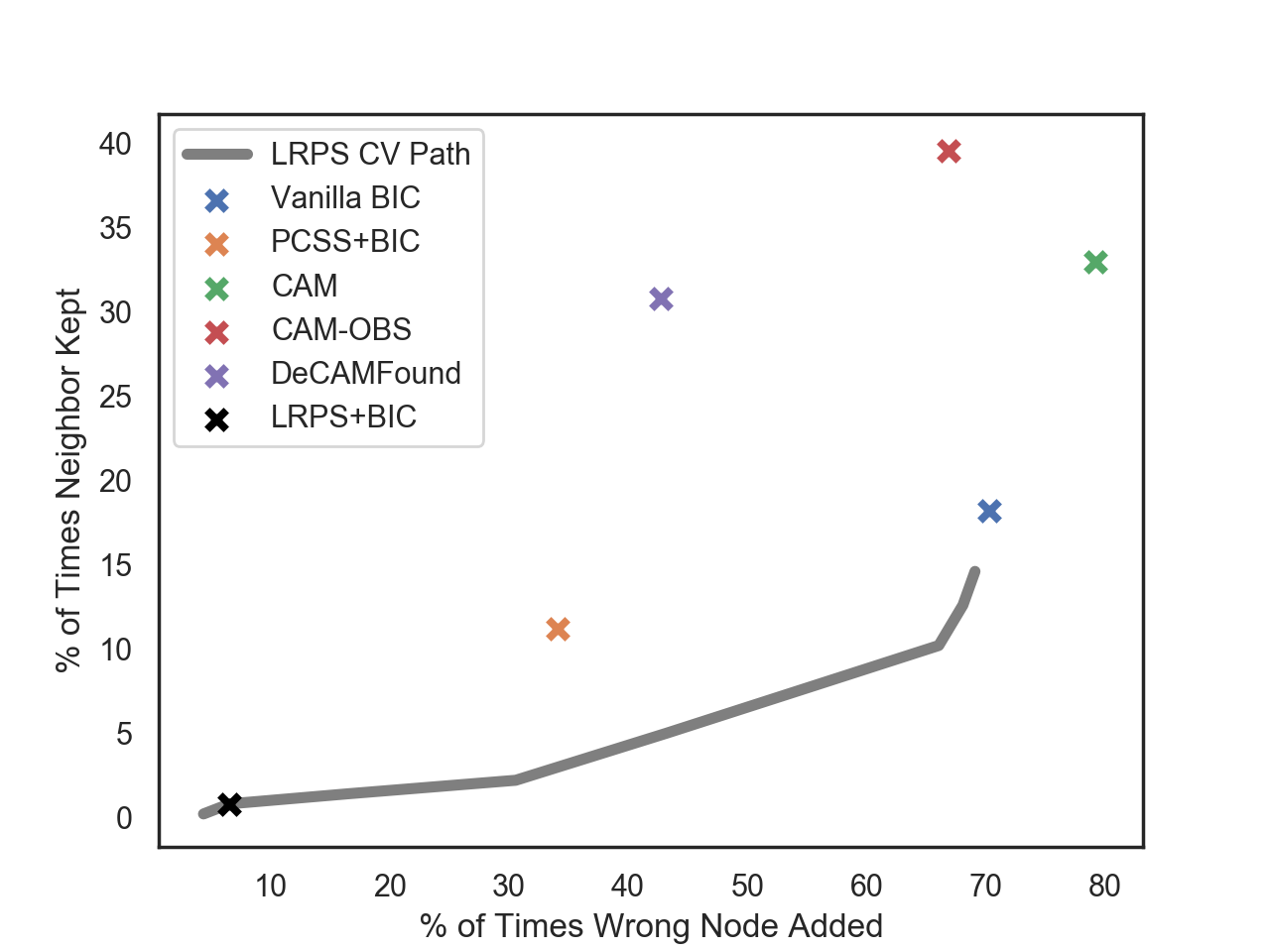}
\caption{The x-axis denotes the proportion of times a method scored the incorrect parent appended to the neighborhood set higher than just the neighborhood set  of a node (our proxy for the true parent set).
The y-axis denotes the proportion of times a method scored the full neighborhood set higher than the neighborhood set after removing out one of the neighbors. 
For LRPS, we provide the intermediate results for each covariance matrix outputted along its cross-validation path. 
The `x' for LRPS corresponds to the performance of the covariance matrix selected based on cross-validation. 
We compute the \emph{positive likelihood ratio} for each method which equals the ratio between our proxy for the true positive rate (y-axis) and false positive rate (x-axis). 
These ratios are as follows: \vanillabic=.26, \pcssbic=.33, \cam=.42, \camobs=.59, \decamfound= .72, \lrpsbic=.12.
}  
\label{fig:ovarian_results}
\end{figure} 

\subsubsection{Parent Recovery Performance} \label{sec:real_data_decam}

Unlike in the simulated dataset experiments, we do not know the true causal graph.
Thus, for the Wrong Parent Addition task and the Correct Parent Deletion task, we use the graph structure outputted by NetBox, as in \citet{causal_lrps}.
%
%
In particular, for a node $X_{j'}$, we let $P_{\text{correct}}$ be the neighbors of $X_{j'}$ in the graph outputted by NetBox.
For the Wrong Parent Addition task, we require a set of ``spurious neighbors'' of $X_{j'}$.
For this set, we take all variables $X_{r'}$ such that $X_{j'}$ and $X_{r'}$ are conditionally independent given $P_t$ and the TFs, but are conditionally dependent given only $P_{\text{correct}}$.
Then, we let $P_t = P_{\text{correct}} \cup \{ X_r \}$ for $X_r$ picked randomly from this set.

We focus on gene pairs $X_{j'}$, $X_{r'}$ with at least one strong TF confounder.
In particular, we require that each gene has correlation greater than 0.4 with one of the 15 removed TFs. 
This choice makes it more likely that one of the 15 left-out TFs is the source of confounding, rather than some other unobserved confounder which we have no control over.
This choice results in roughly 1000 parent sets to score. 
One such gene pair with a strong TF confounder is illustrated in \cref{fig:remove_out_tf}.
We see that subtracting the confounder sufficient statistics from each gene results in removing the confounding effect of the strongest transcription factor NFKB1.

For the Correct Parent Deletion task, we randomly sample 500 edges from the NetBox undirected graph. 
For each edge we randomly select one endpoint as $X_{j'}$ and define $P_t = P_{\text{correct}} \setminus \{ X_r \}$ for the other endpoint $X_r$.

The results for both tasks are summarized in \cref{fig:ovarian_results}. 
LRPS produces a very sparse graph, and hence almost never adds a wrong parent. 
However, it almost always removes true edges, which comes at the expense of statistical power. 
Our method has the highest power per unit of false positives. 
Rather surprisingly, \camobs, which explicitly uses the 15 held out transcription factors, selects wrong parents more often than our method.
This might be the result of having additional (pervasive) confounders beyond the 15 TFs we introduced by design, as raised by \citet{causal_lrps}. \\

\noindent \textbf{Conclusion.}
We \edit{have shown how to} identify the causal graph among the observed nodes in the setting of non-linear effects and pervasive confounders. We proposed the DeCAMFounder, which consistently estimates \edit{a correct ordering} of $\Gtrue$ using Gaussian processes. Since the DeCAMFounder  explicitly accounts for confounders and non-linear effects, we found improved performance on both simulated and real datasets relative to existing methods. An interesting \edit{future }direction is handling selection bias. In \citet{causal_lrps}, for example, the authors showed how to account for selection bias and pervasive confounders in the linear setting. It would be interesting to explore whether a similar idea could be used to extend the DeCAMFounder to handle selection bias. \\

\noindent \textbf{Data and Code Availability.} The simulated data, real data, and code used to generate our empirical results are available at \texttt{https://github.com/uhlerlab/decamfound}. \\

\noindent \textbf{Acknowledgements}
We are thankful to the two referees, associate editor, and editor for their very helpful comments which led to an improved manuscript.\\

\noindent \textbf{Funding.} The authors were partially supported by NSF (DMS-1651995) and TRIPODS program (DMS-2022448), ONR (N00014-22-1-2116), the United States Department of Energy (DOE) Office of Advanced Scientific Computing Research (ASCR) via the M2dt MMICC center (DE-SC0023187), the MIT-IBM Watson AI Lab, the Eric and Wendy Schmidt Center at the Broad Institute, and a Simons Investigator Award to C.~Uhler. \\

\noindent \textbf{Conflict of Interest.} None declared.



\bibliographystyle{rss}
\bibliography{references}

\begin{thebibliography}{35}
\expandafter\ifx\csname natexlab\endcsname\relax\def\natexlab#1{#1}\fi
\expandafter\ifx\csname url\endcsname\relax
  \def\url#1{\texttt{#1}}\fi
\expandafter\ifx\csname urlprefix\endcsname\relax\def\urlprefix{URL: }\fi

\bibitem[{Bernstein et~al.(2020)Bernstein, Saeed, Squires and Uhler}]{gspo}
Bernstein, D., Saeed, B., Squires, C. and Uhler, C. (2020) Ordering-based
  causal structure learning in the presence of latent variables.
\newblock \textit{Proceedings of Machine Learning Research}, \textbf{108},
  4098--4108.

\bibitem[{Bühlmann et~al.(2014)Bühlmann, Peters and Ernest}]{cam}
Bühlmann, P., Peters, J. and Ernest, J. (2014) {CAM}: Causal additive models,
  high-dimensional order search and penalized regression.
\newblock \textit{Annals of Statistics}, \textbf{42}, 2526--2556.

\bibitem[{Cerami et~al.(2010)Cerami, Demir, Schultz, Taylor and
  Sander}]{netbox}
Cerami, E., Demir, E., Schultz, N., Taylor, B.~S. and Sander, C. (2010)
  Automated network analysis identifies core pathways in glioblastoma.
\newblock \textit{PLOS ONE}, \textbf{5}, 1--10.

\bibitem[{Chandrasekaran et~al.(2012)Chandrasekaran, Parrilo and
  Willsky}]{lrps_graph}
Chandrasekaran, V., Parrilo, P.~A. and Willsky, A.~S. (2012) Latent variable
  graphical model selection via convex optimization.
\newblock \textit{Annals of Statistics}, \textbf{40}, 1935--1967.

\bibitem[{{Chandrasekaran} et~al.(2009){Chandrasekaran}, {Sanghavi}, {Parrilo}
  and {Willsky}}]{lrps}
{Chandrasekaran}, V., {Sanghavi}, S., {Parrilo}, P.~A. and {Willsky}, A.~S.
  (2009) Sparse and low-rank matrix decompositions.
\newblock In \textit{2009 47th Annual Allerton Conference on Communication,
  Control, and Computing (Allerton)}, 962--967.

\bibitem[{Chickering(2002)}]{chick2002}
Chickering, D.~M. (2002) Optimal structure identification with greedy search.
\newblock \textit{Journal of Machine Learning Research}, \textbf{3}, 507--554.

\bibitem[{Colombo et~al.(2012)Colombo, Maathuis, Kalisch and
  Richardson}]{fast_fci}
Colombo, D., Maathuis, M.~H., Kalisch, M. and Richardson, T.~S. (2012) Learning
  high-dimensional directed acyclic graphs with latent and selection variables.
\newblock \textit{Annals of Statistics}, \textbf{40}, 294--321.

\bibitem[{Evans(2016)}]{exog_dag}
Evans, R. (2016) Graphs for margins of {Bayesian} networks.
\newblock \textit{Scandinavian Journal of Statistics}, \textbf{43}, 625--648.

\bibitem[{Fan et~al.(2013)Fan, Liao and Mincheva}]{poet}
Fan, J., Liao, Y. and Mincheva, M. (2013) {Large covariance estimation by
  thresholding principal orthogonal complements}.
\newblock \textit{Journal of the Royal Statistical Society Series B},
  \textbf{75}, 603--680.

\bibitem[{Friedman et~al.(2000)Friedman, Linial, Nachman and
  Pe'er}]{gene_expr_anal}
Friedman, N., Linial, M., Nachman, I. and Pe'er, D. (2000) Using {Bayesian}
  networks to analyze expression data.
\newblock In \textit{Proceedings of the Fourth Annual International Conference
  on Computational Molecular Biology}, 127--135.

\bibitem[{Friedman and Nachman(2000)}]{gp_nets}
Friedman, N. and Nachman, I. (2000) Gaussian process networks.
\newblock In \textit{Proceedings of the 16th Conference on Uncertainty in
  Artificial Intelligence}, 211–219.

\bibitem[{Frot et~al.(2019)Frot, Nandy and Maathuis}]{causal_lrps}
Frot, B., Nandy, P. and Maathuis, M.~H. (2019) Robust causal structure learning
  with some hidden variables.
\newblock \textit{Journal of the Royal Statistical Society: Series B},
  \textbf{81}, 459--487.

\bibitem[{Gardner et~al.(2018)Gardner, Pleiss, Weinberger, Bindel and
  Wilson}]{black_box}
Gardner, J., Pleiss, G., Weinberger, K.~Q., Bindel, D. and Wilson, A.~G. (2018)
  {GPyTorch}: Blackbox matrix-matrix {Gaussian} process inference with {GPU
  }acceleration.
\newblock In \textit{Advances in Neural Information Processing Systems}.

\bibitem[{Van~de Geer et~al.(2000)Van~de Geer, Gill, Ripley, Ross, Silverman,
  Williams and Stein}]{geer_one}
Van~de Geer, S., Gill, R., Ripley, B., Ross, S., Silverman, B., Williams, D.
  and Stein, M. (2000) \textit{Empirical Processes in M-Estimation}.
\newblock Cambridge Series in Statistical and Probabilistic Mathematics.

\bibitem[{Gyorfi et~al.(2003)Gyorfi, Kohler, Krzyzak and
  Walk}]{non_para_bounds}
Gyorfi, L., Kohler, M., Krzyzak, A.~K. and Walk, H. (2003) A distribution-free
  theory of nonparametric regression.
\newblock \textit{Journal of the American Statistical Association},
  \textbf{98}, 1084--1084.

\bibitem[{Harrington and Annunziata(2019)}]{nfkb1}
Harrington, B.~S. and Annunziata, C.~M. (2019) {NF-kB} signaling in ovarian
  cancer.
\newblock \textit{Cancers (Basel)}, \textbf{8}.

\bibitem[{Haughton(1988)}]{bic_consis}
Haughton, D. M.~A. (1988) {On the Choice of a Model to Fit Data from an
  Exponential Family}.
\newblock \textit{The Annals of Statistics}, \textbf{16}.

\bibitem[{Hoyer et~al.(2009)Hoyer, Janzing, Mooij, Peters and
  Sch\"{o}lkopf}]{hoyer_ident}
Hoyer, P., Janzing, D., Mooij, J.~M., Peters, J. and Sch\"{o}lkopf, B. (2009)
  Nonlinear causal discovery with additive noise models.
\newblock In \textit{Advances in Neural Information Processing Systems},
  vol.~21.

\bibitem[{Kalisch and B\"uhlmann(2007)}]{high_dim_PC}
Kalisch, M. and B\"uhlmann, P. (2007) Estimating high-dimensional directed
  acyclic graphs with the {PC}-algorithm.
\newblock \textit{Journal of Machine Learning Research}, \textbf{8}, 613--636.

\bibitem[{Koller and Friedman(2009)}]{koller2009probabilistic}
Koller, D. and Friedman, N. (2009) \textit{Probabilistic graphical models:
  principles and techniques}.
\newblock MIT press.

\bibitem[{Kusner et~al.(2017)Kusner, Loftus, Russell and Silva}]{counter_fair}
Kusner, M.~J., Loftus, J., Russell, C. and Silva, R. (2017) Counterfactual
  fairness.
\newblock In \textit{Advances in Neural Information Processing Systems},
  vol.~30.

\bibitem[{Mooij et~al.(2009)Mooij, Janzing, Peters and
  Sch{\"{o}}lkopf}]{mooij_ident}
Mooij, J.~M., Janzing, D., Peters, J. and Sch{\"{o}}lkopf, B. (2009) Regression
  by dependence minimization and its application to causal inference in
  additive noise models.
\newblock In \textit{Proceedings of the 26th Annual International Conference on
  Machine Learning, {ICML} 2009, Montreal, Quebec, Canada, June 14-18, 2009},
  vol. 382, 745--752.

\bibitem[{Mooij et~al.(2016)Mooij, Peters, Janzing, Zscheischler and
  Sch{{\"o}}lkopf}]{cause_pairs}
Mooij, J.~M., Peters, J., Janzing, D., Zscheischler, J. and Sch{{\"o}}lkopf, B.
  (2016) Distinguishing cause from effect using observational data: Methods and
  benchmarks.
\newblock \textit{Journal of Machine Learning Research}, \textbf{17}, 1--102.

\bibitem[{Pearl(2009)}]{pearl_causality}
Pearl, J. (2009) \textit{Causality: Models, Reasoning and Inference}.
\newblock Cambridge University Press, 2nd edn.

\bibitem[{Peters et~al.(2014)Peters, Mooij, Janzing and
  Sch\"{o}lkopf}]{additive_noise}
Peters, J., Mooij, J.~M., Janzing, D. and Sch\"{o}lkopf, B. (2014) Causal
  discovery with continuous additive noise models.
\newblock \textit{Journal of Machine Learning Research}, \textbf{15},
  2009–2053.

\bibitem[{Rasmussen and Williams(2006)}]{gp_book}
Rasmussen, C.~E. and Williams, C. K.~I. (2006) \textit{{Gaussian} Processes for
  Machine Learning}.
\newblock The MIT Press.

\bibitem[{Richardson and Spirtes(2002)}]{richardson2002}
Richardson, T. and Spirtes, P. (2002) Ancestral graph {Markov} models.
\newblock \textit{Annals of Statistics}, \textbf{30}, 962--1030.

\bibitem[{Robins et~al.(2000)Robins, Hernan and Brumback}]{epidemiology_models}
Robins, J., Hernan, M.~A. and Brumback, B. (2000) Marginal structural models
  and causal inference in epidemiology.
\newblock \textit{Epidemiology}, \textbf{11}, 550--60.

\bibitem[{Rudin(1974)}]{rudin1974functional}
Rudin, W. (1974) \textit{Functional Analysis}.
\newblock International series in pure and applied mathematics. Tata
  McGraw-Hill.

\bibitem[{Shah et~al.(2020)Shah, Frot, Thanei and Meinshausen}]{causal_pca}
Shah, R.~D., Frot, B., Thanei, G.-A. and Meinshausen, N. (2020) Right singular
  vector projection graphs: fast high dimensional covariance matrix estimation
  under latent confounding.
\newblock \textit{Journal of the Royal Statistical Society: Series B
  (Statistical Methodology)}, \textbf{82}, 361--389.

\bibitem[{Solus et~al.(2020)Solus, Wang, Matejovicova and Uhler}]{greedy_sp}
Solus, L., Wang, Y., Matejovicova, L. and Uhler, C. (2020) Consistency
  guarantees for permutation-based causal inference algorithms.
\newblock \textit{Biometrika}.

\bibitem[{Spirtes et~al.(2000)Spirtes, Glymour and Scheines}]{causality_book}
Spirtes, P., Glymour, C. and Scheines, R. (2000) \textit{Causation, Prediction,
  and Search}.
\newblock The MIT Press, 2nd edn.

\bibitem[{Wainwright(2019)}]{wainwright_2019}
Wainwright, M.~J. (2019) \textit{High-Dimensional Statistics: A Non-Asymptotic
  Viewpoint}.
\newblock Cambridge University Press.

\bibitem[{Wang and Fan(2017)}]{spoet}
Wang, W. and Fan, J. (2017) Asymptotics of empirical eigenstructure for high
  dimensional spiked covariance.
\newblock \textit{Annals of Statistics}, \textbf{45}, 1342--1374.

\bibitem[{Wang and Blei(2019)}]{mult_causes}
Wang, Y. and Blei, D.~M. (2019) The blessings of multiple causes.
\newblock \textit{Journal of the American Statistical Association},
  \textbf{114}, 1574--1596.

\end{thebibliography}

\section*{List of Figure Legends} \begin{description}
  \item[\cref{fig:exog_repar}]  The right-hand figure reparameterizes the model on the left such that the unobserved variable $H$ is a source in the graph.
The bold arrows represent the DAG $\Gtrue$ corresponding to the conditional distribution $\pr(X \mid H)$.
  
  \item[\cref{fig:pcss_mse}]  Maximum Mean-Squared Error (MSE) across all dimensions for estimating $\{s^{(n)}\}_{n=1}^N$ via PCSS. Twenty-five total simulations were performed for each dataset configuration.
  
  \item[\cref{fig:pcss_syn_visual}] Non-linear estimation of $E[X_i \mid H]$ via PCSS for nodes with at least one parent.

  \item[\cref{fig:parent_eval_vis}] Our parent set evaluation tasks. Solid arrows represent the set of true edges, and dotted arrows indicate a potential incorrect modifcation to the true parent set of a node.
  
  \item[\cref{fig:non_linear_par_tasks}] Results for the Wrong Parent Addition and Correct Parent Deletion tasks (lower values on the y-axis are better for both tasks).  The data are generated according to a non-linear SEM. 25 total simulations per dataset configuration were performed.
  
  \item[\cref{fig:nfkb1_vs_pcss}] The two genes most positively and negatively correlated with the transcription factor NFKB1, respectively. GENE-pcss refers to the total latent confounding variation estimated for that gene via PCSS. GENE refers to the observed values of the gene.
  
  \item[\cref{fig:remove_out_tf}] Both genes have a correlation greater than 0.4 with NFKB1 (left hand plot). After subtracting out  the confounding variation estimated using PCSS for each gene (denoted as "deconfounded" expression level), the genes are no longer correlated with the unobserved transcription factor NFKB1.
  
  \item[\cref{fig:ovarian_results}] The x-axis denotes the proportion of times a method scored the incorrect parent appended to the neighborhood set higher than just the neighborhood set  of a node (our proxy for the true parent set).
The y-axis denotes the proportion of times a method scored the full neighborhood set higher than the neighborhood set after removing out one of the neighbors. 
For LRPS, we provide the intermediate results for each covariance matrix outputted along its cross-validation path. 
The `x' for LRPS corresponds to the performance of the covariance matrix selected based on cross-validation. 
We compute the \emph{positive likelihood ratio} for each method which equals the ratio between our proxy for the true positive rate (y-axis) and false positive rate (x-axis). 
These ratios are as follows: \vanillabic=.26, \pcssbic=.33, \cam=.42, \camobs=.59, \decamfound= .72, \lrpsbic=.12.
  
\end{description}

\newpage

\appendix 

\section*{Supplementary Material}

\counterwithin{figure}{section}
\counterwithin{table}{section}

\section{Notation} \label{A:notation} In the following table, we provide an overview of the main notations used in the paper.

\begin{center}
\begin{tabular}{c|c|c}
    \multicolumn{3}{l}{\textbf{Random Variables}}
    \\
    \hhline{===}
    $X$ & Observed random vector, with dimension $p$ & \cref{sec:problem_setup}
    \\
    $H$ & Unobserved random vector, with dimension $K$ & \cref{sec:problem_setup}
    \\
    $\epsilon$ & Exogenous noise variables & \cref{sec:problem_setup}
    \\
    $S$ & Confounder sufficient statistics $\E[X \mid H]$ & \cref{sec:suff_stats}
    \\
    $U$ & Component of $X$ orthogonal to $\hilb_M$ & \cref{sec:poet_reduct}
    \\
    \multicolumn{3}{l}{}
    \\
    \multicolumn{3}{l}{\textbf{Distributions}}
    \\
    \hhline{===}
    $\Pobs$ & Marginal distribution over $X$ & \cref{sec:problem_setup}
    \\
    $\bbQ$ & Joint distribution over $X, H$ & \cref{sec:problem_setup}
    \\
    \multicolumn{3}{l}{}
    \\
    \multicolumn{3}{l}{\textbf{Graphs}}
    \\
    \hhline{===}
    $\Gtrue$ & Conditional DAG associated with $\bbQ(X \mid H)$ & \cref{sec:problem_setup}
    \\
    $\overline{\cG}$ & Full DAG over $X, H$ & \cref{sec:problem_setup}
    \\
    $\Parents{\cG}{j}$ & Parents of $X_j$ in $\cG$ &\cref{sec:problem_setup}
    \\
    $\Ancestors{\cG}{j}$ & Ancestors of $X_j$ in $\cG$ & \cref{sec:decam}
    \\
    $\dagset$ & Set of possible DAGs & \cref{sec:decam}
    \\
    \multicolumn{3}{l}{}
    \\
    \multicolumn{3}{l}{\textbf{Data}}
    \\
    \hhline{===}
    $N$ & Number of samples & \cref{sec:problem_setup}
    \\
    $p$ & Number of observed nodes & \cref{sec:problem_setup}
    \\
    $\bbX$ & Matrix of samples, $\in \R^{N \times p}$ & \cref{sec:poet_reduct}
    \\
    $\bbS$ & Matrix of confounder sufficient statistics, $\in \R^{N \times p}$  & \cref{sec:poet_reduct}
    \\
    $\bbU$ & $\in \R^{N \times p}$ & \cref{sec:poet_reduct}
    \\
    \multicolumn{3}{l}{}
    \\
    \multicolumn{3}{l}{\textbf{Function Spaces}}
    \\
    \hhline{===}
    $\hilb$ & Hilbert space containing $X_j$ & \cref{sec:poet_reduct}
    \\
    $M$ & Dimensionality of $\hilb$ & \cref{sec:poet_reduct}
    \\
    $\Phi$ & Feature map on $H$ & \cref{sec:poet_reduct}
    \\
    $\fkspace_j$ & RKHS used to model $f_{ij}$, $1 \leq j \leq p$ & \cref{sec:poet_reduct}
    \\
    $\rkspace_j$ & RKHS used to model $r_j$, $1 \leq j \leq p$ & \cref{sec:poet_reduct}
    \\
\end{tabular}
\end{center}

\section{Proofs} \label{A:proofs} 

\subsection{\edit{The canonical exogenous DAG}} \label{A:canon_exog_dag}

We provide a brief argument to verify that our direct definition of the canonical exogenous DAG is equal to the canonical DAG of the latent projection defined by \citet{exog_dag}.
Our direct definition allows us to avoid the concepts of simplicial complexes and facets that are necessary to define the latent projection; see \citet{exog_dag} for details.

\begin{nprop}
Let $\Gnat$ be a DAG over observed nodes $X$ and unobserved nodes $\bar{H}$.
Let $\cG_1$ be the canonical exogenous DAG for $\Gnat$.
Let $\cG'$ be the latent projection of $\cG$ onto $X$, and let $\cG_2$ be the canonical DAG associated with $\cG'$.
Then $\cG_1 = \cG_2$.
\end{nprop}

\begin{proof}
By definition of the latent projection, $\cG'$ contains $X_i \to X_j$ if and only if $X_i$ has a completely hidden path to $X_j$.
Furthermore, by definition of the canonical DAG, $\cG_2$ contains an edge $X_i \to X_j$ if and only if $\cG'$ contains the edge $X_i \to X_j$.
Thus, $\cG_1$ and $\cG_2$ have the same set of edges over $X$.

Next, $\cG'$ contains the bidirected face $C$ if and only if the nodes $X_C$ share a hidden common cause.
Furthermore, $\cG_2$ contains the node $H_k$ if and only if $C_k$ is a bidirected facet in $\cG'$, i.e., if $X_C$ is a maximal set of nodes with a hidden common cause.
In this case, $\cG_2$ has the edge $H_k$ to $X_i$ if and only if $i \in C$.
Thus, $\cG_1$ and $\cG_2$ have the same set of edges from $H$ to $X$.

By both constructions, there are no edges from $X$ to $H$ or between nodes in $H$, thus, $\cG_1$ and $\cG_2$ are equal.
\end{proof}

\subsection{Proof of \cref{lem:linear_s}}
\begin{proof}
By \cref{eq:linear_sem}, 
\begin{equation}
X = (I - B)^{-1} \epsilon + (I - B)^{-1} \Theta H.
\end{equation}
Hence, for all $1 \leq j \leq p$,
\begin{equation}
	\begin{split}
		X_j 
        &= 
        \sum_{i \in \Parents{\Gtrue}{j}} B_{ij} X_i + \Theta_i^T H 
        \\
		&= 
        \sum_{i \in \Parents{\Gtrue}{j}} B_{ij} [(I - B)^{-1} \epsilon + (I - B)^{-1} \Theta H]_i     + \Theta_j^T H.
	\end{split}
\end{equation}
Taking expectations with respect to $H$,
\begin{equation} \label{eq:linear_s}
	\begin{split}
		S_j &= \E[X_j \mid H] \\
		&= 
        \E \left[\sum_{i \in \Parents{\Gtrue}{j}} B_{ij} [(I - B)^{-1} \epsilon + (I - B)^{-1} \Theta H]_i    + \Theta_j^T H \mid H \right] 
        \\
		&= 
        \sum_{i \in \Parents{\Gtrue}{j}} \E \left[ B_{ij} [(I - B)^{-1} \epsilon + (I - B)^{-1} \Theta H]_i  \mid H \right] + \Theta_j^T H 
        \\
		&= 
        \sum_{i \in \Parents{\Gtrue}{j}} [(I - B)^{-1} \Theta H]_i   + \Theta_j^T H \qquad (\text{since $\epsilon \indep H$}),
	\end{split}
\end{equation}
where $[a]_i$ denotes the $i$th component of a vector $a$. 
Let $\Theta_{j, :}$ denote the $j$th row of $\Theta$. 
Recalling that $D_j = \Theta_{j, \cdot}^T\, H = S_j - r_j(S_1, \cdots, S_{j-1})$, 
\begin{equation} \label{eq:linear_confound_parents}
	\begin{split}
	    r_j(S_1, \cdots, S_{j-1}) 
        &=  
        \sum_{i \in \Parents{\Gtrue}{j}} B_{ij}  [(I - B)^{-1} \Theta H]_i. 
        \\
		&= \sum_{i \in \Parents{\Gtrue}{j}} B_{ij}  S_i \quad \text{(by \cref{eq:linear_s})}.
	\end{split}
\end{equation}
Therefore, the SCM in \cref{eq:linear_sem} can be re-written as
\begin{equation}
	(X_j - S_j) = \sum_{i \in \Parents{\Gtrue}{j}} B_{ij} (X_i - S_i) + \epsilon_j.
\end{equation}
%
\end{proof}

\subsection{Proof of \cref{thm:superdag}}

\begin{nlem}
For every $1 \leq j \leq p$, there exists an $f_j$ such that 
\begin{equation} \label{eq:fn_of_d}
	X_j = \epsilon_j + D_j + f_j(\epsilon_1, D_1, \cdots, \epsilon_{j-1}, D_{j-1})
\end{equation}
for the SEM in \cref{eq:cam_gen_model}.
\end{nlem}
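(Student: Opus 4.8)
The plan is to prove this by induction on $j$, exploiting the assumption that $\pi^* = (1 \cdots p)$ is a consistent topological ordering of $\Gtrue$, so that every parent of $x_j$ has index strictly less than $j$. The inductive statement I would actually carry is slightly stronger than the displayed one: I would show that for each $j$, the variable $x_j$ is a deterministic function of $(\epsilon_1, d_1, \ldots, \epsilon_j, d_j)$ alone, of the specific form $x_j = \epsilon_j + d_j + f_j(\epsilon_1, d_1, \ldots, \epsilon_{j-1}, d_{j-1})$. Phrasing the hypothesis this way is what makes the induction close, since it lets me substitute for the parent terms directly.

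For the base case $j=1$, the node $x_1$ is a source (it has no parents, as no index precedes it in $\pi^*$), so \cref{eq:cam_gen_model} reduces to $x_1 = d_1 + \epsilon_1$; I take $f_1 \equiv 0$. For the inductive step, I assume the claim for all $i < j$, which by the strengthened hypothesis means each such $x_i$ is a function of $(\epsilon_1, d_1, \ldots, \epsilon_i, d_i)$, and in particular of $(\epsilon_1, d_1, \ldots, \epsilon_{j-1}, d_{j-1})$. Starting from
\begin{equation*}
x_j = \sum_{x_i \in \pa{x_j}{\Gtrue}} f_{ij}(x_i) + d_j + \epsilon_j,
\end{equation*}
I note that every $x_i$ appearing in the sum has $i < j$, so I may substitute the inductive representation of each $x_i$ into $f_{ij}(x_i)$. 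The resulting composition $\sum_{x_i \in \pa{x_j}{\Gtrue}} f_{ij}(x_i)$ is then a function depending only on $(\epsilon_1, d_1, \ldots, \epsilon_{j-1}, d_{j-1})$; I define $f_j$ to be exactly this function, which yields the desired form and completes the induction.

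I do not expect a genuine obstacle here: the argument is essentially bookkeeping for function composition, and the only point requiring care is justifying that the parents of $x_j$ all have strictly smaller index, which is immediate from the consistent topological ordering assumption. The one modeling subtlety worth making explicit is that I am treating $d_j = \sum_{k=1}^K g_{kj}(h_k)$ as a single derived quantity rather than unpacking it into the latent $h$, so that $(\epsilon_\ell, d_\ell)_{\ell \le j}$ are the natural "coordinates" in which $x_j$ is expressed; this is exactly the representation needed downstream when taking conditional expectations with respect to $h$ in the proof of \cref{thm:superdag}.
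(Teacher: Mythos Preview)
Your proposal is correct and follows essentially the same argument as the paper: an induction along the topological order, substituting the inductive representation of each parent $x_i$ into $f_{ij}(x_i)$ and defining $f_j$ to be the resulting composite. The paper phrases the induction as being on the number of nodes $p$ rather than on the index $j$, but with the fixed ordering $\pi^* = (1,\ldots,p)$ this is the same thing, and your explicit statement of the strengthened hypothesis (that each $x_i$ is a deterministic function of $(\epsilon_\ell, d_\ell)_{\ell \le i}$) is exactly what the paper uses implicitly.
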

\begin{proof}
The proof follows by inducting on the number of nodes in $\Gtrue$. 
For $p=1$, $X_1 = D_1 + \epsilon_2$, and \cref{eq:fn_of_d} trivially holds. 
For $p=2$, 
\begin{equation}
	\begin{split}
		X_2 
        &=  
        D_2 + \epsilon_2 + f_{12}(X_1) \\
		&= 
        D_2 + \epsilon_2 + f_{12}(D_1 + \epsilon_1).
	\end{split}
\end{equation}
The claim holds by setting $f_2(\epsilon_1, d_1) = f_{12}(d_1 + \epsilon_1)$. Suppose that \cref{eq:fn_of_d} holds for all DAGs with at most $p-1$ nodes. Then it suffices to prove that \cref{eq:fn_of_d} holds for all DAGs with $p$ nodes. For this, note that
\begin{equation}
	\begin{split}
			X_p 
            &= 
            D_p + \epsilon_p + \sum_{i \in \Parents{\Gtrue}{p}} f_{ij}(X_i) 
            \\
			&= 
            D_p + \epsilon_p + \sum_{i \in \Parents{\Gtrue}{p}} f_{ij}(\epsilon_i + D_i + f_j(\epsilon_1, D_1, \cdots, \epsilon_{i-1}, D_{i-1})),
	\end{split}
\end{equation}
where the last line follows from the inductive hypothesis. Thus, by setting
\begin{equation*}
	f_p(\epsilon_1, d_1, \cdots, \epsilon_{j-1}, d_{p-1}) = \sum_{i \in \Parents{\Gtrue}{p}} f_{ij}(\epsilon_i + d_i + f_j(\epsilon_1, d_1, \cdots, \epsilon_{i-1}, d_{i-1})),
\end{equation*}
the claim follows.
\end{proof}
\begin{ncor} \label{cor:suff_dir_confound} For an SEM in the form of \cref{eq:cam_gen_model}, it holds that 
\begin{equation*}
	\E[X_j \mid H] = \E[X_j \mid D_1, \cdots, D_j].
\end{equation*}
\end{ncor}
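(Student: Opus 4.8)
The plan is to lean on the lemma just proved, which expresses $x_j$ entirely in terms of the noise terms and the direct confounding effects, and then close the argument with the tower property of conditional expectation. First I would recall from \cref{eq:fn_of_d} that
\begin{equation*}
  x_j = \epsilon_j + d_j + f_j(\epsilon_1, d_1, \dots, \epsilon_{j-1}, d_{j-1}),
\end{equation*}
so that $x_j$ is a deterministic function of $(\epsilon_1,\dots,\epsilon_j, d_1,\dots,d_j)$. Since each $d_i = \sum_{k=1}^K g_{ki}(h_k)$ is a deterministic function of $h$, and since $\epsilon \indep h$, conditioning on $h$ fixes $d_1,\dots,d_j$ at their deterministic values while leaving the conditional law of the $\epsilon_i$ equal to their marginal.

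The key step is to integrate out the noise. Taking the expectation given $h$ and using $\epsilon \indep h$, I would write
\begin{equation*}
  \E[x_j \mid h] = d_j + \E[\epsilon_j] + \E[\, f_j(\epsilon_1, d_1, \dots, \epsilon_{j-1}, d_{j-1}) \mid d_1, \dots, d_{j-1}\,],
\end{equation*}
where the last expectation averages over $\epsilon_1,\dots,\epsilon_{j-1}$ with $d_1,\dots,d_{j-1}$ held fixed. This exhibits $\E[x_j \mid h]$ as a measurable function $\rho_j(d_1,\dots,d_j)$ of $(d_1,\dots,d_j)$ alone; equivalently, $\E[x_j \mid h]$ is $\sigma(d_1,\dots,d_j)$-measurable.

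Finally I would invoke nested conditioning. Because every $d_i$ is a function of $h$, we have $\sigma(d_1,\dots,d_j) \subseteq \sigma(h)$, so the tower property gives
\begin{equation*}
  \E[x_j \mid d_1, \dots, d_j] = \E[\, \E[x_j \mid h] \mid d_1, \dots, d_j \,] = \E[\, \rho_j(d_1,\dots,d_j) \mid d_1,\dots,d_j \,] = \rho_j(d_1,\dots,d_j),
\end{equation*}
which equals $\E[x_j \mid h]$ by the previous display, as claimed. The main obstacle is the middle step: justifying that $\E[x_j \mid h]$ depends on $h$ only through $(d_1,\dots,d_j)$. This is exactly where the independence $\epsilon \indep h$ is essential, as it allows the noise variables to be integrated out without reintroducing any residual dependence on $h$; once this functional reduction is in hand, the tower-property step is routine and requires no assumption on the mean of $\epsilon_j$ (any nonzero $\E[\epsilon_j]$ is simply absorbed into $\rho_j$).
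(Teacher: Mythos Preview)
Your argument is correct and is precisely the reasoning the paper intends: the paper states this result as an immediate corollary of the representation lemma \cref{eq:fn_of_d} without writing out a separate proof, and your tower-property argument simply makes explicit the two ingredients the label ``corollary'' is pointing to, namely that $x_j$ is a function of $(\epsilon_1,\dots,\epsilon_j,d_1,\dots,d_j)$ and that $\epsilon \indep h$ lets one integrate out the noise so that $\E[x_j\mid h]$ depends on $h$ only through $(d_1,\dots,d_j)$.
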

\noindent We prove \cref{thm:superdag} below using \cref{cor:suff_dir_confound}.
\begin{proof}
\noindent By \cref{eq:cam_gen_model}, it suffices to show that there exists an $r_j$ such that $D_j = S_j - r_j(S_1, \cdots, S_{j-1})$ for every $1 \leq j \leq p$. 
We prove this claim by inducting on the number of nodes in $\Gtrue$. 
For $p=1$, $X_1 = D_1 + \epsilon_1$. 
Since $H \indep \epsilon$, $S_1 = D_1$, and the claim holds by setting $R_1 = 0$. 
For $p=2$, $X_2 = \epsilon_2 + D_2 + f_{12}(X_1)$, where $f_{12}$ may equal $0$ if $X_1$ is not a parent of $X_2$. 
Then,
\begin{equation}
	\begin{split}
		S_2 &= \E[X_2 \mid H] 
        \\
		&= 
        \E[X_2 \mid D_1, D_2] \quad \text{(by \cref{cor:suff_dir_confound}}) 
        \\
		&= 
        \E[D_2 + \epsilon_2 + f_{12}(X_1) \mid D_1, D_2] 
        \\
		&= D_2 + \E[ f_{12}(X_1) \mid D_1] 
        \\
		&= D_2 + \E[ f_{12}(X_1) \mid S_1] \quad \text{(since $S_1 = D_1$)}.
	\end{split}
\end{equation}
Hence, $D_2 = S_2 - \E[ f_{12}(X_1) \mid S_1]$. 
The claim holds by setting $r_2(S_1) = \E[ f_{12}(X_1) \mid S_1]$. 
Suppose that $D_j = S_j - r_j(S_1, \ldots, S_{j-1})$ for all SEMs in the form of \cref{eq:cam_gen_model} with at most $p-1$ nodes. It suffices to show that that there exists an $r_p$ such that $d_p = s_p - r_p(s_1, \ldots, s_{p-1})$ for an arbitrary SEM in the form of \cref{eq:cam_gen_model} with $p$ nodes.

For this, consider the subgraph formed from $x_1, \ldots, x_{p-1}$. 
%
Since $X_p$ is a sink node, $\pr(X_1, \ldots, X_{p-1})$ factorizes according to a DAG. Hence, by the inductive hypothesis, there exists $\{r_j\}_{j=1}^{p-1}$ such that 
\begin{equation} \label{eq:induct_hyp}
	D_j = S_j - r_j(S_1, \ldots, S_{j-1}) \quad \forall j = 1, \ldots, p-1.
\end{equation}
Now, note that
\begin{equation}
	\begin{split}
		S_p 
        &= \E[X_p \mid H] 
        \\
		&= \E[X_p \mid D_1, D_2, \ldots, D_p] \quad \text{(by \cref{cor:suff_dir_confound}}) 
        \\
		&= \E \left[ D_p + \epsilon_p + \sum_{i \in \Parents{\Gtrue}{p}} f_{ij}(X_i) \mid D_1, D_2, \ldots, D_p \right] 
        \\
		&= D_p + \E \left[ \sum_{i \in \Parents{\Gtrue}{p}} f_{ij}(X_i) \mid D_1, D_2, \cdots, D_{p-1} \right] 
        \\
		&= D_p + \E \left[ \sum_{i \in \Parents{\Gtrue}{p}} f_{ij}(X_i) \mid \{ S_i - r_i(S_1, \cdots, S_{i-1})\}_{i=1}^{p-1} \right] \quad \text{(by \cref{eq:induct_hyp})}.
	\end{split}
\end{equation}
Thus, by setting 
\begin{equation} \label{eq:r_decomp}
r_p(s_1, \cdots, s_{p-1})
= 
\E \left[ \sum_{x_i \in \Parents{\Gtrue}{p}} f_{ij}(X_i) \mid \{s_i - r_i(s_1, \cdots, s_{i-1})\}_{i=1}^{p-1} \right],
\end{equation}
the result follows.
\end{proof}

\subsection{Proof of \cref{thm:pcss_convg}} \label{A:proof_spect_deconfound}
\raj{We use a subset of the technical assumptions from Section 3 of \citet{poet} to prove the result. 
We let $\Sigma_u = \cov(U) \in \R^{p \times p}$, where $U$ is defined in \cref{eq:poet_reduction}. 
In this section, we work with the joint distribution over $N$ samples.
In particular, we consider a sequence of random vectors $(X^{(1)}, H^{(1)}), \ldots, (X^{(N)}, H^{(N)})$, whose joint distribution is the product distribution over $N$ copies of $\pr(X, H)$.
\begin{nassum} \label{asum:3.2iii} (Assumption 2(c) from \citet{poet})
There exist constants $r_1, r_2 > 0$ and $b_1, b_2 > 0$, such that for any $q > 0$, \raedit{$1 \leq j \leq p$ and $1 \leq m \leq M$},
\begin{equation*}
    \pr(|U_j| > q) 
    \leq 
    \exp\{(-q / b_1)^{r_1}\}, 
    \qquad 
    \pr(|\phi_m(H)| > q) 
    \leq 
    \exp\{(-q / b_2)^{r_2}\}
\end{equation*}
\end{nassum}
As discussed in \citet{poet}, \cref{asum:3.2iii} requires exponential-type tails for the purposes of applying large deviation theory. \citet{poet} also states, in their working paper, that $\|\Sigma_u\|_1 < c_2$ in \cref{asum:3.2ii} can be weakened to just requiring that $\lambda_{max}(\Sigma_u) < c_2$ if the number of factors $J$ to use in PCA is known. 
The final assumption below bounds the strength of the latent factor loadings in $\Psi$, and the convergence of $[u^{(n)}]^T u^{(t)}$ towards its expectation.
\begin{nassum} \label{assum:3.4} (Assumption 4 from \citet{poet}) There exists a constant $C > 0$ such that for all $1 \leq j \leq p$, $1 \leq n, t \leq N$:
\begin{enumerate}
    \item $\max_{1 \leq m \leq M}  [\psi_j]_m < C$,
    \item $\E \left[p^{-1/2} \left([U^{(n)}]^T U^{(t)} - \E[[U^{(n)}]^T U^{(t)}] \right) \right]^4 < C$
    \item $\E\| p^{-1/2} \sum_{j=1}^p \Psi_j U_{j}^{(n)} \|^4 < M$.
\end{enumerate}
\end{nassum}
By Corollary 1 of \citet{poet}, it suffices to show that Assumption 2(a) and Assumption 3 in \citet{poet} hold. For $1 \leq j \leq p$,
\begin{equation*}
\begin{split}
    \E[U_j] &=  \E[X_j - \E[X_j \mid H]] \\
                &=\E[X_j] - \E[ \E[X_j \mid H]] \\
                &= 0.
\end{split}
\end{equation*}
By properties of conditional expectation, $U_j$ belongs to the orthogonal complement of $\hilb_M$. 
Hence, $\E[U_j \phi_m(H)] = 0$ for all $1 \leq j \leq p$. 
Thus, Assumption 2(a) holds. Since $\{(x^{(n)}, h^{(n)})\}_{n=1}^N$ are drawn iid from $\pr$, then $(X^{(n)}, H^{(n)})$ is independent of $(X^{(t)}, H^{(t)})$ when $n \neq t$. 
Hence, the strong-mixing coefficient defined in Equation 3.1 of \citet{poet} equals 0. Therefore, Assumption 3 in \cite{poet} indeed holds.
}

\subsection{Proof of \cref{eq:gp_gen_formula}}

We follow \citet{gp_nets} to compute the marginal likelihood. By \cref{thm:superdag}, the marginal likelihood decomposes as 
\begin{equation*}
    \begin{split}
        \pr(\bbX \mid \cG, \bbS, \sigma^2) 
        &= 
        \int \pr(\bbX \mid \cG, \bbS, \Omega_\cG, \sigma^2) d\pr(\Omega_\cG \mid \cG) 
        \\
        &= \int \prod_{j=1}^p \pr(\bbX_{:,j} - \bbS_{:,j} \mid \bbX_{\Parents{\cG}{j}}, \bbS, \{f_{ij} \}_{i \in \Parents{\cG}{j}}, r_j, \sigma^2) d\pr( \{f_{ij} \}_{i \in \Parents{\cG}{j}}, r_j) 
        \\
        &= \prod_{j=1}^p \int \pr(\bbX_{:,j} - \bbS_{:,j} \mid \bbX_{\Parents{\cG}{j}}, \bbS, \{f_{ij} \}_{i \in \Parents{\cG}{j}}, r_j, \sigma^2) d\pr( \{f_{ij} \}_{i \in \Parents{\cG}{j}}, r_j) 
        \\
        &= \prod_{j=1}^p \pr(\bbX_{:,j}- \bbS_{:,j} \mid \bbX_{\Parents{\cG}{j}}, \bbS, \sigma^2).
    \end{split}
\end{equation*}
Hence, 
\begin{equation*}
    \log \pr(\bbX \mid \cG, \bbS, \sigma^2) = \sum_{j=1}^p \log \pr(\bbX_{:,j}- \bbS_{:,j} \mid \bbX_{\Parents{\cG}{j}}, \bbS, \sigma^2).
\end{equation*}
The proof now follows from Equation 2.30 of \citet{gp_book}.

\subsection{\edit{Proof of \cref{thm:decam_consis}}} \label{A:proof_consis}
\noindent \textbf{Notation.}
Throughout we let $\| \ba \|_N^2 = \frac{1}{N} \sum_{n=1}^N (a^{(n)})^2$ denote the empirical norm of a vector $\ba \in \R^N$, and $\pr_N$ the empirical distribution of $\{(x^{(n)}, s^{(n)}) \}_{n=1}^N$. 
For simplicity of notation, we assume that, for all $1 \leq j \leq p$ and $i \in \Parents{\cG}{j}$, we have $\fkspace_{ij} = \fkspace_j$ for some $\fkspace_j$.
We denote the maximum likelihood estimate (MLE) of the noise variances and unknown functions in $\Omega_G$ when using the confounder sufficient statistics $\hat{s}$ estimated from \cref{algo:pcss_est} as 
\begin{equation} \label{eq:mle_emp_poet}
    \begin{split}
    & (\hat{\sigma}_j^\cG)^2 \coloneqq \left\| x_j -  \sum_{i \in \Parents{\cG}{j}} \hat{f}_{ij}^\cG(x_i) - \hat{s}_j + \hat{r}_j^\cG(\hat{s}_{\Parents{\cG}{j}}) \right\|_N^2 \\
    \{ \hat{f}_{ij}^\cG \}_{i \in  \Parents{\cG}{j}}, \hat{r}_j^\cG = &\argmin_{\ g_{ij} \in \fkspace_j, \ w_j \in \rkspace_j^\cG}  \left\| x_j - \sum_{i \in  \Parents{\cG}{j}}g_{ij}(x_i) - \hat{s}_j - w_j(\hat{s}_{\Parents{\cG}{j}})  \right \|^2_N
    \end{split}
\end{equation}
for $1 \leq j \leq p$. 
When the true confounder sufficient statistics $s$ are used instead of $\hat{s}$, we let
\begin{equation} \label{eq:mle_emp_exact}
    \begin{split}
    & (\bar{\sigma}_j^\cG)^2 \coloneqq \left\| x_j - \sum_{i \in \Parents{\cG}{j}} \bar{f}_{ij}^\cG(x_i) - s_j + \bar{r}_j^\cG(s_{\Parents{\cG}{j}}) \right\|_N^2 \\
    \{ \bar{f}_{ij}^\cG \}_{i \in  \Parents{\cG}{j}}, \bar{r}_j^\cG = &\argmin_{\ g_{ij} \in \fkspace_j, \ m_j \in \rkspace_j^\cG}  \left\| x_j - \sum_{i \in  \Parents{\cG}{j}} g_{ij}(x_i) - s_j - m_j(s_{\Parents{\cG}{j}})  \right\|^2_N.
    \end{split}
\end{equation}
denote the corresponding maximum likelihood estimates for $1 \leq j \leq p$. Finally, for ease of notation later on, we let
\begin{equation} \label{eq:def_h}
    \begin{split}
    h^\cG_j(x_{\Parents{\cG}{j}}, s_{\Parents{\cG}{j}}) &=  \sum_{i \in \Parents{\cG}{j}} f_{ij}^\cG(x_i) - s_j + r_j^\cG(s_{\Parents{\cG}{j}}) \\
     \hat{h}^\cG_j(x_{\Parents{\cG}{j}}, \hat{s}_{\Parents{\cG}{j}}) &=  \sum_{i \in \Parents{\cG}{j}} \hat{f}_{ij}^\cG(x_i) - \hat{s}_j + \hat{r}_j^\cG(\hat{s}_{\Parents{\cG}{j}}) \\
      \bar{h}^\cG_j(x_{\Parents{\cG}{j}}, s_{\Parents{\cG}{j}}) &= \sum_{i \in \Parents{\cG}{j}} \bar{f}_{ij}^\cG(x_i) - s_j + \bar{r}_j^\cG(s_{\Parents{\cG}{j}}).
    \end{split}
\end{equation}
for $1 \leq j \leq p$. To establish a uniform law of large numbers, we will make use of the following result from \citet{wainwright_2019}.
\begin{nlem} \label{lem:max_subgaus} (Exercise 2.12 in \citet{wainwright_2019}) Let $\{ X_i \}_{i=1}^n$ be a sequence of zero-mean random variables, each sub-Gaussian with parameter $\sigma$. Then, for $\delta > 0$,
\begin{equation*}
    \mathbb{P}\left(\max_{i=1, \cdots, n} |X_i| \geq 2\sigma\sqrt{\log 2 n} + \delta\right) \leq 2\exp\left(-\frac{\delta^2}{2\sigma^2}\right).
\end{equation*}

\end{nlem}

\noindent \textbf{Discussion of Additional Assumptions.} \begin{nassum} \label{assum:sparsity}
(sparsity condition) There exists a constant $\gamma < \infty$ such that $\Gtrue \in \dagset$ for all $p$, where $\dagset = \{\cG: |\Parents{\cG}{i}| \leq \gamma \}$.
\end{nassum}
\begin{nassum} (Function class behavior) \label{assum:complex} Let $\fkspace_j^\cG = \bigoplus_{i \in \Parents{\cG}{i}} \fkspace_i$, $1 \leq j \leq p$, and assume all constants $\{C_i\}_{i=1}^6$ below do not depend on $p$ and are less than $\infty$.
\begin{enumerate}
    \item There exists a constant $C_1$ such that for all $G \in \dagset$, $\mathrm{rank}(\fkspace_j) \leq C_1$ and $\mathrm{rank}(\rkspace_j^\cG) \leq C_1$ for all $1 \leq j \leq p$ and $G \in \dagset$.
    \item There exists a constant $C_2$ such that
    \begin{equation*}
        \max_{G \in \dagset}  H\left(u, \fkspace_j^\cG \bigoplus  \rkspace_j^\cG, L^1(\pr) \right) < C_2, \quad \max_{G \in \dagset}  H\left(u, \fkspace_j^\cG \bigoplus  \rkspace_j^\cG, L^4(\pr) \right) < C_2,
    \end{equation*}
    for all $1 \leq j \leq p$ and $u > 0$, where $H$ is the log-bracketing number of the function class with respect to the $L_1$ and $L_4$ norms, and $u > 0$ denotes the largest distance between any two functions in the bracketing of the function class.
    \item There exists a constant $C_3$ such that for $1 \leq j \leq p$ the following quantities are all bounded above by $C_3$:
    \begin{equation*}
        \E|X_j|^4, \quad  \E|S_j|^4, \quad \sup_{f \in \fkspace_j} \E|f(X_j)|^4, \quad  \max_{\cG \in \dagset} \sup_{r \in \rkspace_j^\cG} \E|r(S_{\Parents{\cG}{j}})|^4.
    \end{equation*}
    \item There exists constants $C_5$ and $C_6$ such that
    \begin{equation*}
        \max_{1 \leq j \leq p} 
        \max_{\cG \in \dagset} 
        \sup_{v_j^\cG \in \fkspace_j^\cG \bigoplus \rkspace_j^\cG} 
        \rho(m_j^\cG) 
        \leq C_4, \ \rho(M_j^\cG) = 2C_5\E_{\pr}[\exp(|M_j^\cG|/C_5) - 1 - |M_j^\cG|/C_5], 
    \end{equation*}
    where $M_j^\cG = (X_j - S_j - v_j^\cG(X_{\Parents{\cG}{j}}, S_{\Parents{\cG}{j}}))^2$.
    \item There exists a constant $C_5$ such that $\| \nabla r \|_2 \leq C_6$ for all $r \in \rkspace_j$, $1 \leq j \leq p$.
    \item There exists constants $0 < C_7, C_8 < \infty$ such that 
    \begin{equation*}
        \pr(||S_j| - \E[|S_j|]| > t) \leq C_7\exp(-C_8t^2),
    \end{equation*}
    for all $1 \leq j \leq p$.
\end{enumerate}
\end{nassum}
\begin{nassum} \label{assum:var_pos} There exists a constant $C_4$ such that
$\min_{\cG \in \dagset} \min_{1 \leq j \leq p}(\sigma_j^\cG)^2 \geq C_4 > 0$.
\end{nassum}
In \citet{cam}, the authors assume that the size of the neighborhood set of any observed node is uniformly bounded. Since the true parent set of a node is contained in its neighborhood set, \cref{assum:sparsity} is a weaker condition.
\cref{assum:gap} and \cref{assum:complex}(a) together imply that the true DAG still has the smallest negative log-likelihood, even in the misspecified setting when $\fkspace_j \subsetneq \fspace_j$. As shown in \citet{cam}, if $\fkspace_j $ is sufficiently close to $\fspace_j$ (i.e., by setting $C_1$ large enough), then assuming $\Gtrue$ still has the smallest negative log-likelihood over functions in $\fkspace_j$ is weak. \cref{assum:complex}(b) is the analogue of Assumption B3(ii) in \citet{cam}. 
Assumption B3(ii) in \citet{cam} is a smoothness assumption on the function class) used to prove that the log-bracketing number with respect to the $L^4$ norm is bounded. 
If we make the same smoothness assumption, then \cref{assum:complex}(b) holds by following the proof of Theorem 3 in \citet{cam}. 
\cref{assum:complex}(c) and \cref{assum:complex}(d) (which assumes $m_j^\cG$ has exponential moments) are also used in \citet{cam} to establish a uniform law of large numbers. \cref{assum:complex}(e) assumes that the function class $\rkspace_j$ has bounded gradients, and \cref{assum:complex}(f) assumes that the confounder sufficient statistics are sub-Gaussian. 
\cref{assum:var_pos} is a technical regularity assumption that ensures a log-likelihood exists (since the log of zero is undefined). \\

\noindent \textbf{Proof of \cref{thm:decam_consis}.} A key difference between our consistency proof and the proof in \citet{cam} is accounting for the additional estimation error of estimating the confounder sufficient statistics $s$ via PCA. 
Once we bound the error in empirical log-likelihood scores when using $\hat{s}$ instead of $s$, our proof closely follows the proof of Theorem 3 in  \citet{cam}. 
Before we can follow the proof technique in \citet{cam}, we show below that it suffices to prove the consistency of the MLE (a frequentist-based score) to prove consistency of the DeCAMFounder (a Bayesian-based score).  \\ 

\noindent \textit{Reduction to Proving the Consistency of the MLE.} The DeCAMFounder score of a DAG equals the sum of the marginal log-likelihood scores of each node. 
Under \cref{assum:complex}(a), each observed node belongs to a finite dimensional exponential family. 
Hence, by \citet{bic_consis}, the marginal log-likelihood of the parent set $\Parents{\cG}{j}$ for node $X_j$ converges to the negative BIC score of $\Parents{\cG}{j}$. 
The BIC score equals the sum of empirical log-likelihood score associated with the MLE parameters in \cref{eq:mle_emp_poet}, and a penalty on model complexity which equals $\log N$ times the number of parameters. 
When a DAG $\cG$ has a different partial ordering than $\Gtrue$, then the difference in expected log-likelihood between $\cG$ and $\Gtrue$ is non-zero. 
Hence, the empirical log-likelihood scores grows as $O(N)$ while the penalty grows as $O(\log N )$. 
Since the  log-likelihood score in BIC grows faster than the penalty, it suffices to show that minimizing the empirical negative log-likelihood of a DAG yields a DAG in $\Pi^*$ to prove order consistency of the DeCAMFounder. \\

\noindent \textit{Inequalities Sufficient for Proving Consistency.} As we show below, if there exists a sufficient "gap" $\Delta_N$ (to be defined shortly) in  empirical log-likelihood scores between the best incorrect DAG and the worst super-DAG of $\Gtrue$, then the DeCAMFounder converges to a DAG with a correct ordering asymptotically. 

First notice each DAG in $\Pi_*$ is a super-DAG of $\Gtrue$. 
Consequently, $\sigma^\cG_j = \sigma^{\Gtrue}_j$ for all $\cG \in \Pi_*$, $1 \leq j \leq p$. 
Suppose there exists a $\Delta_N$ such that,
\begin{equation} \label{eq:first_ineq}
    \sum_{j=1}^p \log\left( \hat{\sigma}_j^\cG \right) 
    \geq  
    \sum_{j=1}^p \log\left( \sigma_j^\cG \right) - \Delta_N, \ \forall~\cG \notin \Pi_*,
\end{equation}
\begin{equation} \label{eq:second_ineq}
    \sum_{j=1}^p \log\left( \hat{\sigma}_j^\cG \right) 
    \leq 
    \sum_{j=1}^p \log\left( \sigma_j^{\Gtrue} \right) + \Delta_N, \ \forall~\cG \in \Pi_*,
\end{equation}
where $\Delta_N/p = o_p(\xi_p)$. Then, for all $\cG \notin \Pi_*$ and $\cG^{\prime} \in \Pi_*$,
\begin{equation*}
    \begin{split}
      \sum_{j=1}^p \log \left( \hat{\sigma}_j^\cG \right) 
      - 
      \sum_{j=1}^p \log \left( \hat{\sigma}_j^{\cG^{\prime}} \right) 
      &\geq 
      \sum_{j=1}^p \log \left( \sigma_j^{\cG} \right) - \Delta_N 
      - 
      \sum_{j=1}^p \log \left( \hat{\sigma}_j^{\cG^{\prime}} \right) 
      \\
      &\geq \sum_{j=1}^p \log \left( \sigma_j^\cG \right) - \Delta_N - \sum_{j=1}^p \log \left( \sigma_j^{\Gtrue} \right) - \Delta_N 
      \\
      &\geq p\xi_p - 2\Delta_N 
      \\
      & = p(\xi_p - 2\Delta_N/p).
    \end{split}
\end{equation*}
Since $\Delta_N/p = o_p(\xi_p)$ and $\xi_p > 0$ by \cref{assum:gap}, with probability tending towards one, $\log(\hat{\sigma}_j^\cG) > \sum_{j=1}^p \log(\hat{\sigma}_j^{\cG^{\prime}})$. 
Hence, $\mathbb{P}(\hat{\cG} \in \Pi_*) \rightarrow 1$ as $p, N \rightarrow \infty$ as desired. \\

\noindent \textit{Definition of the Gap Parameter $\Delta_N$.}
\begin{equation}
	\begin{split}
	\Delta_N &= \sum_{j=1}^p \Delta_{N, j}, \quad \text{s.t.} \quad \Delta_{N, j} =  \max_{\cG \in \dagset} \nu_{N, j}^\cG + \eta_{N, j}^\cG, 
    \\
	& \nu_{N, j}^\cG 
    = 
    \left|\frac{1}{N}\sum_{n=1}^N 
    \left( x_j^{(n)} - \bar{h}^\cG_j \left( x_{\Parents{\cG}{j}}^{(n)}, {s}_{\Parents{\cG}{j}}^{(n)}\right) \right)^2 
    - 
    \left( x_j^{(n)} - \hat{h}^\cG_j \left( x_{\Parents{\cG}{j}}^{(n)}, \hat{s}_{\Parents{\cG}{j}}^{(n)} \right) \right)^2 \right| 
    \\
	&  \eta_{N, j}^\cG =  \sup_{v_j^\cG \in \fkspace_j^\cG \bigoplus \rkspace_j^\cG} \left|(\pr_N - \pr)(m_j^\cG)\right|,
	\end{split}
\end{equation}
where
\begin{equation} \label{eq:def_m}
	\begin{split}
	v_j^\cG &\in \fkspace_j^\cG \bigoplus \rkspace_j^\cG, 
    \\
    m_j^\cG \left( x_{\Parents{\cG}{j}}, s_{\Parents{\cG}{j}} \right) 
    &= 
    \left( x_j - s_j - v_j^\cG \left( x_{\Parents{\cG}{j}}, s_{\Parents{\cG}{j}} \right) \right)^2, 
    \\
    \fkspace_j^\cG 
    &= 
    \bigoplus_{i \in \Parents{\cG}{j}} \fkspace_i.
   \end{split}
\end{equation}

\noindent \textit{Proof of \cref{eq:first_ineq}.} By two applications of the triangle inequality,
\begin{equation*}
\begin{split}
    & \left( \sigma_j^\cG \right)^2 
    = 
    \min_{v_j^\cG \in \fkspace_j^\cG \bigoplus \rkspace_j^\cG} 
    \E \left[
    \left(X_j - S_j - v^\cG_j \left( X_{\Parents{\cG}{j}}, S_{\Parents{\cG}{j}} \right)\right)^2 \right] 
    \\
    &\leq 
    \min_{v_j^\cG \in \fkspace_j^\cG \bigoplus \rkspace_j^\cG} 
    \left\| x_j - s_j - v^\cG_j \left( x_{\Parents{\cG}{j}}, s_{\Parents{\cG}{j}} \right) \right\|_N^2 
    + 
    \sup_{v_j^\cG \in \fkspace_j^\cG \bigoplus \rkspace_j^\cG} 
    \left| (\pr_N - \pr)(m_j^\cG) \right| 
    \\
    &\leq 
    \min_{v_j^\cG \in \fkspace_j^\cG \bigoplus \rkspace_j^\cG} 
    \left\| x_j - \hat{s}_j - v^\cG_j \left( x_{\Parents{\cG}{j}}, \hat{s}_{\Parents{\cG}{j}} \right) \right\|_N^2 
    + 
    \nu_{N, j}^\cG + \eta_{N, j}^\cG 
    \\
    & = \left( \hat{\sigma}_j^\cG \right)^2 + \nu_{N, j}^\cG + \eta_{N, j}^\cG.
\end{split}
\end{equation*}
 \cref{eq:first_ineq} now follows from applying a Taylor series expansion and \cref{assum:var_pos}. \\

\noindent \textit{Proof of \cref{eq:second_ineq}.} 

\begin{equation}
\begin{split}
    (\hat{\sigma}_j^\cG)^2 
    &= 
    \min_{v_j^\cG \in  \fkspace_j^\cG \bigoplus \rkspace_j^\cG} 
    \left\| x_j - \hat{s}_j - v^\cG_j \left(x_{\Parents{\cG}{j}}, \hat{s}_{\Parents{\cG}{j}} \right) \right\|_N^2 
    \\
    &\leq 
    \min_{v_j^\cG \in \fkspace_j^\cG \bigoplus \rkspace_j^\cG} 
    \left\| x_j - s_j - v^\cG_j \left(x_{\Parents{\cG}{j}}, s_{\Parents{\cG}{j}} \right) \right\|_N^2 + \nu_{N, j}^\cG 
    \\
    &\leq 
    \min_{v_j^\cG \in   \fkspace_j^\cG \bigoplus \rkspace_j^\cG} 
    \E\left[\left( X_j - S_j - v^\cG_j \left(X_{\Parents{\cG}{j}}, S_{\Parents{\cG}{j}} \right) \right)^2 \right] 
    + 
    \nu_{N, j}^\cG + \eta_{N, j}^\cG 
    \\
    &=  
    (\sigma_j^\cG)^2 + \nu_{N, j}^\cG + \eta_{N, j}^\cG.
\end{split}
\end{equation}
 \cref{eq:second_ineq} now follows from applying a Taylor series expansion and \cref{assum:var_pos}. 
 To complete the proof, it suffices to show that both $\max_{\cG \in \dagset} \max_j \nu_{N, j}^\cG$ and $\max_{\cG \in \dagset} \max_j \eta_{N, j}^\cG$ are $O_p\left(\log(N)^{1/c} \sqrt{\frac{\log p}{N}} + \frac{N^{\frac{1}{4}}}{\sqrt{p}} \right)$. \\

\noindent \textit{Bound on $\nu_{N, j}^\cG$.} For $a \in \R^p$, define the residual sum of squares (RSS) functions as,
\begin{equation*}
	\begin{split}
		\widehat{RSS}^\cG_j(a) 
        &\coloneqq 
        \left\| x_j - \hat{h}_j^\cG \left( x_{\Parents{\cG}{j}}, a_{\Parents{\cG}{j}} \right)  \right\|_N^2 
        \\
		\overbar{RSS}^\cG_j(a) 
        &\coloneqq 
        \left\| x_j - \bar{h}_j^\cG \left( x_{\Parents{\cG}{j}}, a_{\Parents{\cG}{j}} \right)  \right\|_N^2.
	\end{split}
\end{equation*}
Then, by definition of $\hat{h}_j^\cG$ and $\bar{h}_j^\cG$ in \cref{eq:def_h}, 
\begin{equation}
	\widehat{RSS}_j^\cG(\hat{s}) \leq \overbar{RSS}^\cG_j(\hat{s}) \quad \text{and} \quad \widehat{RSS}^\cG_j(s) \geq \overbar{RSS}^\cG_j(s).
\end{equation}
\begin{equation}
    \begin{split}
        \widehat{RSS}^\cG_j(\hat{s}) - \overbar{RSS}^\cG_j(s) &= \underbrace{\left(\widehat{RSS}^\cG_j(\hat{s}) - \overbar{RSS}^\cG_j(\hat{s})\right)}_{\leq 0} + \overbar{RSS}^\cG_j(\hat{s}) - \overbar{RSS}^\cG_j(s) \\
        & \leq \overbar{RSS}^\cG_j(\hat{s}) - \overbar{RSS}^\cG_j(s) \\
        & \leq |\overbar{RSS}^\cG_j(\hat{s}) - \overbar{RSS}^\cG_j(s)|.
    \end{split}
\end{equation}
Similarly, 
\begin{equation}
    \begin{split}
        \overbar{RSS}^\cG_j(s) - \widehat{RSS}^\cG_j(\hat{s})  &= \underbrace{\left(\overbar{RSS}^\cG_j(s) - \widehat{RSS}^\cG_j(s)\right)}_{\leq 0} + \widehat{RSS}^\cG_j(s) - \widehat{RSS}^\cG_j(\hat{s}) \\
        & \leq \widehat{RSS}^\cG_j(s) - \widehat{RSS}^\cG_j(\hat{s}) \\
        & \leq |\widehat{RSS}^\cG_j(s) - \widehat{RSS}^\cG_j(\hat{s})|.
    \end{split}
\end{equation}
Hence, 
\begin{equation}
     \left| \widehat{RSS}^\cG_j(\hat{s}) - \overbar{RSS}^\cG_j(s) \right| 
     \leq 
     \left| \overbar{RSS}^\cG_j(\hat{s}) - \overbar{RSS}^\cG_j(s) \right| 
     + 
     \left| \widehat{RSS}^\cG_j(s) - \widehat{RSS}^\cG_j(\hat{s}) \right|.
\end{equation}
We bound $|\overbar{RSS}^\cG_j(\hat{s}) - \overbar{RSS}^\cG_j(s)|$ below. An identical bound works for $|\widehat{RSS}^\cG_j(s) - \widehat{RSS}^\cG_j(\hat{s})|$ by replacing $\overbar{RSS}^\cG$ with $\widehat{RSS}^\cG$ in the proof below. By expanding the square and canceling the common term, the difference  $|\overbar{RSS}^\cG_j(\hat{s}) - \overbar{RSS}^\cG_j(s)|$ equals the sum of two components:
\begin{equation}  \label{eq:sum_two_terms}
    \begin{split}
        |\overbar{RSS}_j^\cG(\hat{s}) - &\overbar{RSS}^\cG_j(s)| = 
        \\ 
        & \bigg| - \frac{1}{N} \sum_{n=1}^N 
        \left(
            \bar{h}_j^\cG \left(x_{\Parents{\cG}{j}}^{(n)}, \hat{s}_{\Parents{\cG}{j}}^{(n)}\right)^2 
            - 
            \bar{h}_j^\cG \left(x_{\Parents{\cG}{j}}^{(n)}, s_{\Parents{\cG}{j}}^{(n)} \right)^2 
        \right) 
        \\
        & \frac{-2}{N} \sum_{n=1}^N 
        x_j^{(n)} \left(
            \bar{h}_j^\cG \left(x_{\Parents{\cG}{j}}^{(n)}, \hat{s}_{\Parents{\cG}{j}}^{(n)} \right) 
            - 
            \bar{h}_j^\cG \left(x_{\Parents{\cG}{j}}^{(n)}, s_{\Parents{\cG}{j}}^{(n)} \right)
        \right) \bigg|. \\
    \end{split}
\end{equation}
We will bound the first term using a Taylor series expansion with respect to $\hat{s}_{\Parents{\cG}{j}}$, and apply \cref{thm:pcss_convg}. 
To this end, $\bar{h}_j^\cG \left( x_{\Parents{\cG}{j}}^{(n)}, \hat{s}_{\Parents{\cG}{j}}^{(n)} \right)^2$ equals
\begin{equation*}
        \bar{h}_j^\cG \left( x_{\Parents{\cG}{j}}^{(n)}, s_{\Parents{\cG}{j}}^{(n)} \right)^2 
        + 
        2 \left[ \bar{h}_j^\cG \left( x_{\Parents{\cG}{j}}^{(n)}, s_{\Parents{\cG}{j}}^{(n)} \right) \nabla_n \bar{r}_j^\cG \right]^T
        \left( \hat{s}_{\Parents{\cG}{j}}^{(n)} - s_{\Parents{\cG}{j}}^{(n)} \right) 
        + 
        O \left( \left\| \hat{s}_{\Parents{\cG}{j}}^{(n)} - s_{\Parents{\cG}{j}}^{(n)} \right\|_2^2 \right),
\end{equation*}
where $\nabla_n \bar{r}_j^\cG \coloneqq  \nabla_{s_{\Parents{\cG}{j}}^{(n)}} \bar{r}_j^\cG = \nabla_{ s_{\Parents{\cG}{j}}^{(n)}} \bar{h}_j^\cG(x_{\Parents{\cG}{j}}^{(n)}, \cdot)$. Hence,  
\begin{equation*}
	\bigg|\frac{1}{N} \sum_{n=1}^N 
    \left( \bar{h}_j^\cG \left(x_{\Parents{\cG}{j}}^{(n)}, \hat{s}_{\Parents{\cG}{j}}^{(n)}\right)^2 
    - 
    \bar{h}_j^\cG \left( x_{\Parents{\cG}{j}}^{(n)}, s_{\Parents{\cG}{j}}^{(n)} \right)^2 \right)\bigg|
\end{equation*}
equals
\begin{equation} \label{eq:cauchy}
\begin{split}
    & \left| \frac{1}{N} \sum_{n=1}^N 2 \left[ \bar{h}_j^\cG \left( x_{\Parents{\cG}{j}}^{(n)}, s_{\Parents{\cG}{j}}^{(n)} \right) \nabla_n \bar{r}_j^\cG \right]^T
    \left( \hat{s}_{\Parents{\cG}{j}}^{(n)} - s_{\Parents{\cG}{j}}^{(n)} \right) 
    + 
    O \left( \left\| \hat{s}_{\Parents{\cG}{j}}^{(n)} - s_{\Parents{\cG}{j}}^{(n)} \right\|_2^2 \right) \right| 
    \\
    &\leq 
    \frac{2}{N} \sum_{n=1}^N 
    \left| 
    \bar{h}_j^\cG \left( x_{\Parents{\cG}{j}}^{(n)}, s_{\Parents{\cG}{j}}^{(n)} \right) 
    \right| \left\| \nabla_n \bar{r}_j^\cG \right\|_2 
    \left\| \hat{s}_{\Parents{\cG}{j}}^{(n)} - s_{\Parents{\cG}{j}}^{(n)} \right\|_2 
    + 
    O \left( \left\| \hat{s}_{\Parents{\cG}{j}}^{(n)} - s_{\Parents{\cG}{j}}^{(n)} \right\|_2^2 \right),
\end{split}
\end{equation}
where the inequality follows from Cauchy-Schwartz. Hence, \cref{eq:cauchy}, \cref{thm:pcss_convg}, and \cref{assum:sparsity} imply that 
\begin{equation} \label{eq:term_one_modify}
	\max_{G \in \dagset} \max_{j \in [p]} 
    \left| 
    \frac{1}{N} \sum_{n=1}^N \bar{h}_j^\cG \left(x_{\Parents{\cG}{j}}^{(n)}, \hat{s}_{\Parents{\cG}{j}}^{(n)} \right)^2 
    - 
    \bar{h}_j^\cG \left( x_{\Parents{\cG}{j}}^{(n)}, s_{\Parents{\cG}{j}}^{(n)} \right)^2 
    \right|
\end{equation}
is 
\begin{equation} \label{eq:prod_bound}
O_p \left( \left[\log(N)^{1/c} \sqrt{\frac{\log p}{N}} + \frac{N^{\frac{1}{4}}}{\sqrt{p}} \right] \right)  
\max_{G \in \dagset} \max_{1 \leq j \leq p} 
\frac{1}{N} \sum_{n=1}^N 
\left| \bar{h}_j^\cG \left( x_{\Parents{\cG}{j}}^{(n)}, s_{\Parents{\cG}{j}}^{(n)} \right) \right|  
\left\| \nabla_n \bar{r}_j^\cG \right\|_2
\end{equation}
since $N = o(p^2)$ and $p = o(\exp(N))$. Since all functions in $\rkspace_j$ have gradients uniformly bounded by a constant (where the constant does not depend on $p$) by \cref{assum:complex}(e), it suffices to bound 
\begin{equation*}
    \begin{split}
      	 &\max_{G \in \dagset} \max_{1 \leq j \leq p}  
         \frac{1}{N} \sum_{n=1}^N 
         \left| \bar{h}_j^\cG \left( x_{\Parents{\cG}{j}}^{(n)}, s_{\Parents{\cG}{j}}^{(n)} \right) \right| 
         \\
      	 &\leq \max_{G \in \dagset} \max_{1 \leq j \leq p} \sup_{v_j^\cG \in \fkspace_j^\cG \bigoplus \rkspace_j^\cG}  
         \frac{1}{N} \sum_{n=1}^N 
         \left( \left|v_j^\cG \left( x_{\Parents{\cG}{j}}^{(n)}, s_{\Parents{\cG}{j}}^{(n)} \right) \right| 
         + 
         \left| s_j^{(n)} \right| \right) 
         \\
      	 &= \max_{G \in \dagset} \max_{1 \leq j \leq p} \sup_{v_j^\cG \in \fkspace_j^\cG \bigoplus \rkspace_j^\cG}  
         \frac{1}{N} \sum_{n=1}^N 
         \left| v_j^\cG \left( x_{\Parents{\cG}{j}}^{(n)}, s_{\Parents{\cG}{j}}^{(n)} \right) \right| 
         + 
         \max_{1 \leq j \leq p} \frac{1}{N} \sum_{n=1}^N \left| s_j^{(n)} \right|.
    \end{split}
\end{equation*}

By \cref{assum:sparsity}, \cref{assum:complex}(c), and Jensen's inequality, $\E|v_j^\cG|$ and $\E[s_j]$ are bounded above by $\sqrt{C_3}(\gamma + 1)$ and $\sqrt{C_3}$, respectively, for all $G \in \dagset$ and $1 \leq j \leq p$. 
Hence, by \cref{assum:complex}(f) and \cref{lem:max_subgaus},
\begin{equation} \label{eq:bounds}
\max_{1 \leq j \leq p} \frac{1}{N} \sum_{n=1}^N \left| s_j^{(n)} \right| 
= 
O_p\left(1 + \sqrt{\frac{\log p}{N}} \right).
\end{equation}
By \cref{assum:sparsity}, \cref{assum:complex}(b), \cref{assum:complex}(d), the uniform inequality used in the proof of Theorem 5.11 in \citet{geer_one}, and \cref{lem:max_subgaus},
\begin{equation} \label{eq:bound2}
\max_{\cG \in \dagset} 
\max_{1 \leq j \leq p} 
\sup_{v_j^\cG \in \fkspace_j^\cG \bigoplus \rkspace_j^\cG} 
\frac{1}{N} \sum_{n=1}^N 
\left| v_j^\cG \left( x_{\pa{x_j}{G}}^{(n)}, s_{\pa{x_j}{G}}^{(n)} \right) \right| 
= 
O_p \left(1 + \sqrt{\frac{\log p}{N}} \right).
\end{equation}
Since $p = o(\exp(N))$, \cref{eq:prod_bound}, \cref{eq:bounds}, and \cref{eq:bound2} imply that \cref{eq:term_one_modify} is 
\begin{equation} \label{eq:common_rate}
O_p\left( \left[\log(N)^{1/c} \sqrt{\frac{\log p}{N}} + \frac{N^{\frac{1}{4}}}{\sqrt{p}} \right] \right).
\end{equation}
By a nearly identical argument, the second term in \cref{eq:sum_two_terms} is also bounded by \cref{eq:common_rate}. Hence, $\max_{\cG \in \dagset} \max_j \nu_{N, j}^\cG$ is $O_p\left(\log(N)^{1/c} \sqrt{\frac{\log p}{N}} + \frac{N^{\frac{1}{4}}}{\sqrt{p}} \right)$. \\

\noindent \textit{Bound on $\eta_{N, j}^\cG$.}
Let the function class
\begin{equation*}
	\mathcal{L}_j^\cG \coloneqq \left\{ m_j^\cG: v_j^\cG \in \fkspace_j^\cG \bigoplus \rkspace_j^\cG \right\},
\end{equation*}
where $m_j^\cG$ is defined in \cref{eq:def_m} for $1 \leq j \leq p$. By the uniform inequality used to prove Theorem 5.11 in \citet{geer_one} and \cref{assum:complex}(d), there exists constants $K_1, K_2, K_3 > 0$ sufficiently large so that
\begin{equation} \label{eq:subgaus_rv}
    \pr\left( 
        \sup_{v_j^\cG \in \fkspace_j^\cG \bigoplus \rkspace_j^\cG} 
        \left| \frac{1}{\sqrt{N}} \sum_{n=1}^N m_j^\cG \left( x^{(n)}, s^{(n)} \right) - \E \left[ m_j^\cG \left( x_{\Parents{\cG}{j}}, s_{\Parents{\cG}{j}} \right) \right] \right| 
        \geq a \right) 
    \leq 
    K_1 \exp\left(\frac{-a^2}{K_2}R^2\right),
\end{equation}
where $a \geq \max(K_3\int_{0}^R H^{1/2}(u, \mathcal{L}_j^\cG, L^2) du , R)$ and $H(u, \mathcal{L}_j^\cG, L^2)$ is the log bracketing number of $\mathcal{L}_j^\cG$ with respect to the $L^2$ norm. 
By \cref{assum:sparsity}, the cardinality of $\dagset$ is upper bounded by $2^{\gamma} p^{\gamma + 1}$. Hence, by \cref{eq:subgaus_rv} and \cref{lem:max_subgaus},
\begin{equation} \label{eq:union_bound_l2}
	\begin{split}
    & \max_{1 \leq j \leq p} 
    \max_{G \in \dagset} 
    \sup_{v_j^\cG \in  \fkspace_j^\cG \bigoplus \rkspace_j^\cG} 
    \left| \frac{1}{N}\sum_{n=1}^N m_j^\cG \left( x^{(n)}, s^{(n)} \right) 
    - 
    \E \left[ m_j^\cG \left( x_{\Parents{\cG}{j}}, s_{\Parents{\cG}{j}} \right) \right] \right| 
    \\
	&= O_p\left((\log p)^{1/2}N^{-1/2}  \max_{1 \leq j \leq p} \max_{G \in \dagset} \int_{0}^R H^{1/2}(u, \mathcal{L}_j^\cG, L^2)du \right).
	\end{split}
\end{equation}
As shown in the proof of Theorem 3 in \citet{cam}, under \cref{assum:complex}(c), there exists a constant $K_4$ such that,
\begin{equation*}
\max_{j \in [p]} \max_{|\Parents{\cG}{j}| \leq \gamma } H^{1/2}(u, \mathcal{L}_j^\cG, L^2) \leq K_4 \max_{j \in [p]} \max_{G \in \dagset}  H\left(u, \left[\bigoplus_{i \in \Parents{\cG}{j}} \fkspace_i\right] \bigoplus  \rkspace_j^\cG, L^4(\pr) \right). 
\end{equation*}
Hence, \cref{assum:complex}(b) and \cref{eq:union_bound_l2} imply that $\max_{G \in \dagset} \max_j \eta_{N, j}^\cG$ is $O_p\left( \sqrt{\frac{\log p}{N}} \right)$ as desired.

\section{\edit{Behavior when $M = \infty$}} \label{A:m_inf_behav} 
\subsection{Limiting Behavior Under Misspecification}
Without loss of generality, assume $\{\phi_m \}_{m=1}^{\infty}$ is an orthonormal basis of $\hilb^*$ so that $\cov(\phi_i(H), \phi_j(H)) = I(i = j)$ (otherwise apply the Gram–Schmidt process). 
Then, 
\begin{equation*}
    \cov(S_i, S_j) = \sum_{m=1}^{\infty} \Psi_{jm}  \Psi_{im}, \quad \text{s.t. } S_k = \sum_{m=1}^{\infty} \Psi_{km} \phi_m(H) \quad (1 \leq k \leq p),
\end{equation*}
\raedit{for $1 \leq i,j \leq p$}. Let $\Psi_{j, :k}$ consist of the first $k$ coefficients and  $\Psi_{j, k:}$ the remaining coefficients for $S_j$. Then,
\begin{equation}
    \cov(S) = \Psi_{:k} \Psi_{:k}^T + \Psi_{k:} \Psi_{k:}^T.
\end{equation}
Suppose the matrix $\frac{1}{p}\cov(S)$ has $\tilde{M}$ non-vanishing eigenvalues as $p \rightarrow \infty$, where $\tilde{M}$ is a constant that does not depend on $p$. Through a change of basis, we may assume without loss of generality that the first $\tilde{M}$ basis functions correspond to the space spanned by the $\tilde{M}$ diverging eigenvectors of $\cov(S)$:
\begin{nassum} \label{assum:mod_eval} There exists constants $0 < K_1, K_2 < \infty$ that do not depend on $p$ such that
\begin{equation}
    \lambda_{\min} \left[ \frac{1}{p}  \Psi_{:\tilde{M}}^T \Psi_{:\tilde{M}} \right] > K_1, \quad \lambda_{\max} \left[ \frac{1}{p} \Psi_{:\tilde{M}}^T \Psi_{:\tilde{M}} \right] < K_2.
\end{equation}
\end{nassum}
Let $\tilde{u}^{(n)} = x^{(n)} -  \Psi_{:\tilde{M}}^T\Phi_{:\tilde{M}}(h^{(n)})$. Then, \raedit{for all $1 \leq j \leq p$, $1 \leq n \leq N$}, $\hat{s}_j^{(n)} \rightarrow \tilde{s}_j^{(n)} := \Psi_{j, :\tilde{M}}^T\Phi_{:\tilde{M}}(h^{(n)})$ \raedit{as $p, N \rightarrow \infty$} at the same rate in \cref{thm:pcss_convg} if we replace $u^{(n)}$ with $\tilde{u}^{(n)}$ in \cref{thm:pcss_convg}. Hence, when $M=\infty$, we can only recover the variation explained by the $\tilde{M}$ components of $\Phi(H)$. So $\tilde{S} \neq S$ in general. \\

\noindent \textbf{Behavior Under Smoothness.}
Suppose $\hilb^*$ satisfies the following smoothness condition: there exists a constant $C < \infty$ such that for any $q \in \hilb^*$,
\begin{equation}
    \sum_{m=T}^{\infty} |\alpha_m| \leq CT^{-\beta} \quad \text{s.t. } q(h) = \sum_{m=1}^{\infty} \alpha_m \phi_m(h),
\end{equation}
where $\beta > 0$ controls the smoothness of functions in $\hilb^*$. 
Then, $\E| S_j - \tilde{S}_j |^2 = O(\tilde{M}^{-\beta + 1 + \kappa})$ for any $\kappa > 0$, $1 \leq j \leq p$. 
Hence, if $\beta$ is large (i.e., when the function class is very smooth) or $\tilde{M}$ is large (i.e., when there exists many basis functions with diverging eigenvalues), then $S_j$ is close to $\tilde{S}_j$. 

\subsection{Modified Condition for Consistent Order Estimation} \label{sec:modified_gap}
Consider the following (modified) model gap parameter $\tilde{\xi}_p$:
\begin{equation} \label{eq:mod_gap}
    \tilde{\xi}_p 
    = 
    p^{-1} \min_{\cG \notin \Pi_*} \sum_{j=1}^p \left( \log \left( \tilde{\sigma}_j^\cG \right) - \log \left( \tilde{\sigma}_j^{\Gtrue} \right) \right),
\end{equation}
where
\begin{equation*}
    \begin{split}
    & (\tilde{\sigma}_j^\cG)^2 \coloneqq \E_{\pr} \left[ 
    \left(X_j - \sum_{i \in \Parents{\cG}{j}} \tilde{f}_{ij}^\cG(X_i) - \tilde{S}_j + \tilde{r}_j^\cG \left( \tilde{S}_{\Parents{\cG}{j}} \right) \right)^2 \right] 
    \\
    \{ \tilde{f}_{ij}^\cG \}_{i \in \Parents{\cG}{j}}, \tilde{r}_j^\cG 
    = &\argmin_{g_{ij} \in \fkspace_i, \ w_j \in \rkspace_j^G}
    \E_{\pr} \left[ 
    \left(X_j - \sum_{i \in \Parents{\cG}{j}} g_{ij}(X_i) - \tilde{S}_j - w_j \left(\tilde{S}_{\Parents{\cG}{j}} \right) \right)^2 
    \right].
    \end{split}
\end{equation*}
\cref{eq:mod_gap} has an analogous interpretation as $\xi_p$. 
However, \cref{eq:mod_gap} assumes that the gap between the negative log-likelihood score of the best incorrect DAG and correct DAG remains non-zero even in the misspecified setting when $\tilde{S}$ is used instead of $S$. 
As we show in the previous section, if $S_j$ is smooth, then $S_j$ is close to $\tilde{S}_j$. 
If $S_j$ is close to $\tilde{S}_j$, then \cref{eq:mod_gap} is not much stronger of a condition than assuming $\xi_p > 0$. 
If we replace $S$ with $\tilde{S}$ in the proof of \cref{thm:decam_consis}, and replace $\xi_p$ with $\tilde{x}_p$ in \cref{assum:gap}, then \cref{thm:decam_consis} remains true. \\

\noindent \textbf{Connection to \citet{cam}.} In \citet{cam}, the authors similarly use the gap parameter to handle misspecification, namely when a finite-dimensional function space is used for fitting instead of a non-parametric, infinite-dimensional space (where the true functions belong in general). 
As the authors discuss after Equation 15, there is a tradeoff between identifiability (having a larger gap by using a richer function space) and variance. 
They argue that this difference can be much different than the bias-variance tradeoff. 
From this perspective, $\tilde{\xi}_p$ is a specific case of the general idea proposed in \citet{cam} to handle misspecification.  

\section{Score Function Details} \label{A:kernel_details} 

In our experiments, we used an RBF kernel for $k_{j}$ and $k_{ij}$, $1 \leq j \leq p$. We implemented the DeCAMFounder using the Gaussian process package \texttt{GPyTorch} \citep{black_box} to fit the kernel hyperparameters (i.e., by maximizing the log marginal likelihood via gradient ascent). We used a total of $100$ iterations using the \texttt{Adam} optimizer with a learning rate of $0.01$. See the \texttt{scores.py} file in the "decamfound" folder in the Github repository \texttt{https://github.com/uhlerlab/decamfound} for our \texttt{python} code.

\section{Generating Simulated Data} \label{A:sim_data_generate} 

\newcommand{\parents}{{\textrm{par}}}
\newcommand{\confound}{{\textrm{confound}}}
\newcommand{\var}{{\textrm{Var}}}

For node $X_j$, \raedit{$1 \leq j \leq p$}, with sampled trend types $\{f_{ij}\}_{i \in \Parents{\cG}{j}}$ and $\{g_{kj}\}_{k=1}^K$ and weights $\{\theta_{ij}\}_{i \in \Parents{\cG}{j}}$ and $\{\theta_{kj}^{'}\}_{k=1}^K$, let 
\[
O_j = \sum_{i \in \Parents{\Gtrue}{j}} \theta_{ij} f_{ij}(X_i) 
\quad \text{and} \quad
C_j = \sum_{k=1}^K \theta'_{kj} g_{kj}(H_k)
\]
represent the (unnormalized) variation explained by the observed and confounder nodes, respectively. 
We wish to find normalization constants $c_{\parents,j}$ and $c_{\confound,j}$ so that 
\[
\cov(c_{\parents,j} O_j + c_{\confound,j} C_j) = 1 - \sigma_{\text{noise}}^2
\]
and
\[
\cov(c_{\confound,j} C_j) = \sigma_\confound^2.
\]

Since it may be difficult to analytically solve for these normalization constants, we find them inductively using a Monte Carlo approach. 
In particular, suppose we have computed the normalization constants $c_{\parents,i}$ and $c_{\confound,j}$ for $i < j$.
Then, we take many Monte Carlo samples (we use 10,000 in the experiments) from the marginal distribution over $H, X_1, X_2, \ldots, X_{j-1}$.
This allows us to estimate $\cov(C_j)$ and solve for $c_{\confound,j}$.
Similarly, we may use these samples to estimate $\cov(O_j, C_j)$, which along with the value of $c_{\confound,j}$ and the estimate of $\var(C_j)$, allows us to solve for $c_{\parents,j}$ in the first equation. 
There are several edge cases depending on if $x_j$ has no observed and/or confounder parents:
\begin{enumerate}
	\item If $X_j$ has no observed or confounder parents (i.e., is a source), then set $c_{\parents,j} = 0$ and $c_{\confound,j} = 0$. Set the noise variance for $x_j$ equal to 1.
	
	\item  If $X_j$ has no confounder parents but at least one observed node parent, then set the signal variance for $x_j$ equal to $\sigma_{\text{signal}}^2 + \sigma_\confound^2$.
	
	\item  If $X_j$ has at least one confounder parents but no observed node parents, then set the noise variance for $X_j$ equal to $\sigma_{\text{noise}}^2 + \sigma_{\text{signal}}^2$.
\end{enumerate}
See the \texttt{generate\_synthetic\_data.py} file in our Github repository for the code.

\clearpage
\section{Sensitivity Analysis} \label{A:sensitivity} \raj{In \cref{sec:experiments}, we considered $K=1$ unobserved confounders. Here we perform a sensitivity analysis with respect to $K$, and the number of factors selected for the DeCAMFounder. To this end, we generate data as in \cref{sec:experiments} but now vary $K \in \{5, 10, 15\}$. For each choice of $K$ and confounding strength, we generate 25 datasets.  The scree plots for each dataset configuration in \cref{fig:scree_plot_sens} illustrate how both $K$ and the confounding strength affect the spectrum of the sample covariance matrix. When $K$ or the confounding strength increases, the proportion of large eigenvalues increases, indicating that a larger number of latent factors $J$ must be selected. When the confounding $R^2$ is strong (i.e., greater than or equal to 0.50), then the "elbow" in the scree plots are pronounced, and occur at 10, 20, and 30 for $K=5, 10$, and 15, respectively. This pronounced "elbow" occurs because stronger confounders make the eigenvalues corresponding to the latent factors more spiked. When the confounding $R^2$ is less than 0.5, then it is harder to estimate the number of latent factors since the eigenvalues are less spiked.} \\

\begin{figure}
        \centering
        \begin{subfigure}[b]{0.45\textwidth}
            \centering
            \myfigure{width=\textwidth}{./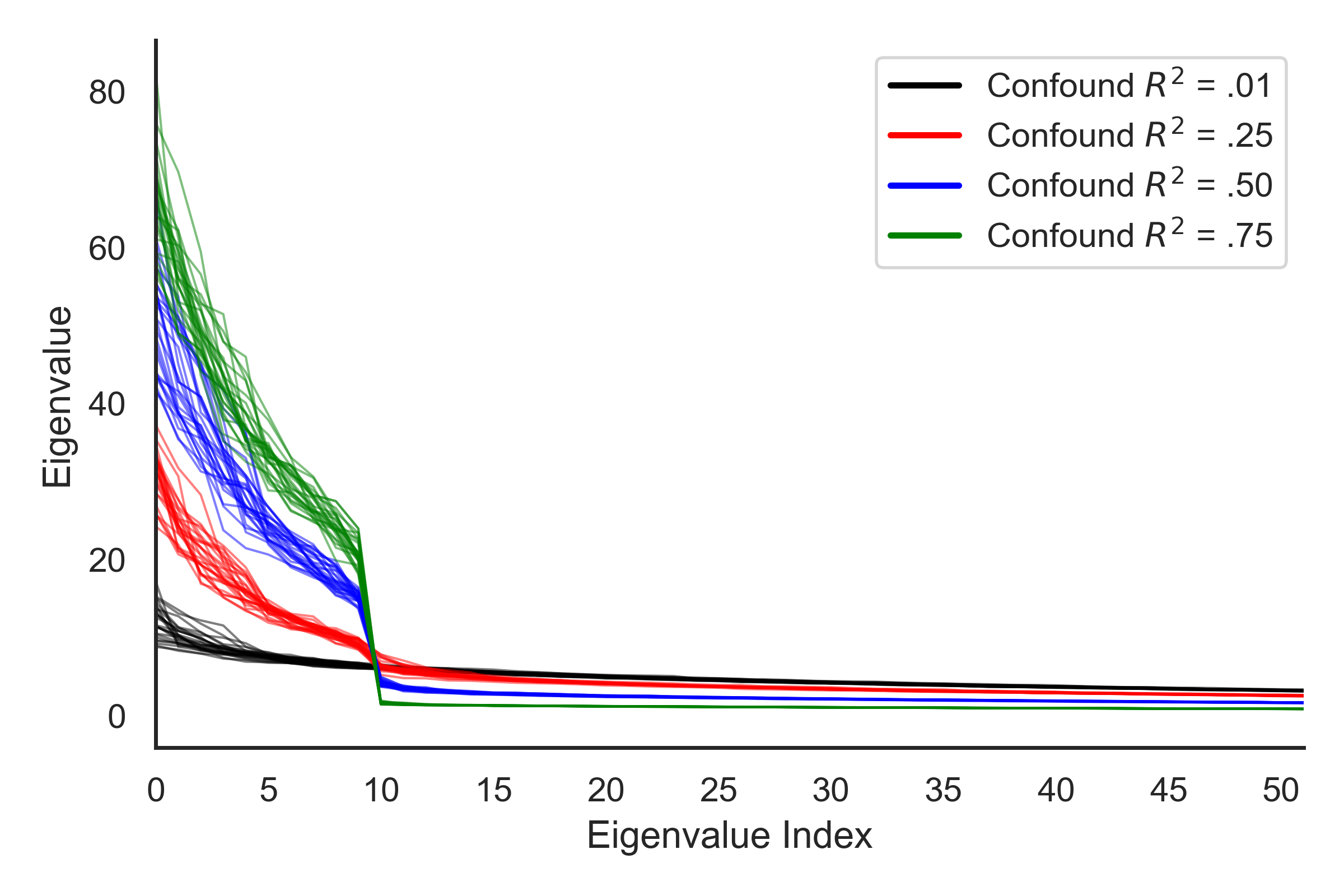}
            \caption{\small $K=5$} \label{fig:scree_plot_sens5} 
        \end{subfigure}
        \hfill
        \begin{subfigure}[b]{0.45\textwidth}  
            \centering 
            \myfigure{width=\textwidth}{./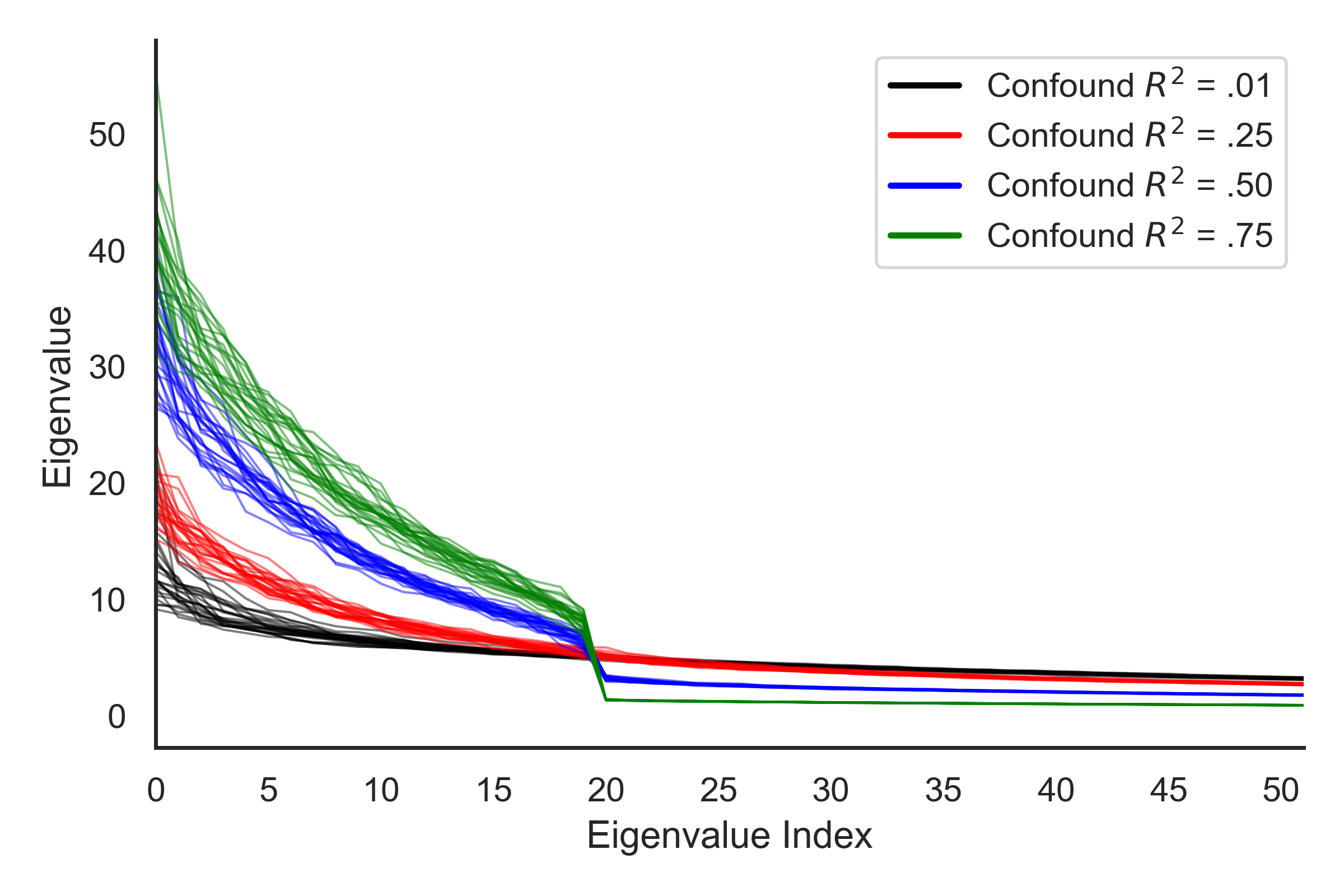}
           \caption{\small$K=10$} \label{fig:scree_plot_sens10}  
        \end{subfigure}
        \hfill
        \begin{subfigure}[b]{0.45\textwidth}  
            \centering 
            \myfigure{width=\textwidth}{./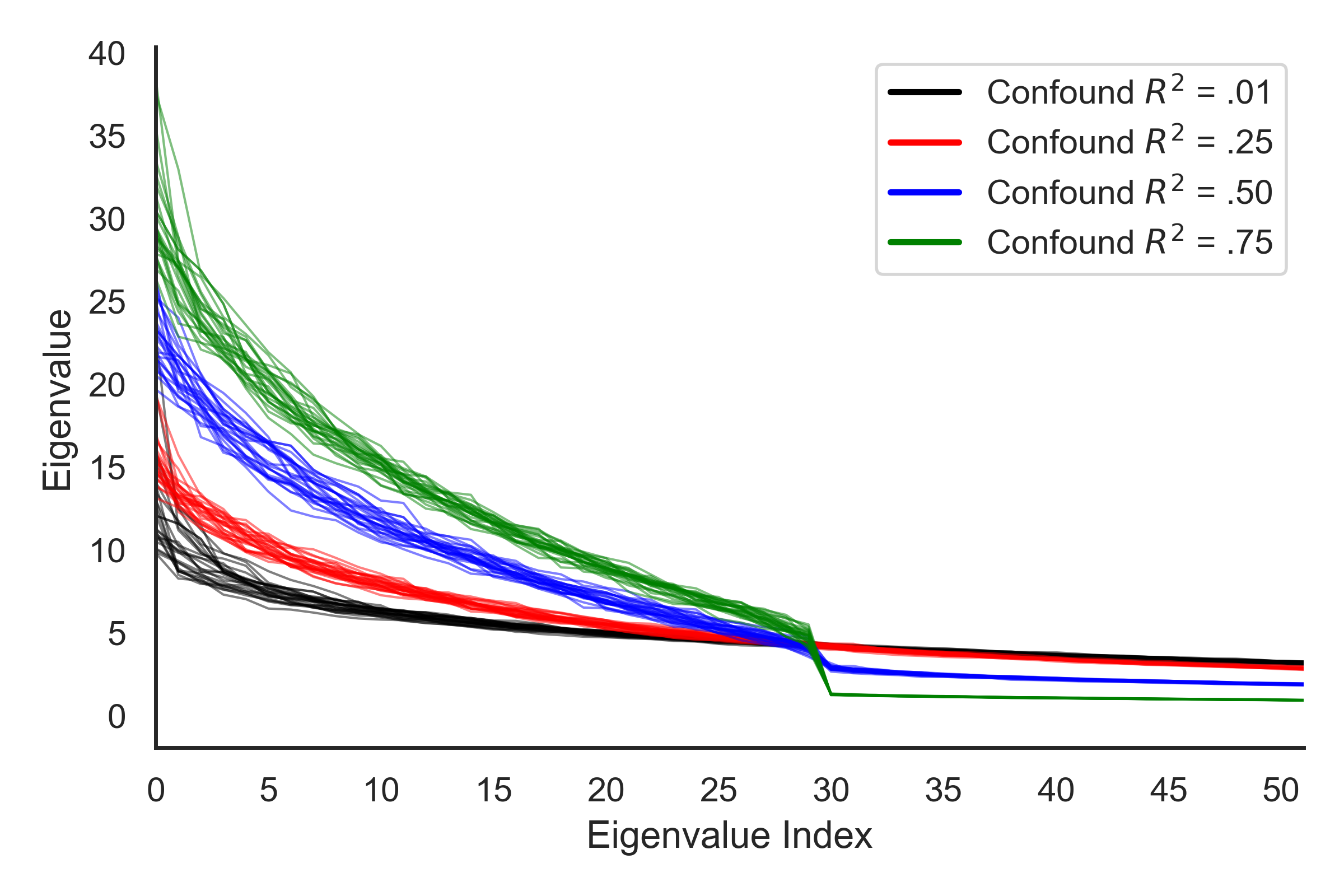}
           \caption{\small $K=15$} \label{fig:scree_plot_sens15}   
        \end{subfigure}
        	\caption{Scree plots for each choice of $K$ and confounding strength. There are 25 lines for "Confound $R^2$" since we generated 25 datasets for each configuration of $K$ and confounder $R^2$.} \label{fig:scree_plot_sens}
\end{figure}

\noindent \raj{\textbf{Sensitivity to the number of unobserved confounders.} Based on \cref{fig:scree_plot_sens}, the true number of latent factors appears to equal 10, 20, and 30 for $K=5, 10$, and 15, respectively. \cref{tab:sens_cor_rank} illustrates how the results vary for the DeCAMFounder as $K$ increases. In particular, we consider the two tasks described in \cref{sec:experiments}: the Wrong Parent Addition Task and the Correct Parent Deletion task. Instead of showing the distribution of the Prop. Times MLL Wrong $>$ MLL True metric across each dataset configuration (e.g., as in \cref{fig:non_linear_par_tasks}), we compute the average of this metric in \cref{tab:sens_cor_rank} so that we can compare across different choices of $K$. To focus on the effect of $K$ on performance, we set the number of latent factors equal to the correct number of factors. We see that the results in  \cref{tab:sens_cor_rank} do not change significantly  relative to the $K=1$ results in \cref{fig:non_linear_par_tasks}. Hence, the performance of the DeCAMFounder appears to be driven by the total variation explained by the confounders (which determine the "spikiness" of the eigenvalues) rather than the number of unobserved confounders.} \\

\begin{figure}
    \centering
    \caption{Performance sensitivity to the number of unobserved confounders for the DeCAMFounder}
\label{tab:sens_cor_rank}
\resizebox{\textwidth}{!}{%
\begin{tabular}{c c c c}
\toprule
 \# Confounders ($K$) &  Confound $R^2$ & Wrong Parent Addition & Correct Parent Deletion \\
\midrule
                         5 &           0.01 &                                              0.10\% &                                              3.90\% \\
                          &           0.25 &                                              4.80\% &                                              9.50\% \\
                          &           0.50 &                                              6.50\% &                                             15.50\% \\
                          &           0.75 &                                             12.90\% &                                             48.00\% \\
                        10 &           0.01 &                                              0.40\% &                                              8.60\% \\
                          &           0.25 &                                              2.80\% &                                             11.70\% \\
                          &           0.50 &                                              3.20\% &                                             19.80\% \\
                          &           0.75 &                                              4.00\% &                                             44.00\% \\
                        15 &           0.01 &                                              0.60\% &                                              0.60\% \\
                          &           0.25 &                                              2.70\% &                                              6.80\% \\
                          &           0.50 &                                              1.30\% &                                             24.30\% \\
                          &           0.75 &                                              4.00\% &                                             46.90\% \\
\bottomrule
\end{tabular}
}
\end{figure}

\begin{figure}[!t]
    \centering
        \caption{Performance sensitivity to the number of latent factors selected for the DeCAMFounder}
\label{tab:sens_rank}
\resizebox{\textwidth}{!}{%
\begin{tabular}{c c c c c}
\toprule
 \# Confounders ($K$) &  Confound $R^2$ & Bias Type? & Wrong Parent Addition & Correct Parent Deletion \\
\midrule
                 5 &           0.01 &      Under &                 0.20\% &                   4.70\% \\
                  &             &       Over &                 0.10\% &                   3.90\% \\
                  &           0.25 &      Under &                 7.60\% &                   7.00\% \\
                  &             &       Over &                 3.90\% &                  10.00\% \\
                10 &           0.01 &      Under &                 0.40\% &                   7.40\% \\
                  &             &       Over &                 0.40\% &                   8.60\% \\
                  &           0.25 &      Under &                 2.50\% &                  10.90\% \\
                  &             &       Over &                 2.10\% &                  14.20\% \\
                15 &           0.01 &      Under &                 1.00\% &                   0.60\% \\
                  &             &       Over &                 0.70\% &                   0.60\% \\
                  &           0.25 &      Under &                 2.80\% &                   5.10\% \\
                  &             &       Over &                 2.80\% &                   5.80\% \\
\bottomrule
\end{tabular}
}
\end{figure}

\noindent \raj{\textbf{Sensitivity to the number of latent factors selected.} When the confounding strength equals 0.01 and 0.25, estimating the number of latent factors is more challenging since the factors are weaker; see \cref{fig:scree_plot_sens}. In \cref{tab:sens_rank}, we consider two types of biases: underestimating or overestimating the number of latent factors. For underestimation, we set the number of latent factors as 5 fewer than the correct number of latent factors (denoted as "Under"). For overestimation, we set the number of latent factors as 5 more than the correct number of latent factors (denoted as "Over"). The values in \cref{tab:sens_rank} are similar to those in \cref{tab:sens_cor_rank}, suggesting that the DeCAMFounder is robust to the number of latent factors selected. One reason for this robustness is that when the confounding $R^2$ is small (which results in weak latent factors), then the signal strength of the observed variables is larger. Hence, the DAG learning task is easier, allowing for a larger margin of error when estimating the number of latent factors to select in the DeCAMFounder.}

\section{Additional Figures and Experiments} \label{A:add_exp} 
\subsection{Synthetic Data Experiments} \label{sec:add_syn}
In \cref{sec:experiments}, we reported on the proportion of times the incorrect parent set was selected over the true parent set. We report on the following additional metric to understand how confidently wrong (or correct) each method is:
\begin{enumerate}
	\item \textbf{Log Odds (Wrong vs. True)}: each score equals the log marginal likelihood (technically an approximation for BIC) of the parent set. Assuming a uniform prior over the set of all parent sets,  the log odds (LO) between the wrong and true parent set reduces into the difference between scores:
	\begin{equation}
		\begin{split} \label{eq:app_log_odds}
		\text{LO}_i &= \log \frac{\pr(X \mid P_t) \pr( P_{\text{t}} )}{\pr(X \mid P_{\text{correct}}) \pr( P_{\text{correct}} ) } \\
						&= \text{score}(P_t) - \text{score}(P_{\text{correct}}).
		\end{split}
	\end{equation}			
	
	We report $\max_{1 \leq t \in T} \text{LO}_t$ for the Wrong Parent Addition task in \cref{fig:parent_set_addition} and for the Correct Parent Deletion task in  \cref{fig:parent_set_removal}. A higher value is worse (i.e., it indicates that a method places higher confidence in an incorrect parent set than the true parent set).
\end{enumerate}

%
\begin{sidewaysfigure}[h]
        \centering
        \begin{subfigure}[b]{0.495\textwidth}
            \centering
            \myfigure{width=\textwidth}{./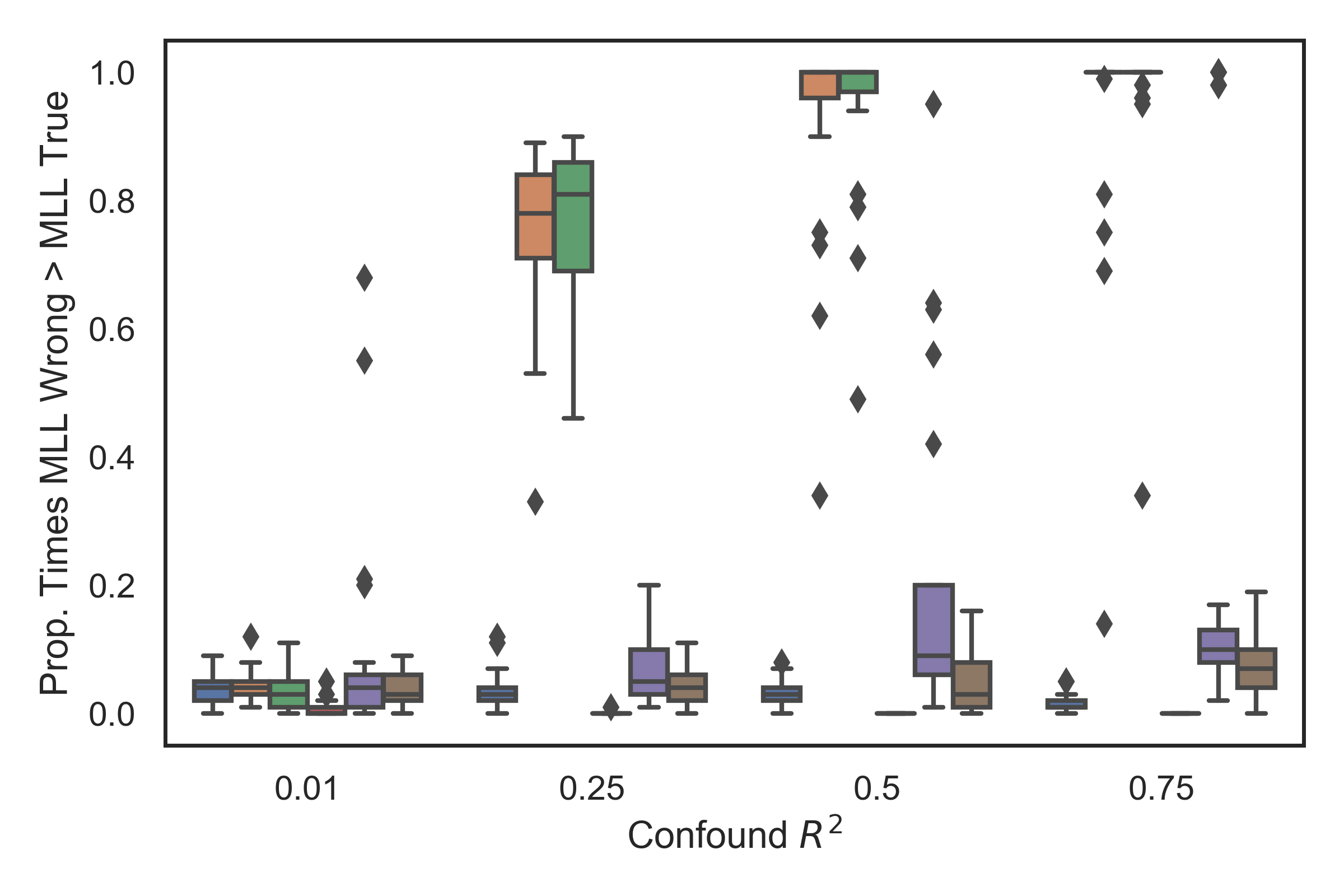}
            \caption[Network2]%
            {{\small Linear SEM}}    
            \label{fig:parent_set_linear_prop}
        \end{subfigure}
        \hfill
        \begin{subfigure}[b]{0.495\textwidth}  
            \centering 
            \myfigure{width=\textwidth}{./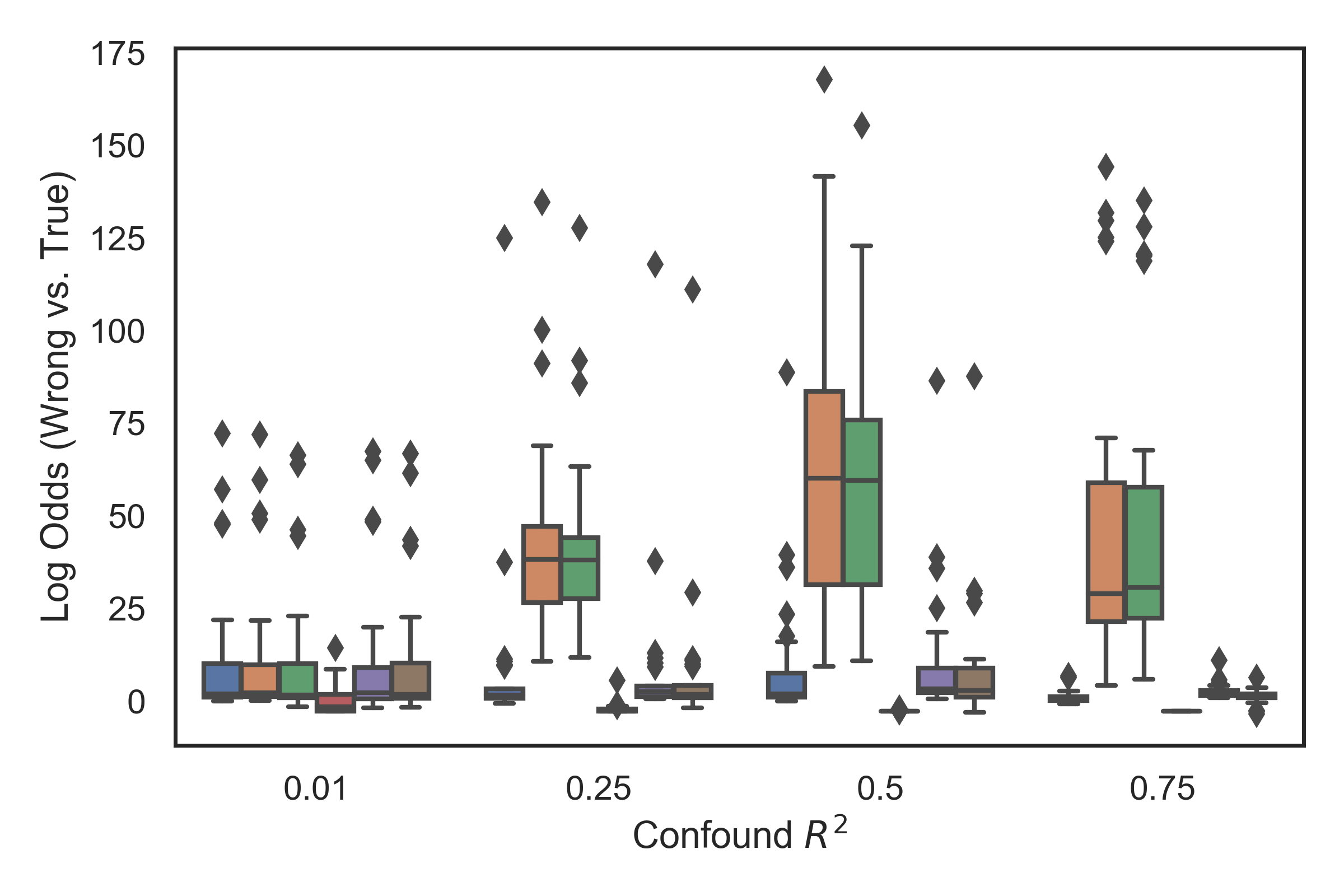}
            \caption[]%
            {{\small Linear SEM}}    
            \label{fig:parent_set_linear_mag}
        \end{subfigure}
        \vskip\baselineskip
        \begin{subfigure}[b]{0.495\textwidth}   
            \centering 
            \myfigure{width=\textwidth}{./figures/parent_set_non_linear_prop_metric.png}
            \caption[]%
            {{\small Non-Linear SEM}}    
            \label{fig:parent_set_non_linear_prop}
        \end{subfigure}
        \hfill
        \begin{subfigure}[b]{0.495\textwidth}   
            \centering 
            \myfigure{width=\textwidth}{./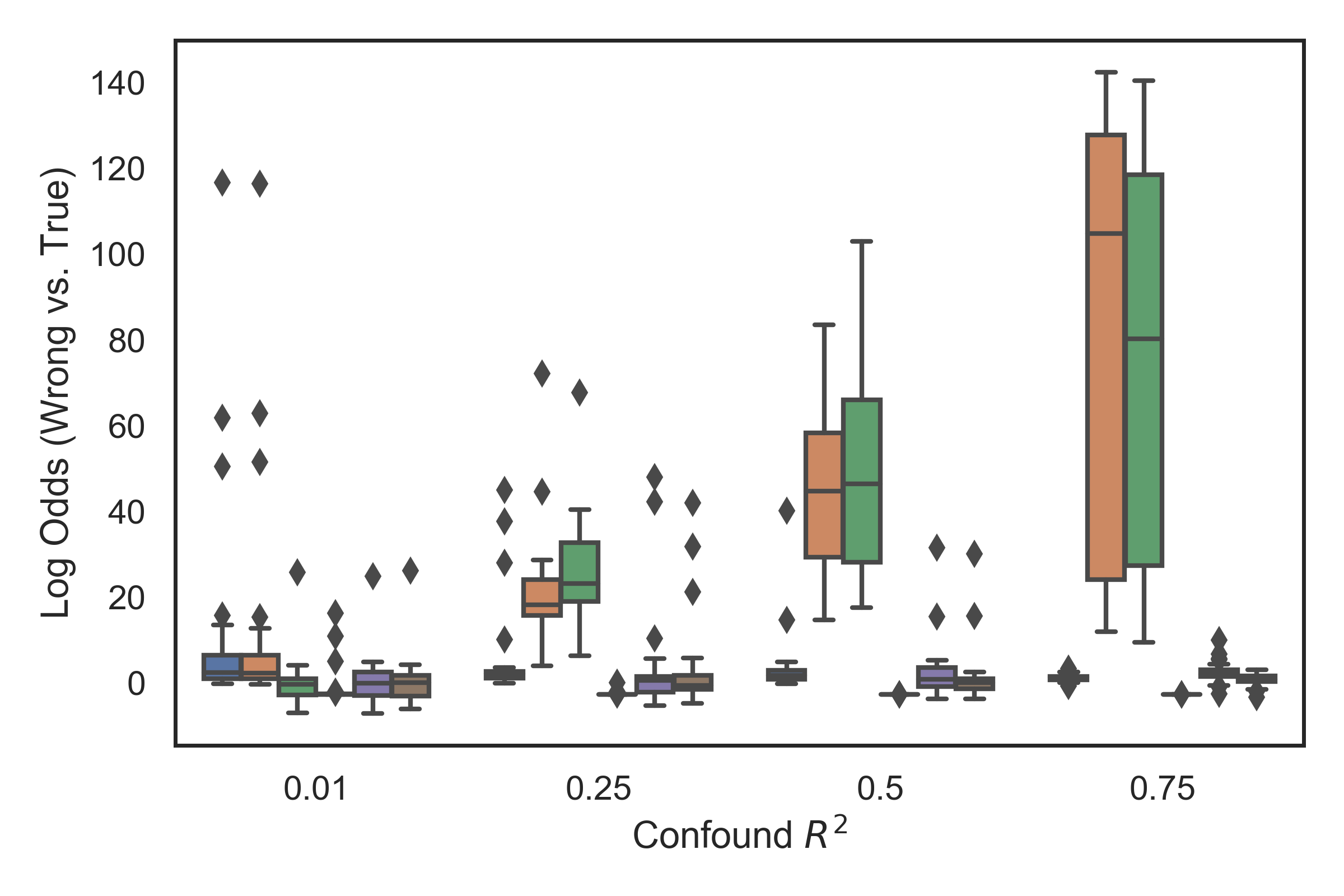}
            \caption[b]%
            {{\small Non-Linear SEM}}    
            \label{fig:parent_set_non_linear_mag}
        \end{subfigure}
        \caption{Results for the Wrong Parent Addition task.  25 total simulations per dataset configuration were performed. See \cref{sec:decam_perform} and \cref{sec:add_syn} for a description of the performance metrics.} \label{fig:parent_set_addition}
\end{sidewaysfigure}
%


\begin{sidewaysfigure}[h]
        \centering
        \begin{subfigure}[h]{0.495\textwidth}
            \centering
            \myfigure{width=\textwidth}{./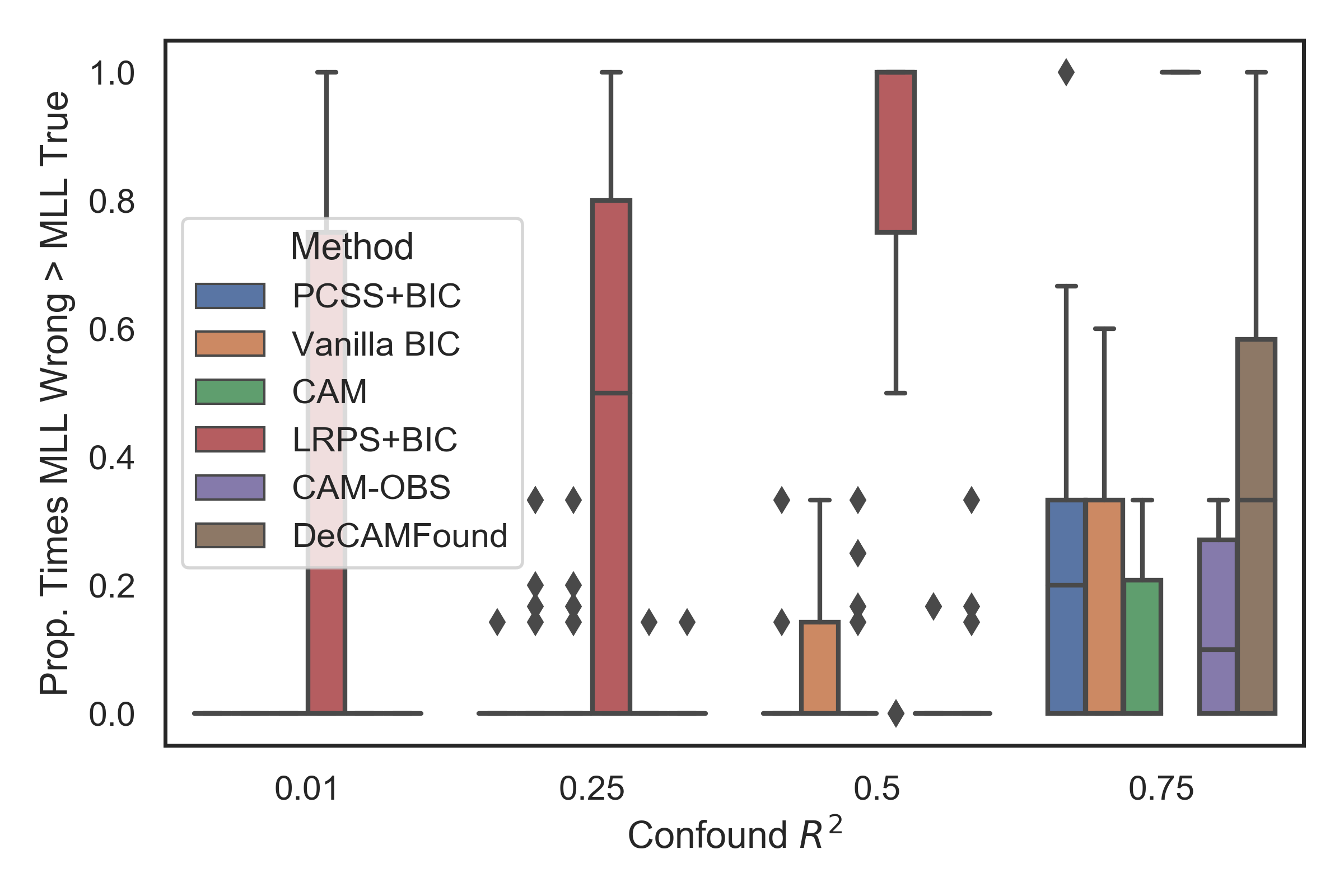}
            \caption[Network2]%
            {{\small Linear SEM}}    
            \label{fig:remove_parent_set_linear_prop}
        \end{subfigure}
        \hfill
        \begin{subfigure}[h]{0.495\textwidth}  
            \centering 
            \myfigure{width=\textwidth}{./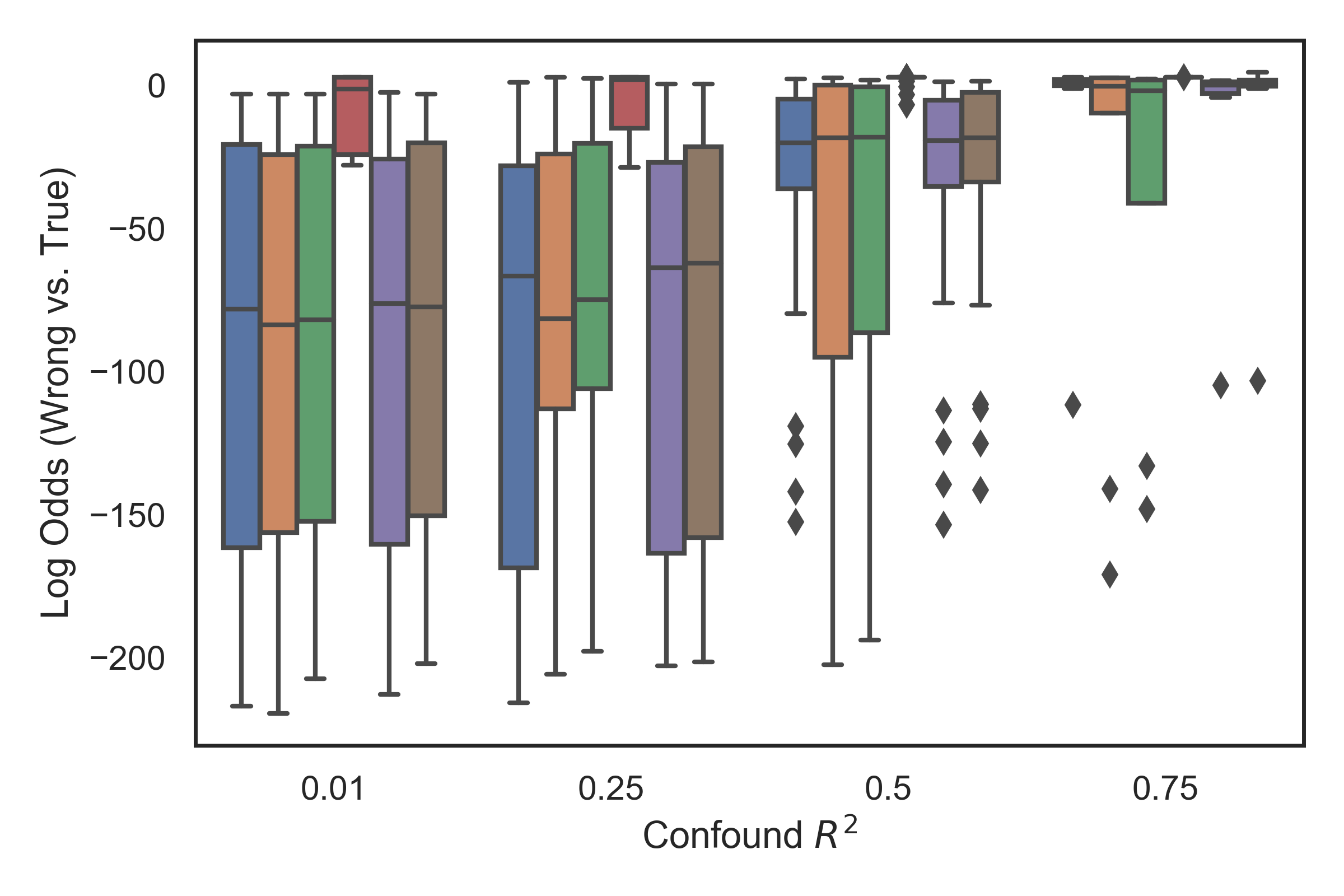}
            \caption[]%
            {{\small Linear SEM}}    
            \label{fig:remove_parent_set_linear_mag}
        \end{subfigure}
        \vskip\baselineskip
        \begin{subfigure}[h]{0.495\textwidth}   
            \centering 
            \myfigure{width=\textwidth}{./figures/remove_parent_set_non_linear_prop_metric.png}
            \caption[]%
            {{\small Non-Linear SEM}}    
            \label{fig:remove_parent_set_non_linear_prop}
        \end{subfigure}
        \hfill
        \begin{subfigure}[h]{0.495\textwidth}   
            \centering 
            \myfigure{width=\textwidth}{./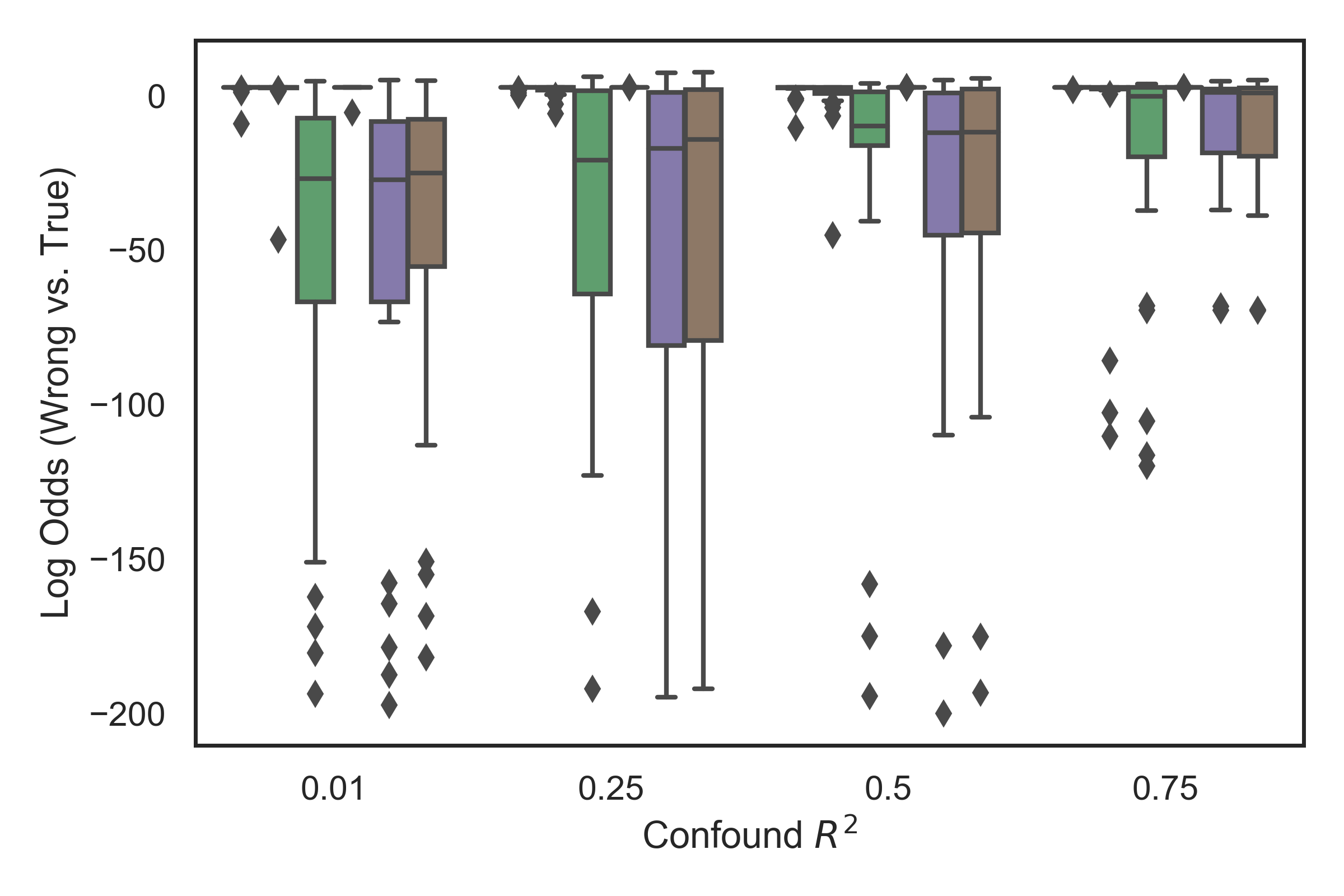}
            \caption[h]%
            {{\small Non-Linear SEM}}    
            \label{fig:remove_parent_set_non_linear_mag}
        \end{subfigure}
        \caption{Results for the Correct Parent Deletion task.  25 total simulations per dataset configuration were performed. See \cref{sec:decam_perform} and \cref{sec:add_syn} for a description of the performance metrics.} \label{fig:parent_set_removal}
\end{sidewaysfigure}

\noindent \textbf{Scoring Candidate DAG Results.} We score a candidate set of $T=100$ incorrect DAGs $\mathcal{G}$, built from randomly adding or deleting edges multiple times, starting from the true DAG (i.e., the natural extension of our two parent set evaluation tasks). Since scoring a single graph takes $O(pN^3)$ for the non-linear methods,  we consider fewer settings (i.e., fix the confounding variance to be equal to the signal variance), and only do 10 total simulations instead of 25. These results are shown in \cref{fig:dag_syn_results}. We report on two metrics:
\begin{enumerate}
	\item \textbf{Avg. Posterior SHD:} equals $\sum_{t=1}^T \text{SHD}(G_t, \Gtrue) \pr(G \mid X)$, where SHD denotes the structural hamming distance to the true graph, and $\pr(G \mid X)$ equals the posterior probability of a graph computed from  renormalizing the log marginal likelihood scores. Lower is better.
	\item \textbf{SHD Between MAP and True DAG:} reports the SHD from the true DAG for the 
	  highest scoring graph (i.e., the maximum a posteriori estimate). Lower is better.
\end{enumerate}
\cref{fig:dag_syn_results} shows that linear methods, even if they account for confounding, suffer in the non-linear setting. This advantage of modeling non-linearities agrees with the results, for example, in \citet{cam}. \cref{fig:dag_syn_results} also shows that, as in the Correct Parent Deletion Task, LRPS suffers even in the linear setting  because the induced undirected graph is very sparse (and hence often favors deleting true edges). Finally, \cref{fig:dag_syn_results} shows that both CAM-OBS (trivially) and our method are robust to both non-linearities and confounding. 

\begin{figure}[h]
        \centering
        \begin{subfigure}[b]{\textwidth}
            \centering
            \myfigure{width=\textwidth}{./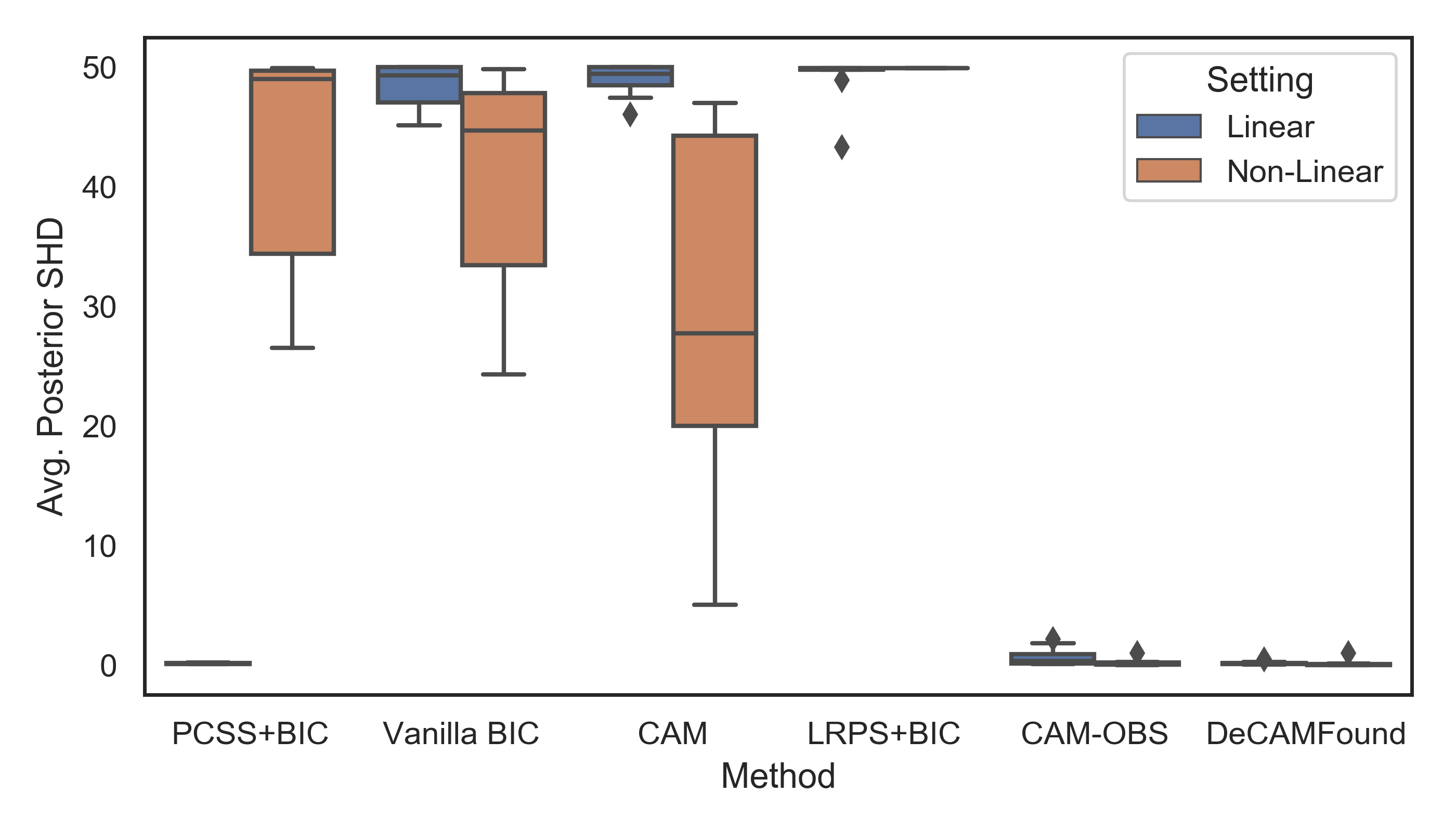}
            \caption[Network2]%
            {{\small SHD}}    
            \label{fig:dag_shd}
        \end{subfigure}
        \vfill
        \begin{subfigure}[b]{\textwidth}  
            \centering 
            \myfigure{width=\textwidth}{./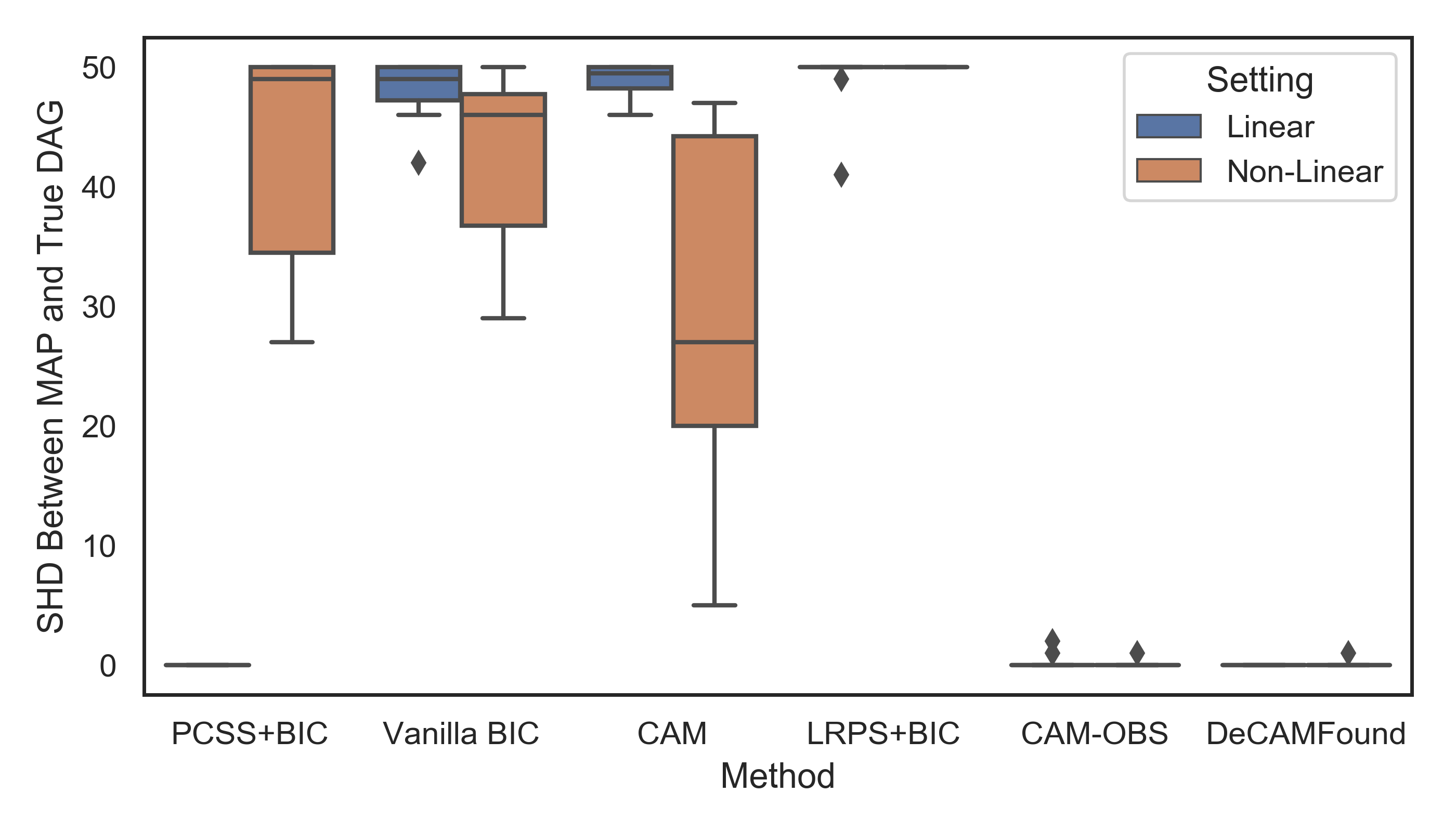}
            \caption[]%
            {{\small Average SHD}}    
            \label{fig:dag_mag}
        \end{subfigure}
     \caption{Results for the candidate DAG scoring task.  10 total simulations per dataset configuration (i.e., linear / non-linear) were performed.}  \label{fig:dag_syn_results}
\end{figure}

\subsection{Ovarian Cancer Dataset}

\noindent \textbf{Scree Plot.} Analyzing the spectrum of the data matrix consisting of the 486 observed genes in \cref{fig:scree_plot} shows that there are about 7 spiked eigenvalues. \\

\noindent \textbf{Pervasive TF Gene Correlations.} 7/15 TFs have edges with more than 75 genes according to NetBox (i.e., these TFs might play the role of pervasive confounders). We summarize the absolute value of the correlations between these 7 TFs and the 486 genes in \cref{fig:tf_corrs}. \\

\noindent \textbf{Removing the Effect of a Latent TF via PCSS.} See \cref{fig:real_deconfound_steps}.


%
\begin{figure}
            \centering 
            \myfigure{width=0.85\textwidth}{./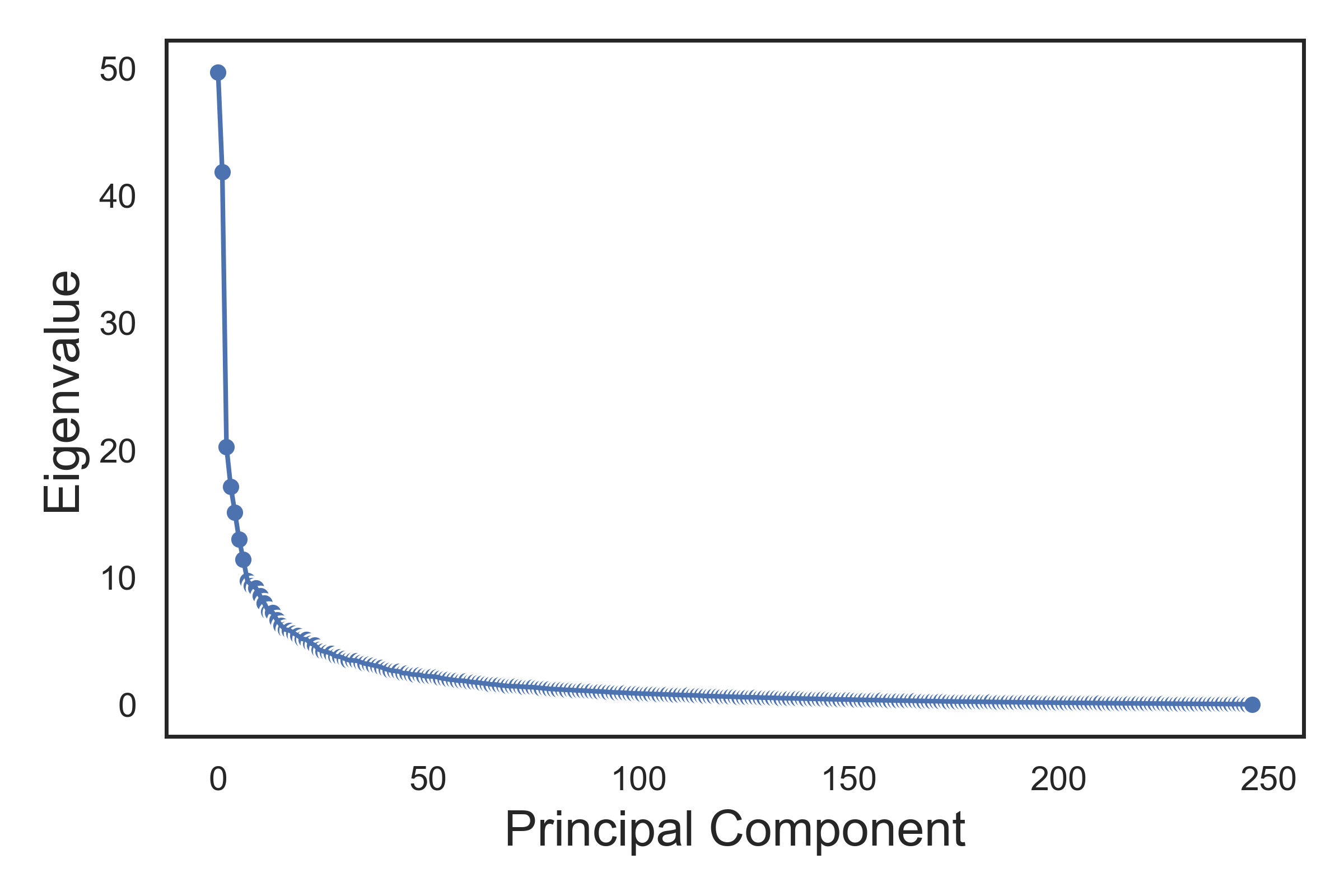}
        	\caption{PCA scree plot when the input data matrix consists of the 486 observed genes. Based on this scree plot, we select $K=7$ components for the spectral methods.} \label{fig:scree_plot}
\end{figure}

\begin{figure}[h]
\centering
\myfigure{width=0.85\textwidth}{./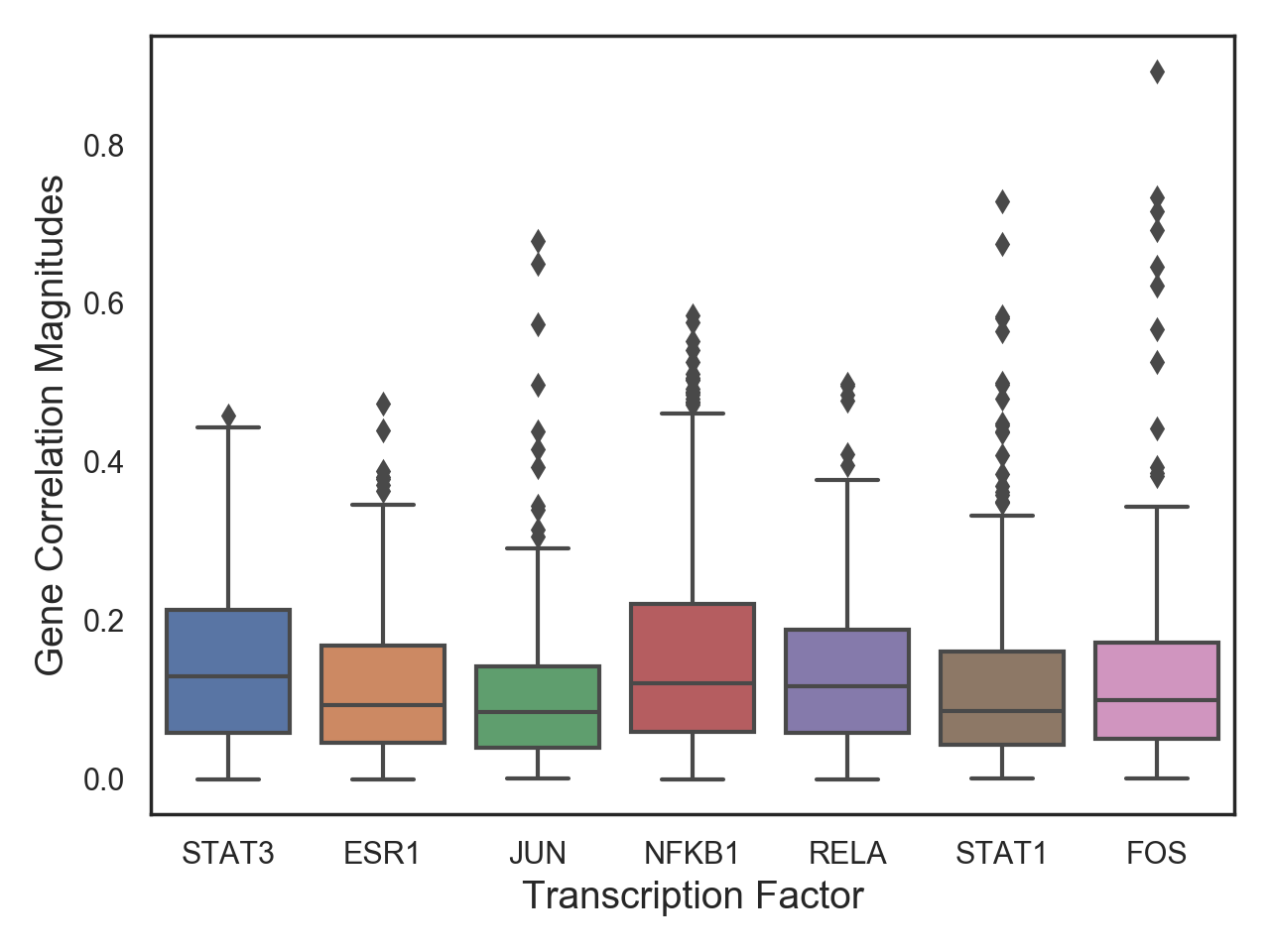} 
\caption{Absolute value of correlations between TFs and the 486 observed genes. Only TFs with at least 75 edges are shown.} \label{fig:tf_corrs}
\end{figure}


\begin{figure}
        \centering
        \begin{subfigure}[b]{0.9\textwidth}
            \centering
            \myfigure{width=0.9\textwidth}{./figures/ovarian/NFKB1_BIRC3.png}
        \end{subfigure}
        \hfill
        \begin{subfigure}[b]{0.9\textwidth}  
            \centering 
            \myfigure{width=0.9\textwidth}{./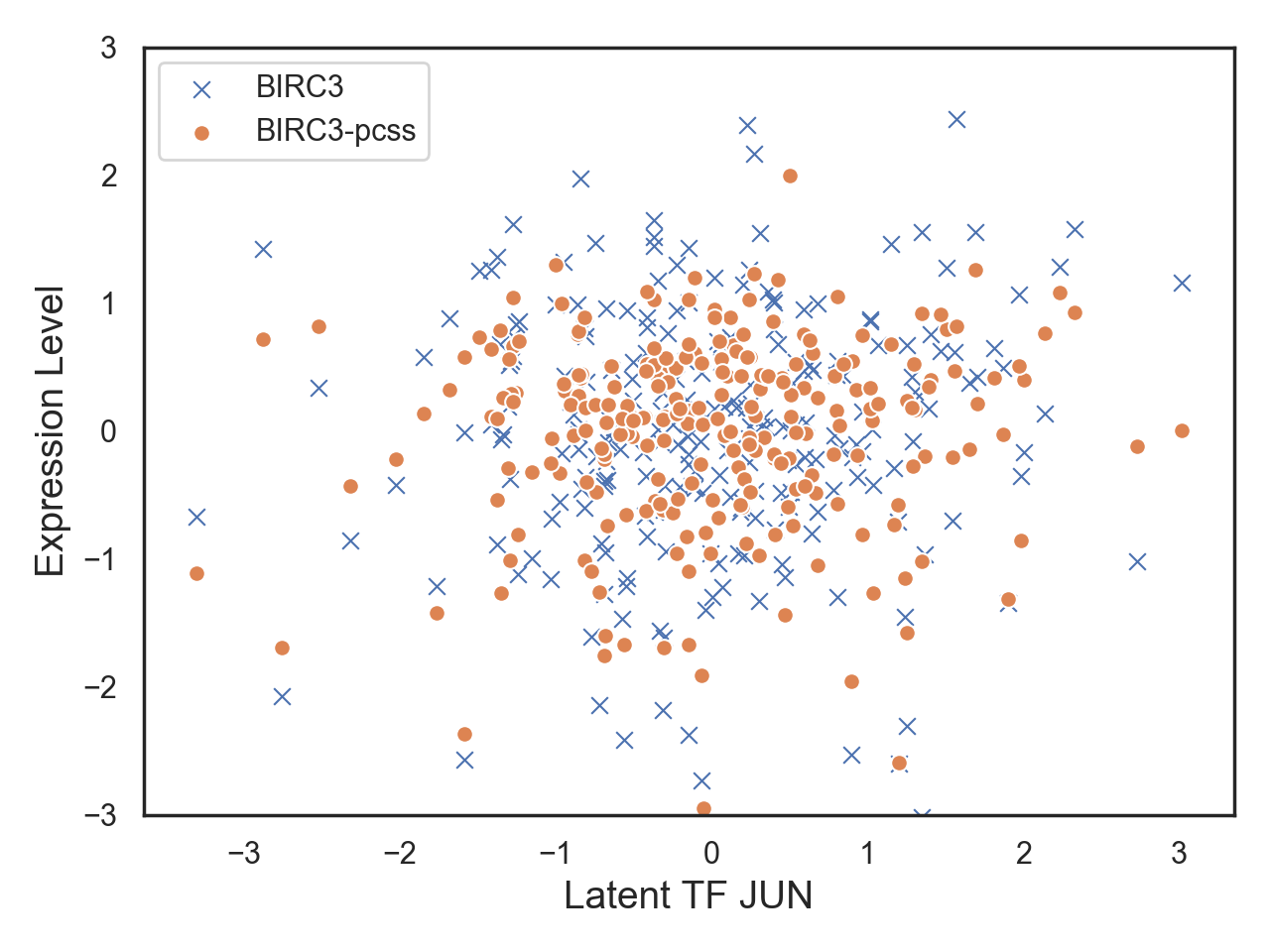}
        \end{subfigure}
        	\caption{Out of the 15 latent TFs, BIRC3 has the highest absolute correlation with NFKB1 and the smallest absolue correlation with JUN. BIRC3-pcss refers to the total latent confounding variation estimated for that gene via PCSS. BIRC3 refers to the actual observed values of the gene. BIRC3-pcss is correlated with NFKB1 (which is correlated with BIRC3) and not correlated with JUN.} \label{fig:gene_vs_pcss}
\end{figure}

\begin{figure}[h]
\centering
\myfigure{width=\linewidth}{./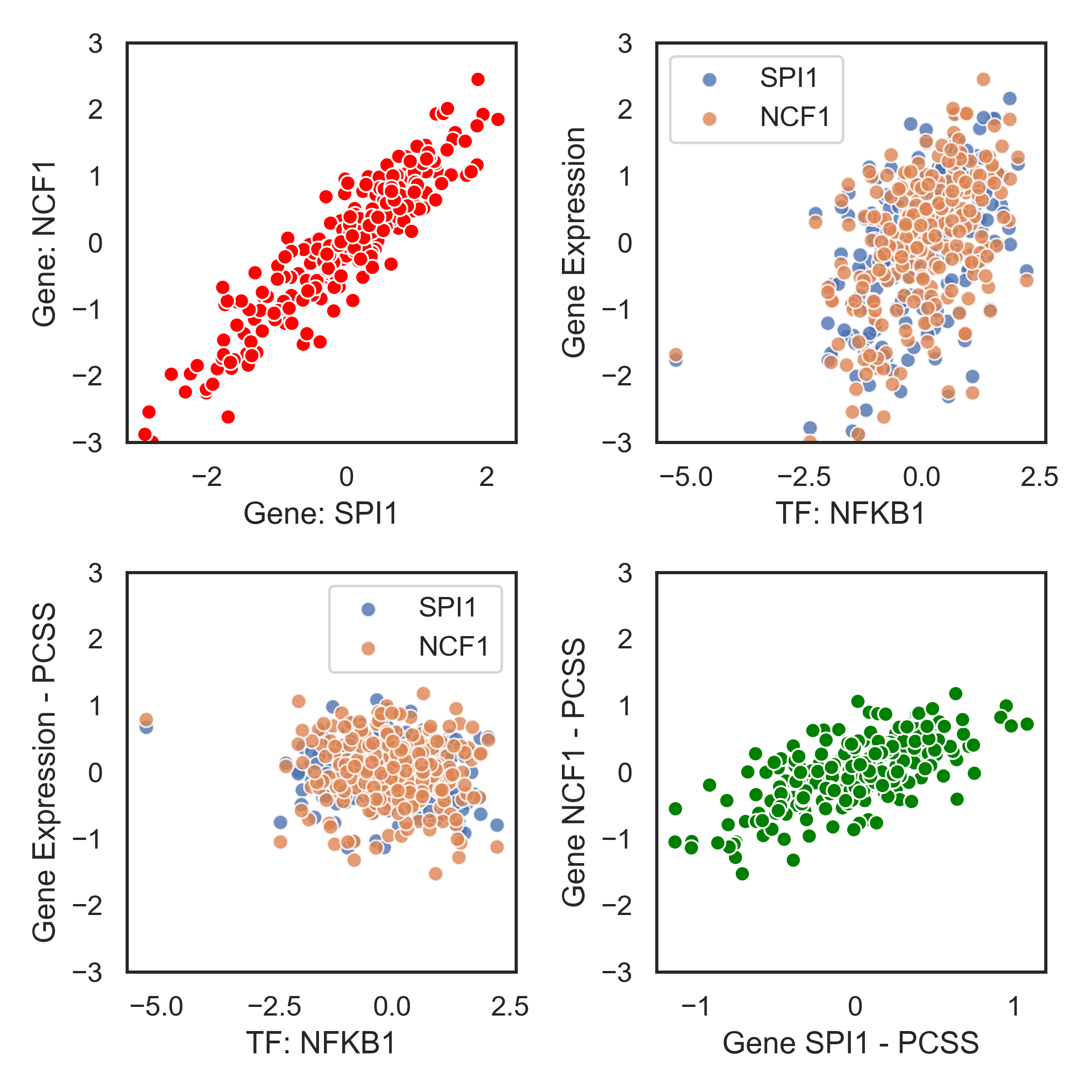}
\caption{Top left: scatterplot of two genes that are conditionally independent given each parents' gene neighborhood sets and TFs but dependent when removing the TFs. Top right: scatter plot of each gene for the TF that has the highest correlation with both genes. Bottom left: correlation with the transcription factor after removing the estimated confounder sufficient statistics from each gene. Bottom right: weaker correlation after removing the  confounder sufficient statistics from each gene. Since both genes are still marginally dependent given the TFs without conditioning on the parent sets, the genes are still correlated in the bottom right figure.}  \label{fig:real_deconfound_steps}
\end{figure}

\end{document}